\documentclass[12pt]{article}
\usepackage[right=1in,left=1in,top=1in,bottom=1in]{geometry}
\usepackage{hyperref}
\hypersetup{colorlinks, citecolor=blue, filecolor=blue, linkcolor=blue, urlcolor=blue}
\usepackage{graphicx}
\usepackage{url}
\usepackage[round]{natbib}
\usepackage{amsmath,amsthm} 
\usepackage{engord}
\usepackage{float, bbm}
\usepackage{subfig}
\usepackage{tikz}
\usetikzlibrary{arrows.meta, positioning}
\usepackage{pdflscape}
\usepackage{booktabs}
\usepackage{pgfplots, verbatim}
\pgfplotsset{compat=1.14}
\pgfplotsset{every axis label/.append style={font=\tiny}}
\usepackage[labelsep=period]{caption} %% This switches "Table 1: Title" to "Table 1. Title"

\usepackage{amssymb} %% Necessary, just for the \checkmark command  in tables.
\usepackage{multirow, authblk} %% Necessary if we are doing tables in LaTeX

\usepackage{xr}

\usepackage{setspace}
\usepackage{sectsty}
\usepackage{bbm}
\usepackage{natbib}

\usepackage{hyperref}
\hypersetup{colorlinks, linkcolor=blue, urlcolor=blue, citecolor=blue}

\title{Pre-auction strategic communication}

\author{Eric Yan\thanks{Massachusetts Institute of Technology, eyan@mit.edu.} \footnote{I thank Drew Fudenberg, Ellen Muir, and Alex Wolitzky for generous guidance and suggestions. For helpful comments and discussion I thank Ian Ball, Abhijit Banerjee, Alessandro Bonatti, Lucas Barros, Yifan Dai, Glenn Ellison, Bob Gibbons, Andrew Koh, Elliot Lipnowski, Daniel Luo, Eric Maskin, Anna Merotto, Jan Morgan, Lea Nagel, Ziwen Sun, Otávio Tecchio, Jean Tirole, and attendees of MIT Theory Lunch and 14.192, Sloan Organizational Lunch, and Harvard Econ 2049. Refine.ink was used to proofread this paper.}}
\date{\today}

\newtheorem{assumption}{Assumption}
\newtheorem{definition}{Definition}
\newtheorem{proposition}{Proposition}
\newtheorem{corollary}{Corollary}

\newtheorem{remark}{Remark}
\newtheorem{lemma}{Lemma}

\newtheorem{theorem}{Theorem}

\def\Rev{\mathrm{Rev}}
\def\E{\mathbb{E}}
\def\supp{\mathrm{supp}}
\begin{document}

\maketitle

\begin{abstract}
    \noindent High-stakes auctions are often preceded by nonbinding communication between bidders and the seller. I study a two-period model in which bidders privately send cheap-talk messages to the seller about their valuations before the seller decides whether to run a mechanism or take a forfeitable outside option. The seller has commitment within the second-period mechanism but not over how she uses first-period communication. In monotone bidder-symmetric equilibria, an unrestricted seller runs at most one mechanism on the equilibrium path: a second-price auction with a common reserve, and only when all bidders report values above a common threshold. Thus discriminatory auctions cannot arise in equilibrium, despite asymmetric on-path posteriors. With two bidders, the seller is better off, and can sometimes attain her full-commitment payoff, if she can commit ex-ante to second-price auctions with a common reserve.
\end{abstract}

\clearpage
\section{Introduction}
Standard auction theory takes the seller's decision to sell and the information used to set allocation rules as given. In practice,  many high-stakes auctions are preceded by completely nonbinding bidder communication — such as indications of interest or requests for comment — that shapes whether the auction is run at all and what reserve prices are chosen. For example, in corporate mergers and acquisitions, a target firm typically receives indications of interest from potential acquirers before deciding whether to initiate a formal sale process.\footnote{\cite{ye_indicative_2007} and \cite{quint_theory_2018} model aspects of these two-stage processes.} Regulations have formalized similar practices in other contexts as well: Since 2019 in the U.S., the SEC has allowed all companies to engage in ``testing-the-waters" communications with potential investors to evaluate market interest before deciding whether or not to run an IPO, describing such communications as a ``cost-effective means for evaluating market interest\dots before incurring the costs associated with such an offering."\footnote{The original law allowing such practices was the Jumpstart Our Business Startups Act, Pub. L. No. 112-106, 126 Stat. 306 (2012). SEC Rule 163B extended the law to all companies. See the SEC \href{https://www.sec.gov/resources-small-businesses/small-business-compliance-guides/solicitations-interest-prior-registered-public-offering}{website} for a description.} Additionally, prior to running spectrum auctions, the FCC seeks public comment, expressly stating that such communication will be used to determine spectrum clearing targets and opening bid prices.\footnote{See the Federal Register Vol. 80, No. 19 for a \href{https://www.govinfo.gov/content/pkg/FR-2015-01-29/pdf/2015-01607.pdf}{sample} request. While the full spectrum auction process is quite complicated, the key seller considerations in this paper — whether to put an object up for sale and reserve pricing — seem to be relevant in this context as well.} The widespread use of these practices suggests that sellers in these markets find such communications informative. This paper formulates a simple model of pre-auction communication, which is taken as part of an extensive-form game where the seller and bidders do not have commitment over how any information transmitted is used or disclosed, and analyzes what kind of information can be transmitted in equilibrium.

Specifically, this paper examines a two-period model in which bidders (referred to by he) send private cheap talk messages to the seller (she) about their valuations, and the seller decides in the second period whether to run a mechanism or take an outside option that disappears if she chooses to run the auction. This outside option can represent, for example, a negotiated sale of a company to a known buyer or broker.\footnote{This formulation is equivalent to one in which the seller needs to pay a sunk cost in order to run the sale process, which in some applications is also a natural interpretation. In particular, takeover auctions in mergers and acquisitions often require the costly investment banking and lawyer fees as part of the due diligence process.} The seller has \emph{limited commitment}: the seller has commitment within the second period if she chooses to run a mechanism but does not have between-period commitment, so she can use any information she collects in the first period to determine whether to run a mechanism in the second period and what the allocation rule should be. As such, the bidders trade off providing the seller enough information to induce a sale with retaining their information rents in the eventual mechanism the seller chooses.

This paper studies bidder-symmetric (henceforth symmetric) perfect Bayesian equilibria of this model under two regimes. In the first regime, the seller can choose any mechanism in the second period, and hence will choose an allocation rule according to \cite{myerson_optimal_1981}. A well-known feature of the optimal mechanism is that the seller will set \emph{individualized} reserve prices when she faces asymmetric posteriors. In the second regime, the seller must choose a second-price auction with a single, \emph{common} reserve price whenever she runs a mechanism. The first main result (Theorem \ref{myerson_characterization}) shows that, restricting attention to pure monotone bidder strategies, any symmetric equilibrium under the unrestricted regime is outcome-equivalent to a threshold equilibrium where each bidder simply reports whether his value exceeds a threshold, and the seller runs the auction if and only if all bidders report being above it. It follows that the seller only runs one mechanism on the equilibrium path: a second-price auction with a common reserve $r$ weakly above the seller's monopoly price under the prior. This holds even though on-path messages may induce asymmetric posteriors across bidders. These threshold equilibria are also shown to satisfy a D1-style restriction on seller beliefs off-path, ensuring that informative communication in this multi-sender cheap talk model does not depend on unreasonable behavior off-path.

At first glance, one might expect at least some variation in the mechanisms the seller runs on-path. High-value bidders might be willing to communicate more precise information, and therefore face higher expected payments, to ensure that the seller runs an auction, while lower or intermediate types might prefer noisier communication that makes the auction less likely but lowers the expected payment conditional on sale. Relatedly, one might expect the seller to want to run the auction when a subset of bidders reports being above the threshold. If the seller actually ran the auction in any of these cases, however, bidders would face an information-provision \emph{freeriding} problem. Each bidder would like the seller to receive enough favorable information to run the auction, but each bidder would rather this information come from another bidder. The basic reason is that the \cite{myerson_optimal_1981} auction favors weak bidders to extract more rents from strong bidders when the seller has asymmetric posteriors. A bidder who reveals that he is strong makes the seller more willing to run the auction, but also induces the seller to choose a mechanism that extracts more surplus from him. Therefore, when the seller is unrestricted, if a bidder is providing information that is favorable to the seller running an auction, that information must be pivotal in the seller's decision to run the auction as well. In particular, while the seller may be tempted to design mechanisms tailored to each bidder when faced with asymmetric distributions, doing so necessarily comes at the expense of bidder types who have a high value but are close to the individualized reserve they face after sending their cheap-talk message. These bidder types would rather deviate and obtain the lower reserve if that is a possible seller action in equilibrium.

A corollary of this result is that the seller's payoff in any equilibrium of this model is bounded strictly below her \emph{full-commitment} payoff, defined as her payoff if she has cross-period commitment and no restrictions on mechanisms. If the seller has full commitment, the problem is effectively static and reduces to that of \cite{myerson_optimal_1981}: Bidders report their values truthfully, and the seller commits to running the auction if and only if at least one bidder's virtual type (calculated according to the ex-ante prior) is at least the value of the seller's outside option.

On the other hand, depending on the value of the outside option, the seller can sometimes attain her commitment payoff under the second regime where she is restricted to selecting a single reserve price. The restricted regime reflects binding constraints that sellers can face in practice. Government sellers are often explicitly required to treat bidders symmetrically: For example, EU public procurement directives mandate equal treatment of tenderers.\footnote{See, e.g., Directive 2014/24/EU of the European Parliament, Article 18(1): ``Contracting authorities shall treat economic operators equally and without discrimination.''} Additionally, in certain corporate takeovers, target firms have faced lawsuits when found to have accepted a bid that is below the highest value.\footnote{Under Delaware law, the duties articulated in \emph{Revlon, Inc.\ v.\ MacAndrews \& Forbes Holdings, Inc.}, 506 A.2d 173 (Del.\ 1986), require in certain circumstances for the board to seek ``the best price" for shareholders.} Intuitively, if the seller has ex-ante commitment to running a nondiscriminatory auction format, bidders are much more easily incentivized to provide precise information that guarantees the seller runs an auction, allowing higher thresholds to be sustained in equilibrium. This comparison provides an optimality foundation for the use of simple and nondiscriminatory auction formats in practice — such as the second-price auction with a single reserve — in settings where the seller presumably has asymmetric information about bidders, if this asymmetry arises due to communications received before the mechanism has been announced.

From a technical perspective, the game analyzed is most similar to cheap talk models with multiple senders \citep{battaglini_02, takahashi_multi_08}. However, it differs from most of the literature on strategic communication in two important ways. Preferences are quasilinear — putting the payoff environment outside the analysis of \cite{crawford_strategic_1982} even if there was just one bidder — and not state-independent, as analyzed by \cite{lipnowski_cheap_2020}. In particular, even though all bidder types weakly prefer lower reserve prices, each bidder is indifferent between \emph{all} mechanisms that give him zero utility, and this set is different for each bidder type. It also makes each bidder (sender) indifferent between a large set of seller (receiver) best-responses, exacerbating the already severe multiplicity-of-equilibria problem faced by cheap talk models. As such, this paper adopts several refinements used in the literature to narrow the set of equilibria considered.

\subsection{Related Literature}

This paper contributes to the extensive literature on strategic information transmission initiated by \cite{crawford_strategic_1982}. Specifically, it relates to models of cheap talk with multiple senders, such as \cite{battaglini_02} and \cite{takahashi_multi_08}. The role of pre-play communication has been analyzed in buyer-seller interactions by \cite{farrell_cheap_1989} in the context of bargaining, and by \cite{matthews_pre-play_1989} in the context of bilateral trade. Both papers find that pre-play communication can enlarge the set of equilibria in these models.

The paper also adds to the literature on mechanism design with limited commitment. In principal-agent settings, \cite{Bester_strausz_01}, \cite{skreta_sequentially_2006}, and \cite{doval_mechanism_2022} emphasize the ``ratchet effect," where an agent's revelation of high types today leads to more extractive contracts in future periods. In the context of auctions, \cite{skreta_2015} and \cite{liu_auctions_2019} study how a seller's inability to commit to not lowering her reserve price over time decreases her payoff in equilibrium. This paper focuses on how the initial decision to run a mechanism is shaped by limited commitment in the context of pre-auction communication. Additionally, while previous literature has documented that a seller with limited commitment can improve her payoff by restricting her choice of mechanisms, existing papers focus on settings with a single agent and argue that such restrictions limit the amount of information the principal can learn about the agent's type. This paper documents that nondiscrimination, a natural source of additional ex-ante commitment in multi-agent settings, can also improve the seller's payoff. Finally, the seller's use of communication in the first period to determine allocation rules in the second also makes the model similar to \cite{Courty_li_2000}, but the seller in their model has cross-period commitment and the buyer gains information about his value between periods.

The model in this paper is conceptually related to the literature on auctions with endogenous bidder entry, including \cite{mcafee_auctions_1987} and \cite{levin_equilibrium_1994}. In those models, bidders decide whether to pay a sunk cost to learn their valuation and enter the auction, leading to an equilibrium entry threshold where the expected profit from participation equals the entry cost. This paper can be thought of as a model of auctions with endogenous seller entry.

Some papers have explicitly modeled indicative bidding in financial processes, such as the two-stage auctions common in corporate takeovers \citep{ye_indicative_2007, quint_theory_2018}.\footnote{Motivated by preplay communication in stock market exchanges, \cite{biais_2014} experimentally study the design of preopening mechanisms prior to call auctions.} These models take the stance that bidders must pay a cost to enter an auction, such as due diligence costs, and therefore may not want to enter if competition is too fierce. \cite{ye_indicative_2007} shows that there does not exist a symmetric equilibrium with increasing strategies in a model where the bidders make completely nonbinding indicative bids in the first round, and the seller selects the $n$ highest to participate in the auction and pay the entry fee. \cite{quint_theory_2018} construct symmetric equilibria by coarsening the message space of bidders and endowing the seller (receiver) with some commitment over how bidders get selected. This paper focuses on the seller's entry decision and limited commitment as a force that disciplines truth-telling in the communication stage.

The remainder of the paper is organized as follows. Section \ref{model} presents the baseline model and equilibrium concepts, including restrictions on off-path beliefs for the seller. Symmetric equilibria of the model with an unrestricted seller are characterized in section \ref{unrestricted}. Section \ref{restricted} shows that the seller does better by restricting admissible mechanisms to those with a common reserve and solves for optimal equilibria and comparative statics. Robustness of the equilibria in the baseline model is discussed in section \ref{extensions}.
\section{Model}\label{model}

There are $n\geq 2$ risk-neutral bidders and a single seller, indexed by $i\in I:=\{1,\dots,n\}$. Each bidder $i$ has a private valuation $v_i$ drawn independently from an atomless, commonly known CDF $F\in C^1([\underline v,\overline v])$ with PDF $f$, fully supported on $\Theta_i=[\underline v,\overline v]\subseteq \mathbb{R}_+$, and let $\Theta=\prod_{i=1}^n \Theta_i$. The seller has an indivisible good and an outside option with value $c\in(\underline v,\overline v)$. For expositional simplicity, I assume the distribution $F$ is regular, meaning that its virtual type function $\varphi$ is increasing, and admits a unique monopoly price $p_F$. If the seller runs an auction, the outside option is forfeited.

 At $t=1$, each bidder $i$ observes $v_i$ and sends $m_i\in M_i$ privately according to a measurable pure monotone strategy $\sigma_i:[\underline v,\overline v]\to M_i$, where $M_i$ is a totally ordered set, such as $M_i=[\underline v,\overline v]$.\footnote{The analysis also goes through if mixed strategies are allowed, and one assumes $v'>v\Rightarrow \inf\sigma_i(v')\geq \sup\sigma_i(v)$. The message space and strategy restrictions are discussed in subsection \ref{subsec:monotone}.} After receiving message $m_i$, the seller updates her beliefs and forms a posterior $F_{m_i}$ about bidder $i$'s valuation. Given $\sigma_i$, for every $m_i$ sent on path, the seller forms her posterior $F_{m_i}$ via Bayes' rule.\footnote{See subsection \ref{subsec:solution} for the formal solution concept.}

Given an on-path message profile $m=(m_1,\dots,m_n)$, the seller forms updated beliefs $(F_{m_1},\dots,F_{m_n})$. At $t=2$, the seller chooses an action in $\{\varnothing\}\cup\mathcal G$, where $\varnothing$ denotes taking the outside option, yielding payoff $c$ to the seller and $0$ to bidders, and $\mathcal G$ is the set of admissible mechanisms. A mechanism $\Gamma\in\mathcal G$ is defined as a pair $(x,t)$, where for each $i$, the function $x_i:\supp(F_{m_i})\to[0,1]$ is the allocation rule for bidder $i$, subject to the usual feasibility constraints, and $t_i:\supp(F_{m_i})\to\mathbb R$ is the transfer rule.\footnote{See Assumption \ref{ass:posterior_support} for a discussion of the domains of $x_i$ and $t_i.$}

I analyze the game under two regimes that specify the set of admissible mechanisms $\mathcal{G}$:

\begin{enumerate}
    \item \textbf{Unrestricted:} The seller can choose any allocation rule that satisfies dominant-strategy incentive-compatibility (DSIC) and interim individual rationality (IIR) with respect to her posterior beliefs. Under this regime, the seller's problem reduces to that of \cite{myerson_optimal_1981}.

    \item \textbf{Restricted:} The seller must choose a second-price auction with a common (potentially randomized) reserve price. This is the set of $(x, t)$ for which there exists a CDF $H$ supported inside $[\underline v,\overline v]$ over reserve prices, such that for all $i$ and types $v_i$:
        \[x_i(v_i)=\mathbb{E}_H\left[\Pr\left(v_i>\max_{j\neq i}v_j,\ v_i\geq r\right)\right].\]
\end{enumerate}

Bidders have quasi-linear utility functions $u_i(v)=x_i(v)v-t_i(v)$, and the seller's utility is the sum of payments if the mechanism is run, and $c$ otherwise. Bidders are assumed to play the seller-optimal equilibrium of whatever mechanism the seller chooses at $t=2$. The DSIC requirement does not restrict the space of allocation rules and payments that the seller can choose from in the second period relative to Bayesian incentive-compatibility \citep{Manelli_vincent_2010, gershkov_equivalence_2013}. Together with the assumption that bidders play the seller-optimal equilibrium at $t=2$, these restrictions control the analysis of ``double deviations," ensuring that a bidder who deviates at $t=1$ cannot compound the deviation by misreporting at $t=2$. Therefore, the equilibrium analysis can focus purely on first-period communication incentives.\footnote{The DSIC assumption can be dispensed with if one instead clarifies that each bidder observes which bidders have advanced to the auction and that bidders play the seller-optimal Bayesian Nash equilibrium of the chosen mechanism. Observing the set of participants allows each bidder to gain enough information to always play the seller-optimal equilibrium strategy. Under a DSIC implementation, however, bidders do not need to observe who else has advanced to the auction, which is natural in some applications.} Plainly, since the game technically requires bidders to report messages to the seller at both $t=1$ and $t=2$, these assumptions are used to reduce the model to a more standard cheap talk model where bidders (senders) send messages about their valuations, and the seller (receiver) responds by choosing an action that determines payoffs.

Completing the description of the game requires addressing an additional related complication: after deviating in the communication stage, a bidder may have a value outside of the support of the seller's beliefs. If the seller were to form beliefs about this bidder based on off-path reports at the mechanism stage, many such beliefs would be incompatible with equilibrium or affect communication incentives in an arbitrary way. Rather than allowing off-support reports at $t=2$ to act as an additional communication channel between the bidders and the seller, I make the following assumption.

\begin{assumption}
\label{ass:posterior_support}
If the seller runs a mechanism after receiving $m=(m_1,\dots,m_n)$, she defines bidder $i$'s allocation and payment rules $x_i(\cdot\mid m_i,m_{-i})$ and $t_i(\cdot\mid m_i,m_{-i})$ on $\supp(F_{m_i})$. If the seller runs a mechanism after the message profile $m,$ a bidder $i$ with value $v_i$ gets utility
\begin{equation*}
    u_i(v_i\mid m_i,m_{-i})
    =
    \sup_{\hat v\in\supp(F_{m_i})}
    \left\{
        v_i x_i(\hat v\mid m_i,m_{-i})
        -
        t_i(\hat v\mid m_i,m_{-i})
    \right\}
\end{equation*}
from the mechanism.
\end{assumption}

Assumption \ref{ass:posterior_support} should be interpreted as a restriction on how the seller treats a bidder after his first-period message, not as a literal restriction that a bidder is physically unable to report certain values at the mechanism stage. After observing $m_i$, the seller designs a continuation menu for the values she regards as possible under $F_{m_i}$. If a bidder deviated at $t=1$ and his true value lies outside this support, his payoff is the supremum payoff attainable from that menu.

The specific implementation given by the assumption is not essential. Note that Assumption \ref{ass:posterior_support} implicitly defines an extension of the allocation and transfer rules to the full type space $[\underline v,\overline v]$. Equivalently, one could assume that after each message profile the seller directly chooses these extended rules. The key point is that the first-period message determines the seller's treatment of the bidder in the continuation mechanism, while an out-of-support report at $t=2$ does not become an additional communication channel that induces new beliefs or a new mechanism.

I will write $x_i(\cdot\mid m_i,m_{-i})$ to implicitly denote the extension of the allocation rule, defined for bidder $i$ after the seller observes $(m_i,m_{-i})$, to the full type space $[\underline v,\overline v]$, where $m_{-i}$ denotes the profile of messages sent by bidders other than $i$. This formulation allows for the standard envelope representation of utility across the full type space $[\underline v,\overline v]$. Additionally, define
\[
\overline{x}_i(v_i\mid m_i):=\E[x_i(v_i\mid m_i,m_{-i})],
\]
which is bidder $i$'s expected allocation rule when he sends message $m_i$, averaged over the other bidders' messages $m_{-i}$ according to $\sigma_{-i}$. Define
\[
r_i(m_i):=\inf\{v\in[\underline v,\overline v]:\overline{x}_i(v\mid m_i)>0\}
\]
to be bidder $i$'s individualized reserve price. Finally, let
\[
U_i(v_i\mid m_i):=\E[u_i(v_i\mid m_i,m_{-i})],
\]
averaged over the other bidders' messages $m_{-i}$ according to $\sigma_{-i}$.
 
\subsection{Motivating example}\label{example}

Before continuing with the formal analysis, the following example serves to build some intuition for the model.

Suppose bidder valuations are distributed $v_i \sim \mathrm{Unif}[1/2, 2]$, and the seller's outside option is valued at $c=1$. The virtual type function for all bidders under the prior is $\varphi(v) = 2v - 2$. Under full commitment, the seller sets a reserve $r^*$ such that $\varphi(r^*) = 1$, so $r^* = \frac 32$, and runs the auction as long as one bidder reports being above $r^*.$

Consider a candidate symmetric equilibrium where bidders use a binary strategy with threshold $1/2<\tau<2$:
\begin{equation}\label{ex_strat}
\sigma_i(v) = \begin{cases} 
H & \text{if } v \in [\tau, 2] \\
L  & \text{if } v \in [1/2, \tau),
\end{cases}
\end{equation}
so each bidder has two possible messages $H>L.$ Suppose the seller observes the message profile $(H, L)$. Now the virtual type functions with respect to the seller's posteriors for the two bidders are:
\begin{align*}
    \varphi_H(v_1) = 2v_1 - 2,\quad  \varphi_L(v_2)=2v_2-\tau,
\end{align*}
for the bidders sending $H$ and $L$, respectively. If the seller finds it profitable to run a mechanism after $(H,L)$ (so expected revenue from the auction is at least $c=1$), then a bidder of type $\tau$ cannot be indifferent between $H$ and $L$. In particular, if he reports $H$, he will always get 0 utility, as he will never get charged less than $\tau$ for the good in an optimal mechanism. But if he reports $L$, his virtual value will be higher than the other bidder whenever his opponent has value in $[\tau,\frac{\tau+2}{2}),$ and will thus be allocated the good in these states. Therefore a bidder of type $\tau$ must get strictly positive utility under $\sigma_i$ if the seller runs an auction after $(H,L).$ For $\sigma_i$ to be part of a symmetric equilibrium, a bidder with value $\tau$ must be indifferent between $H$ and $L$. This implies that $\tau$ must be such that the seller's revenue from running a mechanism at $(H,L)$ is less than $c=1,$ a requirement that immediately pushes $\tau$ below $1.$ On the other hand, the threshold must be high enough that the seller still finds it profitable to run an auction under the $(H,H)$ profile. Straightforward computation shows $\sigma_i$ is sustainable by thresholds roughly in the range $\frac 12< \tau\leq 0.854$, and the reserve chosen on path is $r=1.$ The seller's preferred equilibrium is the highest sustainable threshold, $\tau\approx 0.854$, as this minimizes the chance of auction failure and allows her to take her outside option in states where she wouldn't be able to under a lower threshold.

Similar reasoning shows that for any candidate partitional equilibrium, where bidders sort into bins of the form $[1/2, a_1), [a_1,a_2),\dots, [a_n,2]$, the seller again cannot run the auction if she receives a message that one bidder is in $[a_n,2]$ and another is in $[a_{n-1},a_n),$ implying that the seller cannot run the auction unless both bidders report being in the top bin, $[a_n,2].$ Therefore the equilibrium is outcome-equivalent to one of the threshold equilibria described above. Of course, it is not obvious that only finitely many messages are used on-path in equilibrium, which is what Theorem \ref{myerson_characterization} shows is without loss of outcomes.

In contrast, the binary equilibria described by (\ref{ex_strat}) are significantly easier to sustain when the seller is restricted to using a single reserve price, thereby allowing higher thresholds $\tau$ to arise in equilibrium. In particular, in this example, the seller can obtain her full-commitment payoff with a threshold $\tau=\frac 32.$ Since $\tau=\frac 32>1=c,$ the seller will now want to run the auction if either bidder reports $H$, and she will choose a common reserve of $\frac32.$ If both bidders report $L$, her revenue-maximizing auction has reserve $r=\frac 34$ and generates revenue $\frac{27}{32}<1,$ so she will not run the auction under $(L,L),$ and equilibrium constraints are satisfied.

\subsection{Solution concept}\label{subsec:solution}
The baseline solution concept for this paper will be an adaptation of perfect extended Bayesian equilibrium (PEBE), from \cite{FT_PBE}, to this continuous-type model.

\begin{definition}
A pair of strategy profiles $(\sigma,\pi)$ for the bidders and seller, and a belief system $(\mu_i)_{i=1}^n$ of the seller regarding bidder valuations after receiving messages at $t=1$, constitute a \textnormal{PEBE} iff:

\begin{enumerate}
    \item \textbf{Seller optimality:} For any message profile $m=(m_1,\dots,m_n)$, the seller's strategy $\pi(m)$ maximizes her expected utility given that she can either (i) take an outside option valued at $c$, or (ii) select a DSIC mechanism $\Gamma$ that maximizes expected revenue with respect to the beliefs $(\mu_i(\cdot\mid m))_{i=1}^n$.
    
    \item \textbf{Bidder optimality:} For every $i$ and every type $v_i\in\Theta_i$, the message $\sigma_i(v_i)$ maximizes expected utility given the seller's strategy $\pi$ and the other bidders' strategies $\sigma_{-i}$:
    \[
    \sigma_i(v_i)\in\arg\max_{m_i\in M_i}
    \mathbb{E}_{v_{-i}}\left[
    u_i(v_i\mid m_i,\sigma_{-i}(v_{-i}))
    \right].
    \]
    
    \item \textbf{No signaling what you don't know:} The seller's belief about bidder $i$ depends only on $m_i$, i.e. for any two message profiles $(m_i,m_{-i})$ and $(m_i,m_{-i}')$, including off-path messages,
    \[
    \mu_i(v_i\mid m_i,m_{-i})
    =
    \mu_i(v_i\mid m_i,m_{-i}')
    =:\mu_i(v_i\mid m_i).
    \]
    Write $F_{m_i}:=\mu_i(v_i\mid m_i)$ to denote the seller's belief about bidder $i$'s valuation after he sends message $m_i\in M_i$.

    \item \textbf{Bayes' updating on path:} For every $m_i$ such that $\sigma_i(v_i)(m_i)=1$ for some $v_i\in[\underline v,\overline v]$, the belief $F_{m_i}$ is derived from the prior $F$ using Bayes' rule.
\end{enumerate}
\end{definition}

Note that the seller's optimal mechanism is always deterministic, so her strategy is always pure in a PEBE. Given the DSIC implementation of any mechanism chosen by the seller, I leave unspecified bidder beliefs about each other and strategies at $t=2$. On path, each bidder is assumed to report truthfully at $t=2$, off path bidders report according to the above discussion following assumption \ref{ass:posterior_support}. For any PEBE $(\sigma,\pi)$ with belief system $(\mu_i)_{i=1}^n$, let $A_i$ denote the on-path messages of bidder $i$ such that for all $a\in A_i$, some type $v_i$ such that $\sigma_i(v_i)(a)>0$ gets strictly positive utility, and let $N_i$ denote all other on-path messages. Though the ``no-signaling-what-you-don't-know" condition imposes a natural restriction on seller beliefs after off-path messages, prior papers on cheap talk with multiple senders have emphasized that these models raise the concern that many of its equilibria are sustained by unreasonable off-path beliefs \citep{battaglini_02, takahashi_multi_08}. The remainder of this subsection explains a refinement that makes transparent what the seller is allowed to believe after off-path messages, and all equilibria analyzed in this paper are sustainable by its restriction.

For each bidder $i$, let $u_i(v_i\mid G,m_{-i})$ denote bidder $i$'s utility if he has value $v_i$, the seller believes his type is distributed according to $G$, and the other bidders send message vector $m_{-i}$. Let
\[
U_i(v_i\mid G):=\E[u_i(v_i\mid G,\sigma_{-i}(v_{-i}))]
\]
denote the corresponding expected utility, averaged over the other bidders' on-path messages. Let $x_i(v_i\mid G)$ denote the expected allocation probability that type $v_i$ obtains when he optimally reports in the mechanism chosen after the seller's belief about bidder $i$ is $G$, again averaged over the other bidders' on-path messages. Finally, let $U_i(v_i\mid \sigma)$ denote bidder $i$'s equilibrium utility given $(\sigma,\pi)$.

\begin{definition}\label{def:refinement}
    Let $(\sigma,\pi)$ and belief system $(\mu_i)_{i=1}^n$ form a PEBE. The PEBE has \textnormal{reasonable seller beliefs} iff for every bidder $i$ and off-path message $m_i$, the support of the seller's belief $F_{m_i}=\mu_i(v_i\mid m_i)$ regarding bidder $i$'s valuation does not contain any type $v_i$ for which there exists a type $v_i'>v_i$ such that, for every candidate belief $G$ supported inside $[\underline v,\overline v]$,
    \[
    U_i(v_i\mid G)\geq U_i(v_i\mid \sigma)
    \quad\text{and}\quad
    x_i(v_i\mid G)>0
    \]
    implies
    \[
    U_i(v_i'\mid G)-U_i(v_i'\mid \sigma)
    >
    U_i(v_i\mid G)-U_i(v_i\mid \sigma).
    \]
\end{definition}

This refinement is in the spirit of the D1 criterion from \cite{banks_sobel_87}, modified to accommodate this payoff structure and costless messaging setting. The basic intuition is similar: after observing an off-path message, the seller should not believe that the message came from type $v_i$ if there is a higher type $v_i'>v_i$ with a strictly stronger incentive to send the message for every seller belief under which type $v_i$ would be willing to deviate and obtain the good with positive probability. The requirement that type $v_i$ obtain the good with positive probability is what gives the refinement any bite in this environment because many off-path beliefs can make a type weakly willing to deviate only by giving him zero allocation and hence zero utility.

To clarify what restrictions this refinement imposes on seller beliefs, I first state the following lemma, which describes a simple but important property of all PEBE.

\begin{lemma}\label{lem:common_reserve}
Let $\sigma_i$ denote the strategy of bidder $i$ coming from a PEBE. Then for any $a,a'\in A_i$, it must be that $r_i(a)=r_i(a').$
\end{lemma}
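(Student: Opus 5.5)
The plan is to argue by contradiction using the envelope representation of bidder utility that Assumption~\ref{ass:posterior_support} makes available on the whole type space. Suppose $a,a'\in A_i$ with $r_i(a)<r_i(a')$; the case $r_i(a)>r_i(a')$ is symmetric. I will exhibit a type that sends $a'$ in equilibrium but strictly prefers to send $a$, contradicting bidder optimality.

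First, I record the envelope formula. For any on-path message $m$ for which the seller runs a mechanism with positive probability, $\overline{x}_i(\cdot\mid m)$ is nondecreasing in $v$, because each continuation mechanism is DSIC and averaging preserves monotonicity. By IIR and revenue optimality, the lowest type in $\supp(F_m)$ gets zero rent, and so do all types up to $r_i(m)$. Hence
\[
U_i(v\mid m)=\int_{r_i(m)}^{v}\overline{x}_i(s\mid m)\,ds \qquad \text{for } v\geq r_i(m),
\]
and $U_i(v\mid m)=0$ for $v\le r_i(m)$. In particular, $U_i(\cdot\mid m)$ is continuous. Applying this to $m=a$, and using that $\overline{x}_i(\cdot\mid a)$ is nondecreasing and strictly positive on $(r_i(a),\overline v]$ (this is the definition of $r_i(a)$), I get
\[
U_i(r_i(a')\mid a)=\int_{r_i(a)}^{r_i(a')}\overline{x}_i(s\mid a)\,ds>0 .
\]

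Second, I locate types that send $a'$ near $r_i(a')$. Because strategies are pure and monotone, the set of types sending $a'$ is an interval, and $\supp(F_{a'})$ is its closure $[w,w']$. Since $a'\in A_i$, some type sending $a'$ earns strictly positive utility, so $\overline{x}_i(\cdot\mid a')$ is not identically zero. The optimal (Myerson) mechanism for $F_{a'}$ places the reserve, that is, the infimum of the types receiving positive allocation, at a point of the support. Hence $r_i(a')\in[w,w']$, and in fact $r_i(a')<w'$, since positive allocation occurs at support points above it. Therefore there are types $v$ that send $a'$ and lie arbitrarily close to $r_i(a')$, either at or above it.

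Third, I take such types $v\to r_i(a')$. By continuity, $U_i(v\mid a')\to U_i(r_i(a')\mid a')=0$, while $U_i(v\mid a)\to U_i(r_i(a')\mid a)>0$. So for $v$ close enough to $r_i(a')$, the type $v$ sends $a'$ yet strictly prefers the on-path message $a$. This contradicts bidder optimality, and hence $r_i(a)=r_i(a')$.

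The step I expect to be the main obstacle is the second one: showing that $r_i(a')$ lies in the closure of the types sending $a'$. This requires care about how the extended allocation rule behaves outside $\supp(F_{a'})$. My first attempt will argue that an optimal mechanism never has its reserve strictly inside a gap of the support, and that Assumption~\ref{ass:posterior_support} makes the extended $\overline{x}_i$ a step-wise extension. Under that extension, the infimum of positive allocation is attained at a support point.
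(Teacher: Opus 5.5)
Your proof is correct and follows essentially the paper's argument. Both rest on three facts: $U_i(r_i(a')\mid a)>0$, continuity of $U_i(\cdot\mid a)$ and $U_i(\cdot\mid a')$, and the fact that the individualized reserve after $a'$ sits at a point of $\supp(F_{a'})$. The paper arranges these contrapositively: it first shows a whole neighborhood $[r_i(a'),r_i(a')+\varepsilon]$ strictly prefers $a$ and so lies outside $\supp(F_{a'})$, and then argues that the Myerson threshold $\kappa$ for $a'$ must lie strictly above that neighborhood. The step you flag as the main obstacle is lighter than you expect, because monotone pure strategies already make $\supp(F_{a'})$ a gapless interval $[w,w']$, so only types below $w$ need attention. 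Binding IIR at $w$ together with IC for type $w$ gives any type $v<w$ strictly negative surplus $-(w-v)x_i(\hat v)$ from every report $\hat v$ with positive allocation, which pins $r_i(a')$ inside $[w,w')$.
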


The logic is simple: if there were $a,a'\in A_i$ such that $r_i(a)<r_i(a'),$ then bidder $i$ would face a strictly lower individualized reserve price after sending $a$ compared to $a'$. Then types slightly above $r_i(a')$ would strictly prefer sending message $a,$ implying that an individualized reserve price of $r_i(a')$ cannot arise from optimal seller behavior given her beliefs after receiving $a'$ from bidder $i.$ The following proposition clarifies what beliefs the seller can have off-path according to the above refinement. Given a strategy $\sigma_i$ coming from a PEBE, let $r_i=r_i(a)=r_i(a')$ for all $a,a'\in A_i$.

\begin{proposition}\label{prop:beliefs_support}
 If the PEBE has reasonable seller beliefs, then for any off-path message $m_i$, the seller's belief $F_{m_i}$ is not supported by any value in the interval $[\underline{v}, r_i)$.
\end{proposition}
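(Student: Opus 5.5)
The plan is to fix an off-path message $m_i$ and a type $v_i<r_i$, and to exhibit a higher type $v_i'>v_i$ that satisfies the domination condition in Definition \ref{def:refinement}. The refinement then excludes $v_i$ from $\supp(F_{m_i})$. Since $v_i<r_i$ is arbitrary, this proves the claim. The natural choice is any $v_i'\in(v_i,r_i)$, for instance $v_i'=(v_i+r_i)/2$.

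\textbf{Step 1: equilibrium utilities below $r_i$.} I would first show that every type below $r_i$ gets zero equilibrium utility, $U_i(v\mid\sigma)=0$ for all $v<r_i$. Messages in $N_i$ give zero utility to every type sending them by definition. For a message $a\in A_i$, Lemma \ref{lem:common_reserve} gives $r_i(a)=r_i$, so $\overline x_i(v\mid a)=0$ for $v<r_i$. By the envelope representation on the full type space (which Assumption \ref{ass:posterior_support} makes available) and interim IR, $U_i(v\mid a)=\int_{\underline v}^{v}\overline x_i(s\mid a)\,ds=0$. A type below $r_i$ therefore obtains zero utility from whatever on-path message it sends. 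So for $v_i<v_i'<r_i$ the condition reduces to comparing $U_i(\cdot\mid G)$ with zero.

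\textbf{Step 2: the domination inequality.} Take any candidate belief $G$ with $U_i(v_i\mid G)\ge 0$ and $x_i(v_i\mid G)>0$. Under $G$, bidder $i$ faces an (averaged) DSIC menu with a nondecreasing allocation $x_i(\cdot\mid G)$, and the envelope formula holds on $[\underline v,\overline v]$. Hence
\[
U_i(v_i'\mid G)-U_i(v_i\mid G)=\int_{v_i}^{v_i'}x_i(s\mid G)\,ds\ \ge\ (v_i'-v_i)\,x_i(v_i\mid G)>0 .
\]
Since both equilibrium utilities are zero by Step 1, this is exactly
\[
U_i(v_i'\mid G)-U_i(v_i'\mid\sigma)>U_i(v_i\mid G)-U_i(v_i\mid\sigma).
\]
This holds for every qualifying $G$, and if no $G$ qualifies the implication holds vacuously. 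So $v_i$ is excluded from the support of $F_{m_i}$.

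\textbf{Main obstacle.} I expect the hardest step to be justifying, for an arbitrary (possibly off-path) belief $G$, that the extended utility $U_i(\cdot\mid G)$ is convex and satisfies the envelope identity with slope $x_i(\cdot\mid G)$ on all of $[\underline v,\overline v]$. This matters especially at types outside $\supp(G)$. I would handle it through Assumption \ref{ass:posterior_support}: $u_i(\cdot\mid G,m_{-i})$ is a supremum of affine functions of $v_i$ indexed by the reports $\hat v\in\supp(G)$. It is therefore convex, and its (right) derivative is the allocation at the maximizing report, which is monotone. Averaging over $m_{-i}$ preserves both properties. A second, minor check is that types exactly at or near $r_i$ are handled consistently. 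The statement only concerns $[\underline v,r_i)$, where Step 1 applies cleanly, so I would restrict attention to that half-open interval.
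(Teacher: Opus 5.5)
Your proof is correct and follows the paper's argument. The paper takes $v_i'=r_i$ (using $U_i(r_i\mid\sigma)=U_i(v_i\mid\sigma)=0$) and gets the key bound in one line by letting type $r_i$ mimic type $v_i$'s report, $U_i(r_i\mid G)\ge U_i(v_i\mid G)+(r_i-v_i)\,\overline{x}_i(v_i\mid G)$; this is the same inequality your envelope/monotonicity step delivers, and it makes your anticipated ``main obstacle'' (convexity and the envelope identity off $\supp(G)$) unnecessary, while your choice $v_i'<r_i$ and your Step 1 simply make explicit the zero equilibrium utilities the paper asserts.
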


Essentially, if some type $v_i<r_i$ thought he could benefit from sending an off-path message, he must believe that the seller will form a posterior that allows him to get the good with positive probability. But then type $r_i$ of bidder $i$, who is currently also getting 0 utility under the equilibrium, would benefit even more from that off-path message. The analysis will make clear that every equilibrium described in the results for the baseline model is sustainable by beliefs of the seller, after observing an off-path message from bidder $i,$ given by the truncation of the prior $F$ to be supported on $[r_i,\overline v].$

The main results do not depend on the restriction to reasonable seller beliefs, meaning that the results apply to equilibria without the restriction as well. Definition \ref{def:refinement} simply clarifies that the existence of informative equilibria in this model does not depend on unrealistic seller behavior off-path.

\subsection{Restriction to monotone strategies}\label{subsec:monotone}

The analysis will restrict to \emph{pure monotone} strategies, in the sense that $\sigma_i$ is weakly increasing ($v'>v\Rightarrow \sigma_i(v')\geq \sigma_i(v)$). This assumption has been used to select equilibria in strategic communication models — including by \cite{chen_kartik_sobel_08}, \cite{kartik_lying_09}, \cite{Chen_2011}, and \cite{Sobel_19} — in the \cite{crawford_strategic_1982} payoff environment.\footnote{\cite{Gordon_monotone_24} provide a partial justification for restricting to monotone strategies.} Though this assumption is generally restrictive, it has a natural interpretation in this model. In particular, following the line of literature that attributes literal meaning to cheap talk messages, if $M_i=[\underline v,\overline v]$, one can interpret a bidder sending $v\in M_i$ as saying, ``My valuation is $v$, and that will be my bid if you run the sale."\footnote{\cite{kartik_lying_09} uses this interpretation as a basis for adding ``lying costs" to the standard cheap talk model, transforming it into a signaling game. Section \ref{extensions} considers this model with lying costs added. \cite{kartik_lying_09} also considers a larger message space where the sender has infinitely many ways of communicating, ``My type is $v$." Since the main equilibrium characterizations in this paper concern outcome-equivalence, I omit this distinction for expositional simplicity.} Monotonicity of strategies can then be interpreted as requiring that if a bidder with value $v$ wants to overclaim his type to be $\hat v$, then a bidder with value $v'>v$ should not want to claim to be lower than $\hat v,$ and vice versa if bidders are underclaiming. Appendix \ref{appendix:nonmonotone_examples} gives finite-type examples showing that, once monotonicity is dropped, symmetric equilibria can involve non-monotone mixed strategies and need not be outcome-equivalent to threshold equilibria.

For the remainder of the paper, I use the term \emph{symmetric equilibrium} to mean a pair of strategies $(\sigma,\pi)$ and a seller belief system forming a bidder-symmetric PEBE with reasonable seller beliefs, where each $\sigma_i$ is pure and monotone.

\section{Equilibrium analysis for unrestricted seller}\label{unrestricted}

When the seller is unrestricted in her choice of mechanism, the symmetric equilibria of the model can be characterized as follows.

\begin{theorem}\label{myerson_characterization}
Under Assumption \ref{ass:posterior_support}, restricting attention to pure monotone bidder-symmetric strategies and PEBE with reasonable seller beliefs in the sense of Definition \ref{def:refinement}, every equilibrium in the unrestricted regime is outcome-equivalent to one in which at most one mechanism runs on path, namely a second-price auction with reserve $r\geq p_F$. If a mechanism runs on path, the equilibrium is outcome-equivalent to a threshold equilibrium where bidders report whether their valuations are above some cutoff $\tau$, and the auction runs if and only if all $n$ bidders report being above the cutoff.
\end{theorem}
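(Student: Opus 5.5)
I would start from Lemma \ref{lem:common_reserve}, which says every message in $A_i$ induces the same individualized reserve $r_i$; by symmetry $r_i=r$ for all $i$. The aim is to show that whenever the seller runs a mechanism on path, every participating bidder who can win sends a message in $A_i$. I would then show that the resulting Myerson mechanism is a second-price auction with common reserve $r$.

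First I would use monotonicity of $\sigma_i$ together with Bayes' rule: each on-path message $m_i$ corresponds to an interval of types $[a(m_i),b(m_i)]$. I would split on-path messages into $A_i$ (some sender type gets positive utility) and $N_i$. For $m\in A_i$ with individualized reserve $r$, the type $r$ must lie in the closure of the message's interval or below it. If a message in $A_i$ pooled only types strictly above $r$, the seller's Myerson rule under the truncated posterior would set the reserve at the bottom of the support. Incentive compatibility across $A_i$ messages then forces all types at or above some cutoff $\tau$ to obtain the same interim allocation, as in the motivating example.

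The key step is the freeriding argument. Take a profile $m$ on path at which the seller runs a mechanism, and suppose posteriors are asymmetric: some bidder $j$ sends a message pooling a higher interval than bidder $k$'s. Myerson's mechanism under $(F_{m_j},F_{m_k})$ then discriminates. Bidder $j$'s virtual type is lower at a given value, so for types $v$ of $j$ near the lower endpoint of his interval, switching to the adjacent lower message would raise his allocation on a positive-measure set of opponent states, while the common reserve $r$ is unchanged. I would formalize this via the envelope formula $U_i(v\mid m_i)=\int_r^v \overline{x}_i(s\mid m_i)\,ds$. Since $U_i(\cdot\mid m_i)$ has the same base point $r$ for every $m_i\in A_i$, equilibrium requires $U_i(v\mid\sigma_i(v))\ge U_i(v\mid m')$ for all $m'\in A_i$ and all $v$. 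Combining this with monotonicity, the upper envelope must coincide with each message's utility on its own interval, and the boundary indifference conditions force $\overline{x}_i(\cdot\mid m)$ to agree across messages in $A_i$ on the relevant range. A strictly positive-measure advantage from the lower message at a boundary type, as in the example, would contradict indifference.

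I expect this step to be the main obstacle. Possibly infinitely many messages must be handled, and the allocation difference must be translated from the interim level to the ex-post states, which requires showing that discrimination at any running profile yields a strict interim gap. The natural route is to show the seller never runs at a profile containing a message that pools types below the top message's lower endpoint. Then all running profiles consist of messages whose posteriors are truncations at a common point $\tau$ above $r$, and since truncations at a common point have identical virtual types above $\tau$, Myerson's mechanism is symmetric.

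Finally I would assemble the pieces. At running profiles all bidders send the same top message $[\tau,\overline v]$, and the Myerson mechanism is a second-price auction with reserve $r$ equal to the maximizer under the truncated posterior, so $r\ge p_F$. The reserve satisfies $\varphi(r)=0$ if $r\ge\tau$ (the regular prior's hazard structure is preserved by truncation), or sits at $\tau$ boundary effects otherwise; in either case $r\ge p_F$ because the truncated virtual type is at most $\varphi$. Merging all lower messages into a single message $L$ and keeping the top one as $H$ gives an outcome-equivalent threshold equilibrium in which the auction runs if and only if all $n$ bidders report $H$. Off-path beliefs are taken as truncations of $F$ to $[r,\overline v]$, consistent with Proposition \ref{prop:beliefs_support}.
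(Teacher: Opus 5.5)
\textbf{There is a genuine gap.} It sits at exactly the step you flag as the main obstacle, and the bidder-side freeriding route you propose cannot close it.

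First, several active messages need no boundary-indifference argument. Your own observation already finishes the job: an active message cannot pool only types strictly above $r$, since its bottom type gets zero utility, so its individualized reserve would exceed $r$, contradicting Lemma~\ref{lem:common_reserve}. Hence every active message's interval has infimum at most $r$, and it must contain types strictly above $r$ to be active. Monotone strategies give distinct messages disjoint, ordered intervals, so at most one message can do both (Lemma~\ref{lem:one_active}). The same reasoning gives $\tau\le r$, because types in $(r,\tau)$ would strictly gain from the active message; this contradicts your ``$\tau$ above $r$''. In a non-babbling equilibrium one in fact gets $\tau<r$, hence $r=p_F$.

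The real missing idea is the step you call the natural route: the seller never runs after a profile containing a low message. This cannot come from bidder incentives. Suppose the seller ran after $(\bar n_i,a_{-i})$ while excluding bidder $i$. Active types would face the same reserve and low types would get zero either way, so no bidder could profitably deviate. The restricted-regime equilibria of Proposition~\ref{prop:restricted_char} in which the auction runs iff at least one bidder reports $H$ have exactly this structure. The contradiction has to come from the seller's Myerson optimality, which the paper obtains in four steps:
\begin{enumerate}
\item[(a)] A nonactive sender is excluded a.s.\ (Lemma~\ref{lem:exclusion}). Every type in his interval gets zero utility at each positive-probability running profile, so by IIR and the envelope formula his allocation is zero a.e.
\item[(b)] By monotonicity of optimal revenue in posteriors \citep{hart_reny_15}, it suffices to rule out running after $(\bar n_i,a_{-i})$.
\item[(c)] $\sup\supp(F_{\bar n_i})<r$. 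Otherwise bidder $i$'s ironed virtual values near the top of his support approach $r$, while active bidders' virtual values just above $r$ stay bounded strictly below $r$, so the Myerson rule would allocate to $i$ with positive probability.
\item[(d)] Hence the active message pools a positive mass of types below $r$. On the positive-probability event that all active bidders are below $r$, nobody wins, although bidder $i$'s virtual value near the top of his support is positive. This contradicts optimality (Lemma~\ref{lem_active_nonactive}).
\end{enumerate}

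Finally, merging all low messages into one $L$ requires checking that the seller still declines to run after the merged posterior. This follows from convexity of optimal revenue in a single bidder's posterior (Lemma~\ref{lem:pool_nonactive}); your proposal asserts the merge without justification.
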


The intuition behind Theorem \ref{myerson_characterization} is the following. When the seller is unrestricted over her mechanism choice, bidders face an information-provision freeriding problem. In particular, if the seller were to run an auction when some bidders revealed favorable information and others revealed low values, the optimal auction would handicap bidders providing favorable information to the seller while boosting weaker bidders. Therefore, while each bidder would like the seller to receive enough favorable information to run a mechanism, he would rather this information come from another bidder. It follows that if a bidder is providing information to the seller that favors running a mechanism, that information must be pivotal to her decision to run a sale. 

The result shows that the second-price-with-reserve auction format that is well-studied and widely used in practice arises endogenously in equilibrium despite the seller's ability to design much more individualized mechanisms after receiving communication. An implication of this result is that the seller's payoff is bounded strictly below her full commitment — studied by \cite{myerson_optimal_1981} and \cite{bulow_simple_1989} — as the seller's choice of mechanism is strongly disciplined by her limited commitment and the bidders' communication incentives. Note that the equilibria described in Theorem \ref{myerson_characterization} are sustainable by off-path beliefs satisfying the refinement in the previous section: All bidder types $v\geq r$ are sending $a,$ so they cannot benefit by deviating to an off-path message if the seller forms a belief $F|_{v\geq r}$. Furthermore, given such a belief, no bidder types $v<r$ can ever get positive utility from an optimally designed mechanism run off-path.\footnote{For a formal statement, see Lemma \ref{lem:offpath_threshold_beliefs} and its proof in Appendix \ref{appendix:proofs}.}

\begin{corollary}\label{cor:full_commit}
The seller's equilibrium utility under any symmetric equilibrium is strictly less than her full-commitment payoff.
\end{corollary}

This follows immediately from the revelation principle: Any equilibrium of this model can be reproduced by a direct mechanism when the seller has full commitment, and Theorem \ref{myerson_characterization} implies the allocation rule resulting from any symmetric equilibrium disagrees with any optimal allocation rule coming from the solution in \cite{myerson_optimal_1981}. 

\begin{remark}
\label{rem:unrestricted_existence}
A simple necessary and sufficient condition for a non-babbling threshold equilibrium in the unrestricted regime is the existence of a cutoff $\tau\in(\underline v,\overline v)$ such that the seller finds it profitable to run the mechanism when all $n$ bidders are above $\tau$, but not when one bidder has value below $\tau$ and the other $n-1$ bidders have values above $\tau$. By monotonicity of the unrestricted seller's revenue in her posteriors (see \cite{hart_reny_15}), this implies that the seller also does not run after any other message profile containing a low message.
\end{remark}

\subsection{Proof}
I now outline the proof of Theorem \ref{myerson_characterization}. Recall that for any symmetric equilibrium, $A_i$ denotes the set of on-path messages of bidder $i$ such that, for each $a\in A_i$, some type sending $a$ gets strictly positive utility, and $N_i$ denotes the set of on-path messages for which every type sending such a message gets zero utility. Assume that the sets of types sending messages in $A_i$ and $N_i$ both have positive measure. If not, the same steps in the proof show that the equilibrium is outcome-equivalent to a babbling equilibrium, so the seller either does not run a mechanism or runs a second-price auction with reserve $p_F$. I will refer to messages in $A_i$ as ``active'' and messages in $N_i$ as ``nonactive.'' The proof proceeds by reducing the communication structure to a binary one and then showing that the seller can run only when all bidders send their active messages.
\begin{enumerate}
    \item Every active message gives bidder $i$ the same individualized reserve price $r:=r_i(a)$.
    \item Consequently, there can be only one active message $a_i$ for each bidder.
    \item A bidder who sends a nonactive message must be excluded from any mechanism the seller runs with probability 1. Consequently, the seller's behavior after receiving a nonactive message from bidder $i$ is independent of the specific nonactive message sent, and nonactive messages can be pooled into a single message while preserving outcomes.
    \item Given a pooled nonactive message $\bar n_i$, the seller cannot run the auction after any message profile in which some bidder sends $\bar n_i$. Hence the seller can run only after the all-active profile $(a_1,\dots,a_n)$.
\end{enumerate}

\begin{lemma}\label{lem:one_active}
There cannot be two distinct active messages $a,a'\in A_i$.
\end{lemma}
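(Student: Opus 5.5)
The plan is to derive a contradiction from Lemma \ref{lem:common_reserve} by using the interval structure that monotonicity imposes on the types sending each message. Suppose there are two distinct active messages $a,a'\in A_i$. Since $\sigma_i$ is pure and weakly increasing and $M_i$ is totally ordered, the sets $I=\sigma_i^{-1}(a)$ and $I'=\sigma_i^{-1}(a')$ are disjoint intervals, ordered relative to each other. Without loss, every type in $I$ lies weakly below every type in $I'$. Let $b:=\sup I=\inf I'$ denote the boundary between them, allowing for gaps filled by other messages, in which case $\sup I\le \inf I'$ and we set $b:=\inf I'$. By Lemma \ref{lem:common_reserve}, $r_i(a)=r_i(a')=:r$.

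The first step is to show that $r<b$. Because $a$ is active, some type $v\in I$ obtains strictly positive utility $U_i(v\mid a)>0$. The envelope representation $U_i(v\mid a)=\int_{\underline v}^{v}\overline x_i(s\mid a)\,ds$ then forces $\overline x_i(s\mid a)>0$ for some $s<v\le b$. Hence $r=r_i(a)<b$.

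The second step, which I expect to be the main obstacle, is to show that $r_i(a')\ge b$. Fix any $m_{-i}$ after which the seller runs a mechanism. Her posterior $F_{a'}$ is the prior truncated to $I'$, so $\supp(F_{a'})\subseteq[b,\overline v]$. The continuation mechanism is Myerson-optimal for this posterior, so for bidder $i$ the interim IR constraint binds at the lowest supported type that receives the good. Equivalently, the allocation is a threshold in virtual value, and the payment of any allocated report $\hat v$ satisfies $t_i(\hat v)\ge \hat v'\,x_i(\hat v)$ with $\hat v'\ge b$ being the relevant critical type; in particular $t_i(\hat v)\ge b\,x_i(\hat v)$. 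By Assumption \ref{ass:posterior_support}, a type $v<b$ obtains
\[
\sup_{\hat v\in\supp(F_{a'})}\{v\,x_i(\hat v)-t_i(\hat v)\}\le \sup_{\hat v}\{(v-b)\,x_i(\hat v)\}\le 0 .
\]
So the extended utility of such a type is zero, and by the envelope formula its extended allocation is zero. Averaging over $m_{-i}$ preserves this, so $\overline x_i(v\mid a')=0$ for all $v<b$, giving $r_i(a')\ge b$. The care needed here is in handling the extension via the supremum and the case where the lowest supported type has an atom or receives no allocation. In both cases the payment bound $t_i\ge b\,x_i$ for DSIC mechanisms on a support contained in $[b,\overline v]$ still follows from the Myerson payment formula, since any allocated type's threshold value lies in the support.

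Combining the two steps gives $r_i(a')\ge b>r=r_i(a)$, which contradicts Lemma \ref{lem:common_reserve}. Therefore bidder $i$ can have at most one active message. If the ordering of $I$ and $I'$ is reversed, the same argument applies with the roles of $a$ and $a'$ swapped.
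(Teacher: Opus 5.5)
Your proposal is correct and is essentially the paper's argument. Both combine Lemma \ref{lem:common_reserve}, the interval structure forced by monotonicity, and the fact that a message's individualized reserve lies weakly above the bottom of its posterior support; your Step 2 spells out what the paper compresses into ``by the same reasoning as Lemma \ref{lem:common_reserve},'' and where the paper concludes that the lower message is sent only by types $\le r$ and hence is nonactive, you run that step contrapositively as $r_i(a)<b\le r_i(a')$.
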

\begin{proof}
    Suppose there were two distinct active messages $a,a'\in A_i$ and WLOG assume $a>a'.$ Since $a$ is an active message it must be sent by some type $v>r.$ Let $\supp(F_a)$ denote the support of $F_a.$ By the same reasoning as lemma \ref{lem:common_reserve}, it cannot be that $\inf(\supp(F_a))>r,$ so $\inf(\supp(F_a))\leq r.$ Monotonicity of strategies implies that $\supp(F_a)$ is connected, so it must contain the interval $(\inf(\supp(F_a)),v].$ Since $a>a'$, this implies $\sup(\supp(F_{a'}))\leq \inf(\supp(F_a))\leq r,$ implying that $a'$ is not an active message.
\end{proof}

For the remainder of the proof, let the active message be $a,$ and label it $a_i$ to specify that bidder $i$ sends it.

\begin{lemma}\label{lem:exclusion}
If the seller runs a mechanism after a message profile in which bidder $i$ sends a nonactive message $n\in N_i$, then bidder $i$ is excluded from the mechanism with probability 1, i.e.
\[
x_i(v\mid n,m_{-i})=0
\quad\text{for }F_n\text{-a.e. }v
\] and all positive-probability opponent message profile $m_{-i}$ for which the mechanism is run.
\end{lemma}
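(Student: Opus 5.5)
I would prove Lemma \ref{lem:exclusion} by contradiction, combining the monotone structure of strategies with the common-reserve property from Lemma \ref{lem:common_reserve} and Lemma \ref{lem:one_active}. The first step is to locate the nonactive messages relative to the unique active message $a_i$. Since every type sending a message in $N_i$ gets zero utility, while types just above $r$ sending $a_i$ get strictly positive utility, monotonicity should force every $n\in N_i$ to satisfy $n<a_i$. Indeed, if some $n>a_i$ were sent by types above those sending $a_i$, those higher types $v>r$ could mimic $a_i$. Because $U_i(\cdot\mid a_i)$ is nondecreasing and strictly positive above $r$, this would yield them strictly positive utility, contradicting optimality of a zero-utility message. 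Hence $\supp(F_n)\subseteq[\underline v, \inf\supp(F_{a_i})]$, and by the argument in Lemma \ref{lem:one_active} this set lies in $[\underline v,r]$. So every type in the support of a nonactive posterior is at most $r$.

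Next I would suppose, toward a contradiction, that for some positive-probability $m_{-i}$ after which the seller runs a mechanism, $x_i(\cdot\mid n,m_{-i})$ is positive on a set of $F_n$-positive measure. Since the optimal mechanism is Myerson's with respect to the posteriors, and $x_i$ is monotone in the reported type, the allocation is positive on an interval $(s,\sup\supp(F_n)]$ with $s<\sup\supp(F_n)\le r$. Averaging over $m_{-i}$ gives $\overline{x}_i(v\mid n)>0$ for $v$ slightly above $s$. By the envelope representation (valid on the full type space via Assumption \ref{ass:posterior_support}), types $v\in(s,\sup\supp(F_n)]$ then receive $U_i(v\mid n)=\int_{s}^{v}\overline{x}_i(w\mid n)\,dw>0$. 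This holds for $F_n$-positive measure of types actually sending $n$, which contradicts $n\in N_i$, i.e. every type sending $n$ gets zero utility. One subtlety is that utility is computed as a supremum over $\supp(F_n)$. Since these types lie inside the support, however, truthful reporting is optimal and the envelope formula applies directly.

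The main obstacle is handling exclusion that is only probabilistic across opponent profiles. The contradiction needs the averaged allocation $\overline{x}_i$ to be positive on a positive-measure set of types, and this follows because the offending $m_{-i}$ has positive probability. A second obstacle is the possibility that the positive allocation sits only at the top point $\sup\supp(F_n)$, which is measure zero. I would address this by noting that $\{v:x_i(v\mid n,m_{-i})>0\}$ is an upper interval of $\supp(F_n)$. That interval has positive $F_n$-measure by hypothesis, so it has nonempty interior intersected with the support. Since $F$ has full support and $\sigma_i$ is monotone, $F_n$ is the prior truncated to an interval, and the interior therefore carries positive mass of types sending $n$.
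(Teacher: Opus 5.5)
Your proof is correct and is essentially the paper's argument: zero interim utility for every type sending $n$, combined with the envelope representation and IIR, forces zero allocation. You run it in contrapositive form on the averaged rule $\overline{x}_i(\cdot\mid n)$. The paper instead uses IIR and the positive probability of $m_{-i}$ to get $u_i(v\mid n,m_{-i})=0$ for every type sending $n$, then applies the envelope formula profile by profile, which avoids needing the upper-interval structure of $x_i$. Your first step (ordering $n<a_i$ and bounding $\supp(F_n)$ by $r$ via Lemma \ref{lem:one_active}) is correct but never used in the contradiction, so it can be dropped.
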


The argument is straightforward and follows from the envelope representation of bidder utilities and IIR. A simple intuition is the following. Suppose bidder $i$ sends a nonactive message $n$ and obtains the good with positive probability in some mechanism run after $(n,m_{-i})$. Since $n\in N_i$, every type sending $n$ receives zero interim utility. Thus any type $v$ sending $n$ who is allocated the good must be charged exactly $v$ by IIR. If this occurs for some type $v<r$, then type $r$ — who also receives zero utility in equilibrium — could deviate to $n$ obtain strictly positive utility, which would cause the equilibrium to unravel.

By Lemma \ref{lem:exclusion}, since a bidder sending a nonactive message is excluded from any mechanism the seller runs with probability 1, the seller's optimal mechanism after receiving a nonactive message from bidder $i$ depends only on her posteriors about the remaining bidders. In particular, fixing the messages of the other bidders, the seller's behavior is independent of which nonactive message bidder $i$ sent. This allows all nonactive messages to be pooled into a single message $\bar n_i$ without changing any outcomes.

\begin{lemma}\label{lem:pool_nonactive}
Any symmetric equilibrium is outcome-equivalent to one in which each bidder $i$'s message space is replaced by $\{a_i,\bar n_i\}$, where all messages in $N_i$ are pooled into a single message.
\end{lemma}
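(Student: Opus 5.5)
I will build the pooled equilibrium explicitly and check, in order, that it reproduces the original outcome, that the seller is still optimizing, and that bidders are still optimizing. Start from a symmetric equilibrium $(\sigma,\pi)$ with belief system $(\mu_i)$. By Lemma \ref{lem:one_active}, each bidder has a single active message $a_i$. By monotonicity, the set of types sending $a_i$ is an interval whose infimum is at most $r$ and which contains all types above $r$. Define the new strategy $\tilde\sigma_i(v)=a_i$ if $\sigma_i(v)=a_i$, and $\tilde\sigma_i(v)=\bar n_i$ if $\sigma_i(v)\in N_i$, with $\bar n_i<a_i$ so that monotonicity is preserved. Bayes' rule then gives the posterior after $\bar n_i$: it is $F$ conditioned on the union of the types that sent messages in $N_i$, which is a mixture of the old posteriors $F_n$, $n\in N_i$.

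\textbf{Step 1 (seller's rule and outcome equivalence).} For any profile containing some $\bar n_i$, the new seller rule $\tilde\pi$ picks any $n\in N_i$ and plays $\pi(n,m_{-i})$, with bidder $i$ excluded. By Lemma \ref{lem:exclusion}, whenever $\pi$ runs a mechanism after $(n,m_{-i})$, bidder $i$ is excluded $F_n$-a.e. The seller's revenue from an exclusionary mechanism depends only on her posteriors about $-i$, which do not depend on $n$ by the no-signaling condition. Since $\pi(n,m_{-i})$ is optimal given $F_n$, a mechanism excluding $i$ is optimal among all mechanisms. I would then argue that the optimal value, and whether the seller runs at all, is the same for every $n\in N_i$. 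If two nonactive messages led to different run decisions or revenues, then the optimal exclusionary choice is one fixed object, and the seller's best response must equal that object up to ties, which I can resolve in the same way across all $n$.

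\textbf{Step 2 (seller optimality after pooling).} This is the main obstacle. Under the pooled posterior $\tilde F_{\bar n_i}$, the seller might now strictly prefer to include bidder $i$, because the mixture has larger support. I would handle it as follows. Every type sending a message in $N_i$ gets zero utility. Lemma \ref{lem:exclusion}'s envelope/IIR argument shows that including any such type with positive probability lets type $r$ profit from a deviation to $\bar n_i$. So for each $n$, including $i$ was not strictly profitable under $F_n$. Myerson revenue is linear in the mixture of ironed virtual values $\varphi_{F_n}$ weighted by density, and the virtual surplus of including a low type from pooled support is the corresponding convex combination. Hence optimal revenue under the mixture is at most the mixture of optimal revenues, which equals the exclusionary revenue. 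This makes exclusion optimal after $\bar n_i$ as well. If this linearity argument fails, I would fall back on selecting, as part of the equilibrium, the exclusionary optimal mechanism whenever exclusion is among the seller's best responses.

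\textbf{Step 3 (bidder incentives and beliefs).} Types that previously sent nonactive messages receive the same zero utility. A deviation to $a_i$ yields exactly what it yielded before, since the allocation after $a_i$ is unchanged, so those types were not profiting from it. Active types face the same payoffs from $a_i$, and deviating to $\bar n_i$ gives zero, as before. Off-path beliefs are kept as in the original equilibrium, so the reasonable-beliefs restriction continues to hold via Proposition \ref{prop:beliefs_support}. The joint distribution over allocations and transfers is therefore unchanged, which establishes outcome equivalence.
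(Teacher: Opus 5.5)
Your proposal is correct and follows essentially the paper's route: Lemma \ref{lem:exclusion} makes the continuation after any nonactive message depend only on $m_{-i}$. Your Step 2 bound $\Rev(F_{\bar n_i},G_{-i})\le\sum_{n}\alpha_n\Rev(F_n,G_{-i})$ (with $G_{-i}$ the posteriors about the other bidders) is exactly the paper's convexity argument. It is best justified by noting that a fixed mechanism's revenue is linear in $F_i$; ironed virtual values are not themselves linear in the mixture. Your explicit checks of monotonicity ($\bar n_i<a_i$) and of bidder incentives fill in what the paper leaves implicit. You should also state that the same inequality gives $\Rev(F_{\bar n_i},G_{-i})\le c$ at profiles where the seller ran after no $n\in N_i$, since your Step 2 only spells out the exclusion case.
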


The previous lemmas already establish that the communication structure when bidders use monotone strategies is \emph{binary}. It remains to show that the seller can only run a mechanism when all bidders use their active messages.

\begin{lemma}\label{lem_active_nonactive}
The seller cannot run a mechanism after any message profile in which some bidder sends his nonactive message.
\end{lemma}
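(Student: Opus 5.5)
The plan is to show that running after such a profile contradicts Lemma \ref{lem:exclusion}. If the seller runs, her revenue-optimal mechanism would give the nonactive bidder the good with positive probability. Work in the binary representation of Lemma \ref{lem:pool_nonactive}, where each bidder sends either $a_i$ or $\bar n_i$. By monotonicity of $\sigma_i$ and Lemma \ref{lem:one_active}, there is a cutoff $\tau\in(\underline v,\overline v)$ such that types in $[\underline v,\tau)$ send $\bar n_i$ and types in $[\tau,\overline v]$ send $a_i$. The boundary type's assignment is measure zero and irrelevant. Hence, on path, $F_{\bar n_i}$ is $F$ truncated to $[\underline v,\tau)$ and $F_{a_i}$ is $F$ truncated to $[\tau,\overline v]$. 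By bidder symmetry, a profile with some nonactive message has, up to relabeling, $k$ active and $n-k$ nonactive bidders with $0\le k\le n-1$. Suppose, for contradiction, that the seller runs a mechanism after such a profile $(\bar n_i,m_{-i})$.

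The first step computes virtual values under the posteriors. For the nonactive bidder, the virtual type under $F_{\bar n_i}$ is
\[
\psi(v)=v-\frac{F(\tau)-F(v)}{f(v)},\qquad v\in[\underline v,\tau),
\]
and $\psi(v)\to\tau$ as $v\uparrow\tau$. Since $\tau>\underline v\ge 0$, the nonactive bidder's virtual value is close to $\tau>0$ on a left neighborhood of $\tau$. For an active bidder $j$, the truncation from below leaves the hazard rate unchanged, so $\varphi_{a_j}=\varphi$ on $[\tau,\overline v]$. By regularity, $\varphi$ is continuous with $\varphi(\tau)<\tau$, because $F(\tau)<1$ and $f$ is positive there. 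Choose $\varepsilon>0$ with $\max(0,\varphi(\tau+\varepsilon))<\tau-\varepsilon$ and with $\psi(v)>\tau-\varepsilon$ for all $v\in(\tau-\delta,\tau)$, for some $\delta>0$. Then, with positive probability, every active opponent has $v_j\in[\tau,\tau+\varepsilon)$ and hence virtual value below $\tau-\varepsilon$. Every other nonactive opponent has virtual value at most $\psi(v_j)<\tau$, and with positive probability that value is also below $\tau-\varepsilon$. If $k=0$, only the nonactive opponents matter and the same bound applies.

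The second step applies \cite{myerson_optimal_1981} to the seller's posteriors. Because she runs a mechanism, she runs a revenue-maximizing one, which maximizes expected virtual surplus. On the event just described, a set of types $v_i\in(\tau-\delta,\tau)$ of positive $F_{\bar n_i}$-measure has the strictly highest and strictly positive virtual value. Any allocation rule giving bidder $i$ zero probability a.e. on this set loses a strictly positive amount of expected virtual surplus relative to assigning him the good there. So it cannot be optimal, and indeed every optimal mechanism sets $x_i(v\mid \bar n_i,m_{-i})>0$ on a positive-measure subset of $\supp(F_{\bar n_i})$. This contradicts Lemma \ref{lem:exclusion}, so the seller cannot run after any profile containing a nonactive message. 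In the case $k=0$ the conclusion can be reached directly as well: Lemma \ref{lem:exclusion} excludes every bidder, so revenue is $0<c$ and she would not run in the first place.

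I expect the main obstacle to be justifying that the Myerson comparison is strict and uses the right posteriors. One issue is that the ironing caveat is avoided because $F$ is regular, so both truncations are regular on their supports. Another is that the active bidders' posteriors are exactly the upper truncation; if the seller's beliefs about opponents came from off-path messages, this would not hold, but the lemma only concerns positive-probability profiles, where Bayes' rule pins down the truncations. The remaining care is in handling the measure-zero boundary type $\tau$ and in making the positive-probability event rigorous via continuity of $\psi$ and $\varphi$, which holds since $F\in C^1$.
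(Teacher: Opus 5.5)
Your argument is correct, and it takes a shorter route than the paper's.

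The paper proceeds in three steps:
\begin{enumerate}
\item It invokes monotonicity of optimal revenue in the posteriors (Hart--Reny) to reduce to the single profile $(\bar n_i,a_{-i})$.
\item It runs essentially your virtual-value comparison, but only at the individualized reserve $r$, and only to rule out $\sup\supp(F_{\bar n_i})=r$.
\item Having placed the nonactive support strictly below $r$, it uses the definition of $r$. This gives a positive-probability event on which every active bidder has value in $[\tau,r)$ and so never wins. Excluding bidder $i$ there means withholding the good from a positive virtual type.
\end{enumerate}

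You instead make the comparison at the cutoff $\tau=\sup\supp(F_{\bar n_i})$ itself. The active posterior's virtual value just above $\tau$ is close to $\varphi(\tau)<\tau$ whether or not $\tau=r$, so one comparison covers both $\tau=r$ and $\tau<r$. It also applies directly to every profile with any mix of active and nonactive senders. So you need neither the Hart--Reny reduction nor the reserve $r$.

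What the paper's formulation buys is that it is phrased for possibly ironed virtual values. Yours relies on regularity of both truncated posteriors. That is true, but for the lower truncation it is not immediate, since its virtual value is $\psi=\varphi+\frac{1-F(\tau)}{f}$ rather than $\varphi$. One line fixes it: write $\psi(v)=\theta(v)\,v+(1-\theta(v))\,\varphi(v)$ with $\theta(v)=\frac{1-F(\tau)}{1-F(v)}\in(0,1]$, which is nondecreasing in $v$. Since $v>\varphi(v)$ and $\varphi$ is increasing, $\psi$ is strictly increasing.

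Similarly, your claim that the active types form the upper interval $[\tau,\overline v]$ needs more than monotonicity and Lemma \ref{lem:one_active}. It also uses that $U_i(\cdot\mid a_i)$ is nondecreasing, so every type above one earning positive utility strictly prefers $a_i$ to a zero-utility message.
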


This is the key step in the proof of Theorem \ref{myerson_characterization}. While the previous lemmas reduce communication to a binary active/nonactive structure, this lemma shows that the seller cannot use this information to run a mechanism unless every bidder sends the active message. In particular, the seller is not allowed to run a mechanism only among bidders using their active message while excluding bidders sending their nonactive message. As the proof will show, this is because such exclusion is itself inconsistent with Myerson-optimality.

To see the logic, suppose the seller runs after a profile in which bidder $i$ sends $\bar n_i$. By Lemma \ref{lem:exclusion}, bidder $i$ must be excluded from the mechanism with probability 1. This implies that his value must be below $r-\epsilon$ for some $\epsilon>0$, which follows from the classical mechanism design result that virtual values are ``undistorted at the top" of the support of a distribution and distorted downwards at the bottom. Since the seller is running the auction, at least some of the remaining bidders must be sending their active message, which must be sent by a positive measure of types below $r$. But then, by the definition of $r,$ the seller's mechanism must be withholding the good from all bidders on a positive-probability event where the bidders using their active message have a value below $r$ and a bidder using his nonactive message has a positive virtual type. This contradicts the optimality of the seller's mechanism.

What remains to show of Theorem \ref{myerson_characterization} is that $r\geq p_F$, which follows directly from the fact that the optimal reserve in an auction with symmetrically distributed bidders is the monopoly price.
\begin{lemma}\label{lem:r_pF}
Take any symmetric equilibrium of the form described in lemma \ref{lem:pool_nonactive}. Then $r\geq p_F.$
\end{lemma}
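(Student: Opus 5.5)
We need to show $r\geq p_F$ in an equilibrium of the binary form of Lemma \ref{lem:pool_nonactive}. By Lemma \ref{lem_active_nonactive}, the seller runs a mechanism only after the all-active profile $(a_1,\dots,a_n)$. By symmetry, each active message is sent by the same upper set of types, so each posterior $F_{a_i}$ is the truncation $F|_{[\tau,\overline v]}$ for a common cutoff $\tau$. The plan is to compute the Myerson-optimal mechanism for these symmetric truncated posteriors and read off its reserve. Since the truncation is common to all bidders, the optimal mechanism is a second-price auction with a common reserve, and that reserve is $r$.

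\textbf{Step 1: virtual types under the truncation.} For $v\geq\tau$, the truncated CDF is $G(v)=(F(v)-F(\tau))/(1-F(\tau))$ with density $g(v)=f(v)/(1-F(\tau))$. Hence
\[
\varphi_G(v)=v-\frac{1-G(v)}{g(v)}=v-\frac{1-F(v)}{f(v)}=\varphi(v),
\]
so truncation from below leaves the virtual type unchanged on $[\tau,\overline v]$. In particular $\varphi_G$ is increasing by regularity, and no ironing is needed.

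\textbf{Step 2: the optimal allocation.} With symmetric regular posteriors, the Myerson-optimal DSIC mechanism allocates to the bidder with the highest virtual type whenever it is nonnegative. This is a second-price auction with reserve $\max\{\tau,\varphi^{-1}(0)\}=\max\{\tau,p_F\}$, using that $p_F$ is the zero of $\varphi$ for a regular $F$. If $\tau<p_F$, the seller's optimal mechanism after $(a_1,\dots,a_n)$ excludes all types below $p_F$, so $r=r_i(a_i)=p_F$. If $\tau\geq p_F$, every type in the support has $\varphi\geq 0$ and is served when highest. The reserve is then $\tau$, in the sense that $\bar x_i(v\mid a_i)>0$ exactly for $v\geq\tau$, since types just above $\tau$ win when all opponents have values below theirs. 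Thus $r=\tau\geq p_F$. In both cases $r=\max\{\tau,p_F\}\geq p_F$.

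\textbf{Main obstacle.} The delicate step is verifying that $r_i(a_i)$ is the reserve of this mechanism, given that the infimum is taken over the extended allocation rule defined via Assumption \ref{ass:posterior_support}. The extension may assign positive allocation to some $v<\tau$. I would handle this as follows. For $v<\tau$, a type that optimally reports some $\hat v\in[\tau,\overline v]$ obtains positive allocation only if it earns nonnegative utility. Since the payment is at least $\max\{\tau,p_F\}>v$ whenever the allocation is positive, such reports are strictly dominated by reporting a type that is never served, so the extension gives zero allocation below $\max\{\tau,p_F\}$. Tie-breaking at measure-zero points does not affect $r$.

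Finally, if the equilibrium is babbling with a mechanism run, the same computation with $\tau=\underline v$ gives $r=p_F$. This covers the case excluded at the start of the proof outline.
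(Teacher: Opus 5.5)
Your proof is correct and follows essentially the paper's route: after the all-active profile the seller faces symmetric bidders whose posterior coincides, up to rescaling, with $F$ on its upper tail, so the reserve of the optimal symmetric auction is the monopoly price of $F$ restricted to that tail. The paper argues by contradiction and uses only that all types above $r$ send the active message, so that $p(1-G(p))$ is proportional to $p(1-F(p))$ on $[r,\overline v]$; you instead use monotonicity to get the full truncation and compute $r=\max\{\tau,p_F\}$ directly from the truncation-invariance of $\varphi$, which gives a slightly sharper result and also checks the off-support extension that the paper leaves implicit.
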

The conclusion of Theorem \ref{myerson_characterization} now follows immediately from the previous two lemmas and the monotonicity of bidder strategies.

\section{Restricted seller}\label{restricted}
Most of the analysis so far has focused on the case where the seller is unrestricted at $t=2$ in terms of what mechanisms she can run. This section considers the case where the seller ex-ante commits to running an auction with a \emph{common reserve}, ruling out discriminatory auction formats. This restriction effectively introduces limited, but neither full nor zero, cross-period commitment on the part of the seller, which could come from legal constraints a seller faces in practice.\footnote{Even without such explicit institutional or legal constraints, such commitments can come from other dynamic considerations. For example, one possible microfoundation is the following. Suppose the seller is long-lived and faces new cohorts of bidders over time. If deviations from an announced restriction are publicly observed, then bidders in future periods can punish a deviation by coordinating on a collusive/seller-worst equilibrium. Such a punishment sustains restrictions on the seller's behavior whenever the gain from violating the restriction is smaller than the discounted continuation loss from future punishment.}

From this point through the extensions, I specialize to the case of two bidders for ease of exposition. It turns out that the seller can do better — and sometimes attain her commitment payoff — in this case. The next lemma establishes the equivalent of Theorem \ref{myerson_characterization} for the restricted regime.

\begin{lemma}\label{lem:restricted_threshold_wlog}
In the restricted regime, every symmetric equilibrium is outcome-equivalent to one in which bidders use threshold strategies.
\end{lemma}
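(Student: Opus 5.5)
The plan is to rerun the reduction used for Theorem \ref{myerson_characterization}, keeping the steps that never used Myerson-optimality and replacing the step that did. The replacement uses the fact that in the restricted regime the seller's only choices are whether to run and the distribution $H$ of the common reserve.

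\textbf{Step 1: a single active message.} Lemma \ref{lem:common_reserve} holds for any PEBE, so every active message of bidder $i$ gives the same individualized reserve $r$. The proof of Lemma \ref{lem:one_active} uses only Lemma \ref{lem:common_reserve}, monotonicity and the connectedness of $\supp(F_a)$, so it carries over verbatim: there is a unique active message $a_i$. Every type $v>r$ obtains positive utility from $a_i$, because he wins whenever the opponent's value and the realized reserve are below $v$, which happens with positive probability. So every type above $r$ must send $a_i$. Let $\tau:=\inf\{v:\sigma_i(v)=a_i\}\le r$. By monotonicity, nonactive messages are sent exactly by the types in $[\underline v,\tau)$.

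\textbf{Step 2: constraints on the seller after nonactive messages.} Fix a nonactive $n\in N_i$. If the seller runs after $(n,m_j)$ with reserve distribution $H$, every type sending $n$ gets zero utility. In a second-price auction with a common reserve, this forces $H$ to place all its mass weakly above $\sup\supp(F_n)$; otherwise some of those types would win at a price below their value. So bidder $i$ contributes no revenue, and the seller's revenue equals $R_j(H)$, the revenue from selling to $j$ alone at a reserve drawn from $H$ given $F_{m_j}$. Two consequences follow.
\begin{itemize}
\item After a profile in which both bidders send nonactive messages, revenue is $0<c$ (since $c>\underline v\ge 0$), so the seller does not run.
\item After $(n,a_j)$, the reserve is at least $r_j(a_j)=r\ge\tau$, because running with a lower reserve would lower bidder $j$'s individualized reserve below $r$.
\end{itemize}

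\textbf{Step 3: pooling the nonactive messages (the main obstacle).} We replace all of $N_i$ by one message $\bar n_i$, whose posterior is $F$ truncated to $[\underline v,\tau)$, a mixture of the $F_n$. The difficulty is that the seller's choice after $(n,a_j)$ might a priori depend on which $n$ was sent, and the pooled posterior might a priori make a different reserve optimal. I plan to resolve this using linearity of expected revenue in the posterior on bidder $i$.

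For any reserve $\rho\ge\tau$, the revenue at posterior $F_n$ equals $R_j(\rho)$ for every $n$, and hence also at the pooled posterior. Each equilibrium choice $\rho_n\ge r\ge\tau$ is optimal overall, so in particular it maximizes $R_j$ on $[\tau,\overline v]$. All runs therefore yield the same value $R^*:=\max_{\rho\ge\tau}R_j(\rho)$.

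Whether the seller runs cannot depend on $n$. Reserve $\rho_n$ is feasible after every nonactive message and yields $R^*$. So if the seller runs after some $n$ but not after some $n'$, then the maximal revenue after $n'$ is at least $R^*\ge c$, and not running is optimal only when $R^*=c$. In that tie case we select running; payoffs are unchanged, and bidders' outcomes are as in the profiles where the seller ran.

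Under the pooled posterior, the revenue from any $\rho<\tau$ is a mixture of the revenues $R(\rho;F_n)$. Each of these is at most the optimal revenue at $F_n$, which we have shown equals $R^*$ whenever the seller runs. Hence the mixture is at most $R^*$, and the original action, one of the $\rho_n$ achieving $R^*$, remains optimal after $(\bar n_i,a_j)$. In the no-run case, the same bound shows that not running remains optimal.

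\textbf{Step 4: verifying equilibrium.} By Steps 2 and 3, every bidder type faces the same distribution over outcomes as before the pooling, so bidder incentives are unchanged. Off-path beliefs can be taken to be $F$ truncated to $[r,\overline v]$, exactly as in the unrestricted case, which satisfies Proposition \ref{prop:beliefs_support}. The resulting equilibrium has bidders report only whether $v\ge\tau$, which is a threshold equilibrium.
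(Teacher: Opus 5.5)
Your argument follows the paper's route. Lemmas \ref{lem:common_reserve} and \ref{lem:one_active} carry over, a nonactive bidder is shown to be irrelevant to any auction the seller runs, and nonactive messages are pooled using linearity of revenue in that bidder's posterior.

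You are more explicit than the paper in two places. First, you rule out running after nonactive--nonactive profiles, which the paper's proof of this lemma never discusses. Second, your pooling step rests on the reserve being at least $\tau$, not on exclusion alone. This matters because in a second-price auction a bidder who never wins still sets the price when the reserve is below his value.

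The paper gets the bound $\rho\ge\tau$ after $(n,a_j)$ from seller optimality. With one posterior on $[\tau,\overline v]$ and the other below $\tau$, raising any reserve below $\tau$ up to $\tau$ keeps the sale and raises the price. You get it from bidder $j$'s individualized reserve instead. That argument is valid, but when $\tau=r$ it works only through the extended allocation of Assumption \ref{ass:posterior_support}: a reserve below $\tau$ would let types just below $\tau$ profitably mimic the report $\tau$. Either say this explicitly or use the dominance argument.

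Some statements need repair.

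\emph{Step 2, opening claim.} This is false as stated. After $(n,a_j)$, or after $(n,n')$ with $n'>n$, every type sending $n$ lies below every opponent type. So he never wins whatever $H$ is, and zero utility places no restriction on $H$. Your bullet conclusions survive, but for other reasons:
\begin{itemize}
\item For nonactive--nonactive profiles, apply the argument to the bidder with the higher interval. Alternatively, note that the proof of Lemma \ref{lem:exclusion} is regime-free, so neither bidder ever wins and revenue is $0$.
\item After $(n,a_j)$, the claim that bidder $i$ contributes no revenue must rest on the reserve bound, not on zero utility.
\end{itemize}

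\emph{Step 3, missing profile.} You should also check $(\bar n_i,\bar n_j)$. It follows from the same mixture argument, since revenue is bilinear in the two posteriors and no $(n,n')$ was profitable.

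\emph{The tie $R^*=c$.} This is the more substantive problem: your handling does not deliver outcome equivalence. Making the seller run after nonactive messages where she originally took the outside option changes bidder $j$'s allocation and payoff at those profiles. To match interim outcomes, the seller must run after $(\bar n_i,a_j)$ with the original conditional probability. Even then, ex-post outcomes no longer depend on which nonactive interval contains $v_i$. The paper's Lemma \ref{lem:pool_nonactive} passes over the same knife-edge with ``for some, and hence every,'' so this gap is shared rather than specific to your route.

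\emph{Off-path beliefs in Step 4.} No off-path beliefs are needed once the message space is $\{a_i,\bar n_i\}$. Moreover, $F|_{[r,\overline v]}$ is not safe in this regime. Take a run-iff-both-$H$ equilibrium with $\tau<p_F=r$ and $c<p_F$. A seller holding that belief would run after an off-path message paired with $\bar n_j$ and sell at $p_F$. Every type above $p_F$ would take that deviation.
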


Accordingly, it is without loss of outcomes to focus on binary strategies with threshold $\underline v <\tau<\overline v$:
\begin{equation}\label{binary_strats}
\sigma_i(v) = \begin{cases} 
H & \text{if } v \in [\tau, \overline v] \\
L  & \text{if } v \in [\underline v, \tau),
\end{cases}
\end{equation}
which will henceforth be called symmetric threshold strategies. This section derives some comparative statics and practical qualitative insights.

Let $\Rev_\tau (H,H)$ denote the seller's revenue from a second-price auction with an optimal common reserve when bidders are using symmetric threshold strategies with threshold $\tau$, and both bidders report $H.$ Similarly define $\Rev_\tau (H,L)$ and $\Rev_\tau(L,H).$ Let $r^*=\varphi^{-1}(c)$ denote the seller's optimal reserve coming from her full-commitment solution: That is, she commits to running the auction with reserve $r^*$ if and only if at least one bidder reports being above $r^*.$ The following result characterizes the symmetric threshold equilibria under the seller's restricted regime.

\begin{proposition}\label{prop:restricted_char}
    In any symmetric threshold equilibrium where the seller commits to using a common reserve at $t=2$, the following hold when $n=2$.
    \begin{enumerate}
        \item If an auction runs on-path, either: (i) an auction with reserve $r$ is run iff at least one bidder reports $H$, or (ii) an auction with reserve $r$ is run iff both bidders report $H.$ That is, at most one mechanism runs on path.
        \item If $\tau\leq p_F$, the reserve $r$ chosen must be $p_F$ if an auction runs on-path. If $\tau>p_F$, the reserve chosen in any auction run is $\tau$.
    \end{enumerate}
\end{proposition}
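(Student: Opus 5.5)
The plan is a direct computation of the seller's optimal common reserve after each message profile, followed by incentive arguments at the cutoff type $\tau$. Under symmetric threshold strategies the posteriors are $F_H=F|_{[\tau,\overline v]}$ and $F_L=F|_{[\underline v,\tau)}$.

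\textbf{Step 1 (which profiles can run).} Revenue from a second-price auction with a fixed common reserve is weakly increasing in each bidder's distribution in the sense of first-order stochastic dominance. Since $F_H$ dominates $F_L$, $\Rev_\tau(L,L)\le\Rev_\tau(H,L)=\Rev_\tau(L,H)\le\Rev_\tau(H,H)$ after optimizing the reserve. So the set of profiles after which the auction runs is upward closed. The only candidates are: nothing runs; only $(H,H)$ runs; all profiles except $(L,L)$ run; or all four run. In the last case I will show that incentive compatibility at $\tau$ forces the same reserve after every profile. The outcome is then that of a babbling equilibrium, which I fold into (i) or rule out as not being a genuine threshold equilibrium. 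This leaves exactly cases (i) and (ii).

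\textbf{Step 2 (optimal reserve after $(H,H)$).} For any reserve $r\le\tau$ both $H$-bidders clear it, so revenue equals that at $r=\tau$. For $r\ge\tau$, revenue with both values drawn from $F|_{[\tau,\overline v]}$ is the prior revenue with reserve $r$ restricted to the event $\{v_1,v_2\ge\tau\}$, rescaled by $(1-F(\tau))^{-2}$. Equivalently, virtual values under the truncation coincide with $\varphi$ on $[\tau,\overline v]$. By regularity the optimal reserve is therefore $\max\{p_F,\tau\}$, which is $p_F$ if $\tau\le p_F$ and $\tau$ if $\tau>p_F$.

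\textbf{Step 3 (optimal reserve after $(H,L)$).} Here virtual values are again $\varphi$ on $[\tau,\overline v]$ for the $H$-bidder and $\varphi_L(v)=v-(F(\tau)-F(v))/f(v)\ge\varphi(v)$ for the $L$-bidder. Two subcases follow.
\begin{itemize}
\item If $\tau>p_F$, I will show that lowering the reserve below $\tau$ to include the $L$-bidder cannot be optimal in equilibrium. Type $\tau$ sending $H$ earns zero in $(H,H)$ (reserve $\tau$) and earns $\tau-\max\{r,v_L\}$ in $(H,L)$ if $r<\tau$. Sending $L$ gives him a reserve of at most $r$ and winning probability against an $H$-opponent bounded by that of an $L$-opponent. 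Comparing these with the indifference $U(\tau\mid H)=U(\tau\mid L)$ and with the deviation incentives of types just above $\tau$ (an argument in the spirit of Lemma~\ref{lem:common_reserve}) should force $r\ge\tau$. Then only the $H$-bidder can win, and the single-bidder optimum on $F|_{[\tau,\overline v]}$ is again $\tau$.
\item If $\tau\le p_F$, the unconstrained optimum uses a common reserve at which both bidders' virtual values cross zero. Lemma~\ref{lem:common_reserve} requires the active message to face the same reserve across the run profiles, and Step~2 pins that reserve at $p_F$. From this I will argue the $(H,L)$ reserve must also equal $p_F$; otherwise some type near the lower reserve strictly prefers the message that triggers it.
\end{itemize}

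\textbf{Main obstacle.} I expect the hardest step to be the incentive argument in Step~3, and its analogue for $(L,L)$ in the all-run case. There the seller, optimizing against asymmetric posteriors, might want a reserve different from the $(H,H)$ reserve. One has to show that any such discrepancy breaks the cutoff type's indifference or gives nearby types a profitable deviation, combined with the seller's optimality given her posteriors. I would first try to bound $U(\tau\mid L)-U(\tau\mid H)$ explicitly in terms of the reserves after $(H,L)$, $(L,H)$ and $(L,L)$, and derive a contradiction whenever those reserves differ from $\max\{p_F,\tau\}$.
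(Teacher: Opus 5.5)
Your Steps 1 and 2 are fine. Step 2 is the paper's first point, and the FOSD monotonicity in Step 1 is a useful explicit addition that the paper leaves implicit. The gap is Step 3, which rests on a misconception about the restricted mechanism. A second-price auction with a common reserve allocates by \emph{values}, not virtual values. On path $v_L<\tau\le v_H$, so after $(H,L)$ the $L$-sender never wins, and no reserve ``includes'' him. For any $r<\tau$ the $H$-bidder therefore wins with probability one and pays $\max\{r,v_L\}<\tau$, which is strictly less than the sure payment $\tau$ he makes at $r=\tau$. Seller optimality alone thus forces $r\ge\tau$. At such reserves the $L$-bidder is irrelevant, and the problem is the posted-price problem on $F|_{[\tau,\overline v]}$, solved by $\max\{\tau,p_F\}$ exactly as in your Step 2. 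This is the paper's argument. The ``main obstacle'' you anticipate does not arise, and no incentive argument is needed to pin down the reserves.

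The incentive substitutes you propose do not close the gap.
\begin{itemize}
\item Your first bullet ($\tau>p_F$) can be completed once $(L,L)$ is excluded. A reserve $r<\tau$ after $(H,L)$ gives $U(\tau\mid H)>0=U(\tau\mid L)$, so types just below $\tau$ would send $H$. This works but is a detour.
\item Your second bullet ($\tau\le p_F$) fails. Lemma~\ref{lem:common_reserve} concerns the individualized reserve $r_i(a)$, averaged over opponents' messages, across distinct active messages of one bidder. With the single active message $H$, it says nothing about the reserve after $(H,H)$ versus after $(H,L)$.
\item A bidder cannot choose his opponent's message. So a reserve after $(H,L)$ in $[\tau,\overline v]$ other than $p_F$ gives no type a profitable deviation, since type $\tau$ earns zero under either message. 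Only the seller-optimality computation rules it out.
\item The all-run case is also misstated. Incentive compatibility cannot ``force the same reserve after every profile.'' The $(L,L)$ reserve must lie below $\tau$, because revenue $\ge c>0$ requires selling to a value below $\tau$, while the $(H,\cdot)$ reserves are at least $\tau$.
\item The correct conclusion, as in the paper's third point, is that the all-run case is impossible. Type $\tau$ earns zero from $H$ but strictly positive surplus from $L$ against an $L$-opponent, so types at and just above $\tau$ deviate to $L$. Folding this case into (i) would contradict the ``iff'' in (i).
\end{itemize}
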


In other words, the restriction to a common reserve price allows the auction to run much more often in equilibrium. In the unrestricted regime, the seller's ability to tailor mechanisms creates an incentive to exploit the specific information revealed by the message profile, like setting a higher reserve for a bidder revealing a high type. As shown in Theorem \ref{myerson_characterization}, this temptation limits the equilibrium to ones where the auction runs only when both bidders report a high valuation.

However, when the seller is restricted to a second-price auction with a common reserve, she loses the ability to exploit a high-type bidder with an individualized high reserve while offering a lower reserve to a low-type bidder. This strengthening of her limited-commitment power softens IC constraints for the bidders. High-value bidders are more willing to disclose their type because they know the reserve price they face cannot exceed the common reserve, which must also apply to the other bidder. Consequently, equilibria where the auction runs even if only one bidder reports $H$ become sustainable, significantly expanding the seller's attainable payoffs.

I now solve for optimal thresholds in both regimes and compare them.

Let $\Pi_1(\tau)$ denote the seller's payoff if she runs the auction iff at least one bidder reports $H$ when the threshold is $\tau$, and takes her outside option otherwise --- ignoring for the moment whether this behavior is sequentially rational. Similarly, let $\Pi_2(\tau)$ denote the seller's payoff if she runs the auction iff both bidders report $H.$ Let $G_1(v)=F(v)^2$ be the CDF of $v_{(1)}=\max\{v_1,v_2\}$ and $G_2(v)=1-(1-F(v))^2$ be the CDF of $v_{(2)}=\min\{v_1,v_2\}.$ Then
\[
\Pi_1(\tau)=\int_{\tau}^{\overline v}\max\{\varphi(x),0\}\,g_1(x)\,dx+cG_1(\tau),
\]
and
\[
\Pi_2(\tau)=\int_\tau^{\overline v}\mathbb{E}[\max\{\varphi(v_{(1)}),0\}\mid v_{(2)}=x]\,g_2(x)\,dx+cG_2(\tau).
\]
Let $r^*=\varphi^{-1}(c)$ denote the seller's optimal reserve coming from her full-commitment solution: That is, she commits to running the auction with reserve $r^*$ if and only if at least one bidder reports being above $r^*.$ Finally, let $\Rev_\tau^u(H,L)$ denote the seller's revenue from her optimal unrestricted mechanism after the message profile $(H,L)$ under threshold $\tau$.

The next proposition describes the seller-optimal threshold in the unrestricted regime.

\begin{proposition}\label{prop:unrestricted_optimal}
Under the unrestricted regime, the seller-optimal threshold equilibrium is supported by the highest threshold $\tau_U^*$ for which the seller does not want to run the auction under an $(H,L)$ message profile. Equivalently,
\[
\tau_U^*=\sup\{\tau:\Rev_\tau^u(H,L)\le c\}.
\]
\end{proposition}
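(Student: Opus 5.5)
By Theorem \ref{myerson_characterization}, every symmetric equilibrium in which a mechanism runs on path is outcome-equivalent to a threshold equilibrium with threshold $\tau$, in which the seller runs iff both bidders report $H$. So the seller's payoff in any such equilibrium is $\Pi_2(\tau)$ for some sustainable $\tau$. Before using this, I would confirm that the on-path mechanism after $(H,H)$ is indeed a second-price auction with reserve $\max\{\tau,p_F\}$, and that its revenue is given by the formula for $\Pi_2$. For $n=2$ only the profiles $(H,L)$, $(L,H)$, $(L,L)$ need ruling out. By the monotonicity of unrestricted revenue in posteriors noted in Remark \ref{rem:unrestricted_existence}, $\Rev^u_\tau(L,L)\le \Rev^u_\tau(H,L)$. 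Hence the equilibrium constraints reduce to two conditions:
\[
\Rev^u_\tau(H,L)\le c \quad\text{and}\quad \Rev_\tau(H,H)\ge c.
\]
Here $\Rev_\tau(H,H)$ is the unrestricted revenue after $(H,H)$, which is symmetric and so coincides with the common-reserve revenue. Sustainability by reasonable beliefs then follows from the discussion after Theorem \ref{myerson_characterization} and Lemma \ref{lem:offpath_threshold_beliefs}.

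The first step is to show that $\Pi_2$ is increasing in $\tau$ on the relevant range, so that the seller wants the largest sustainable threshold. Differentiating gives
\[
\Pi_2'(\tau)=g_2(\tau)\bigl(c-\E[\max\{\varphi(v_{(1)}),0\}\mid v_{(2)}=\tau]\bigr).
\]
This is nonnegative as long as the conditional expected revenue at the margin $v_{(2)}=\tau$ is at most $c$, that is, whenever the seller would not want to run when the lower bidder is exactly at $\tau$. I expect to relate this marginal quantity to $\Rev^u_\tau(H,L)$. The key observation is that after $(H,L)$ the seller could run a mechanism among the two bidders whose revenue weakly exceeds the revenue from the marginal event of the $(H,H)$ auction. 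Concretely, she can exclude the $L$ bidder or treat him as being near $\tau$. So $\Rev^u_\tau(H,L)\le c$ should imply that the marginal term is at most $c$, and therefore $\Pi_2'\ge0$ on the feasible set. If this comparison turns out to be delicate, I will fall back on a direct argument. Raising $\tau$ only moves realizations in which the lower value lies in $[\tau,\tau')$ into taking the outside option, and conditional on such a realization the continuation revenue is at most the revenue obtainable after $(H,L)$ at threshold $\tau'$. That revenue is at most $c$ when $\tau'$ is feasible.

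The second step is to describe the feasible set. I would show that $\tau\mapsto\Rev^u_\tau(H,L)$ is nondecreasing, since raising $\tau$ shifts the $H$ posterior up in the hazard-rate order and the $L$ posterior up as well, and then apply Hart--Reny monotonicity. Given this, $\{\tau:\Rev^u_\tau(H,L)\le c\}$ is an interval starting at $\underline v$. Continuity in $\tau$, which follows because $F$ is $C^1$ and posteriors vary continuously, makes the supremum $\tau_U^*$ attained. I would also check that $\Rev_{\tau}(H,H)\ge c$ holds at $\tau_U^*$: this holds because $(H,H)$ posteriors dominate $(H,L)$ posteriors, so $\Rev_{\tau}(H,H)\ge \Rev^u_{\tau}(H,L)$. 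If equality $\Rev^u_{\tau_U^*}(H,L)=c$ holds, this gives $\Rev_{\tau_U^*}(H,H)\ge c$ directly. Otherwise $\tau_U^*=\overline v$ is degenerate, and I will handle that edge case separately.

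The main obstacle is the first step: showing that the seller's payoff rises with $\tau$ exactly up to the $(H,L)$ constraint, using the unrestricted $(H,L)$ revenue rather than the symmetric marginal term. I would attack it first through the direct "moved realizations" comparison, since it avoids computing $\Rev^u$ explicitly.
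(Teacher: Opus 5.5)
Your proposal is correct and is essentially the paper's argument. You use the same derivative $\Pi_2'(\tau)=g_2(\tau)\bigl(c-\E[\max\{\varphi(v_{(1)}),0\}\mid v_{(2)}=\tau]\bigr)$, and you recognize the marginal term as the revenue the seller gets after $(H,L)$ by excluding the $L$ bidder, which is the paper's bound $\Rev_\tau(H,L)\le\Rev^u_\tau(H,L)$; hence every sustainable threshold lies where $\Pi_2$ is increasing. The only difference is that you get the interval structure of the feasible set from Hart--Reny monotonicity of $\Rev^u_\tau(H,L)$ in $\tau$, and you also check attainment and the $(H,H)$ constraint at $\tau_U^*$, whereas the paper simply invokes single-peakedness of $\Pi_2$; either route closes the argument.
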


\begin{proof}
By Theorem \ref{myerson_characterization}, every non-babbling equilibrium in the unrestricted regime is outcome-equivalent to a threshold equilibrium where the auction runs iff both bidders report $H$. Therefore the seller's payoff from any such equilibrium is $\Pi_2(\tau)$. Differentiating gives
\[
\Pi_2'(\tau)=g_2(\tau)\left(c-\mathbb{E}[\max\{\varphi(v_{(1)}),0\}\mid v_{(2)}=\tau]\right).
\]
Hence $\Pi_2$ is single-peaked. To locate the sustainable thresholds relative to the peak, observe that
\[
\mathbb{E}[\max\{\varphi(v_{(1)}),0\}\mid v_{(2)}=\tau]=\Rev_\tau(H,L).
\]
Since the unrestricted seller always has access to the best common-reserve auction, it must be that
\[
\Rev_\tau^u(H,L)\geq \Rev_\tau(H,L).
\]
Thus if $\tau$ is a sustainable threshold under the unrestricted regime, \[\Rev_\tau(H,L)\leq c.\] It follows that the sustainable unrestricted thresholds lie on the increasing part of $\Pi_2$, so the seller prefers the highest sustainable threshold.
\end{proof}

The next proposition then solves for the seller-optimal threshold under the restricted regime.
\begin{proposition}\label{prop:seller_optimal}
Under the restricted regime, seller's optimal threshold $\tau_S^*$ within the class of symmetric threshold equilibria is $\tau_S^*=r^*$ if $\Rev_{r^*}(L,L)\leq c.$ Otherwise, it is the highest threshold sustainable where an auction runs iff at least one bidder reports $H$. That is, $\Rev_{\tau_S^*}(L,L)=c.$
\end{proposition}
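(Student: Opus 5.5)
The plan is to reduce the restricted regime to a comparison of the two candidate payoff functions $\Pi_1$ and $\Pi_2$, using Lemma \ref{lem:restricted_threshold_wlog} and Proposition \ref{prop:restricted_char}. Any symmetric equilibrium is outcome-equivalent to a threshold equilibrium. On path, the auction runs either iff at least one bidder reports $H$, giving payoff $\Pi_1(\tau)$, or iff both report $H$, giving payoff $\Pi_2(\tau)$.

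The first step is to show that $\Pi_1$ is single-peaked with peak at $r^*$. Differentiating gives
\[
\Pi_1'(\tau)=g_1(\tau)\left(c-\max\{\varphi(\tau),0\}\right).
\]
Since $c>\underline v\ge 0$ and $\varphi$ is increasing, this derivative is positive below $r^*=\varphi^{-1}(c)$ and negative above it. Moreover, $\Pi_1(r^*)$ equals the full-commitment payoff, because the full-commitment rule is exactly ``run with reserve $r^*$ iff at least one bidder is above $r^*$.'' This value weakly dominates every $\Pi_2(\tau)$ and every $\Pi_1(\tau)$, since it is the commitment optimum and both of those are feasible under commitment.

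The second step is to characterize when threshold $\tau$ sustains the ``at least one $H$'' equilibrium. The conditions are:
\begin{itemize}
\item[(a)] the seller weakly prefers running after $(H,L)$ and $(H,H)$;
\item[(b)] she prefers the outside option after $(L,L)$, i.e.\ $\Rev_\tau(L,L)\le c$;
\item[(c)] bidder type $\tau$ is indifferent between $H$ and $L$.
\end{itemize}
For $\tau\ge p_F$, Proposition \ref{prop:restricted_char} gives common reserve $\tau$. Type $\tau$ then gets zero from both messages: after $H$ he faces reserve $\tau$, and after $L$ the seller does not run unless the opponent reports $H$, in which case the reserve is $\tau$ again. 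So (c) holds automatically.

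Condition (a) holds for $\tau\le r^*$, since $\Rev_\tau(H,L)\ge \tau$-type revenues and monotonicity in posteriors imply revenue at least $c$ whenever $\varphi(\tau)\le c$. I would verify this via $\Rev_\tau(H,L)\ge \E[\max\{\varphi(v_{(1)}),0\}\mid v_{(1)}\ge\tau]$ and $\varphi\le c$ on $[\tau,r^*]$. Off-path beliefs given by truncation to $[\tau,\overline v]$ satisfy Definition \ref{def:refinement}, as in the discussion after Proposition \ref{prop:beliefs_support}.

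Condition (b) is the binding one. $\Rev_\tau(L,L)$ is the optimal common-reserve revenue with both bidders distributed on $[\underline v,\tau)$, and it is increasing in $\tau$ by first-order stochastic dominance of the truncated posteriors. Hence the sustainable set is $\{\tau\le r^*:\Rev_\tau(L,L)\le c\}$, an interval with upper endpoint $\min\{r^*,\hat\tau\}$, where $\Rev_{\hat\tau}(L,L)=c$ by continuity.

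The conclusion then follows in two cases. If $\Rev_{r^*}(L,L)\le c$, then $r^*$ is sustainable and attains the commitment payoff, so $\tau_S^*=r^*$. Otherwise $\hat\tau<r^*$. Because $\Pi_1$ is increasing on $[\underline v,r^*]$, the best sustainable $\Pi_1$-equilibrium is at $\hat\tau$, where $\Rev_{\hat\tau}(L,L)=c$.

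The main obstacle is showing that, in this second case, no $\Pi_2$-equilibrium does better. I would try the pointwise comparison $\Pi_1(\tau)\ge\Pi_2(\tau')$ for sustainable $\tau'$. Note that a ``both $H$'' equilibrium at $\tau'$ requires $\Rev_{\tau'}(H,L)\le c$, which forces $\tau'$ low, on the increasing part of $\Pi_2$ as in the proof of Proposition \ref{prop:unrestricted_optimal}. I would then argue that $\Pi_2(\tau')\le \Pi_1(\tau')$ whenever $\Rev_{\tau'}(H,L)\ge$ the relevant outside-option comparison. If this comparison fails in general, I would instead compare $\Pi_2$ at its highest sustainable threshold directly with $\Pi_1(\hat\tau)$, using $\hat\tau\ge\tau'$.
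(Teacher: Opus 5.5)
Your skeleton matches the paper's: single-peakedness of $\Pi_1$ at $r^*$, the binding seller constraint $\Rev_\tau(L,L)\le c$, monotonicity of $\Rev_\tau(L,L)$, and the same two cases. However, the step you flag as the main obstacle is left open, and the first route you propose for it cannot work. The two seller rules differ only when exactly one bidder reports $H$: one runs and the other takes $c$. Hence
\[
\Pi_1(\tau)-\Pi_2(\tau)=2F(\tau)\bigl(1-F(\tau)\bigr)\bigl(\Rev_\tau(H,L)-c\bigr).
\]
This is $\le 0$ exactly on the thresholds where a both-$H$ equilibrium is sustainable ($\Rev_{\tau'}(H,L)\le c$). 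So the pointwise inequality $\Pi_2(\tau')\le\Pi_1(\tau')$ goes the wrong way there. Likewise, $\hat\tau\ge\tau'$ together with monotonicity of $\Pi_1$ tells you nothing about $\Pi_2(\tau')$.

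The missing idea, which is how the paper closes the proof, is to look at the peak of $\Pi_2$. Since $\Pi_2'(\tau)=g_2(\tau)\bigl(c-\Rev_\tau(H,L)\bigr)$ and $\Rev_\tau(H,L)$ is increasing, $\Pi_2$ is maximized over \emph{all} $\tau$ at the $\tau_2^*$ with $\Rev_{\tau_2^*}(H,L)=c$. At that point the difference above vanishes, so $\Pi_1(\tau_2^*)=\Pi_2(\tau_2^*)$. Moreover, $\tau_2^*$ is itself a sustainable one-$H$ threshold: the seller is indifferent after $(H,L)$, and $\Rev_{\tau_2^*}(L,L)\le\Rev_{\tau_2^*}(H,L)=c$. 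Hence $\Pi_2(\tau')\le\Pi_2(\tau_2^*)=\Pi_1(\tau_2^*)\le\Pi_1(\tau_S^*)$ for every $\tau'$.

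Separately, your condition (a) is misstated: it does not hold for all $\tau\le r^*$, and the appeal to ``$\varphi(\tau)\le c$'' points in the wrong direction. For $\tau\ge p_F$ one has $\Rev_\tau(H,L)=\E[\varphi(v)\mid v\ge\tau]=\tau$. In general, (a) is exactly $\Rev_\tau(H,L)\ge c$, which fails for every $\tau$ below $\tau_2^*$. In the paper's uniform example ($c=1$, $r^*=3/2$) it fails for every $\tau<1$. So the sustainable one-$H$ thresholds up to $r^*$ form the interval $[\tau_2^*,\min\{r^*,\hat\tau\}]$, not everything below $\min\{r^*,\hat\tau\}$.

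This does not move the optimum, because (a) holds at both candidates:
\begin{itemize}
\item at $r^*$, where $\Rev_{r^*}(H,L)=r^*>c$;
\item at $\hat\tau$, where $\Rev_{\hat\tau}(H,L)\ge\Rev_{\hat\tau}(L,L)=c$.
\end{itemize}
Correcting (a) also gives you the missing link: the lower endpoint of the one-$H$ interval is exactly the peak of $\Pi_2$, where the two payoff functions coincide.
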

In particular, the seller never prefers a lower threshold $\tau$ where both bidders would need to report $H$ in order to induce a mechanism even though such a threshold could result in profitable trades that would not occur under a higher threshold.

\begin{proof}
       Using the fact that the virtual type of a bidder with valuation $v\geq \tau$ with respect to the truncation $F\mid_{v\geq \tau}$ is exactly $\varphi(v)$, taking derivatives yields, \begin{equation}\label{eqn:pi_1}
    \Pi_1'(\tau)=g_1(\tau)\left(c-\max\{\varphi(\tau),0\}\right),
    \end{equation} and \begin{equation}\label{eqn:pi_2}
        \Pi_2'(\tau)=g_2(\tau)\left(c-\mathbb{E}[\max\{\varphi(v_{(1)}),0\}\mid v_{(2)}=\tau]\right).
    \end{equation}
    Therefore $\Pi_1$ and $\Pi_2$ are single-peaked and maximized at $\tau_1^*=r^*$ and $\tau_2^*$ given by \begin{equation}\label{eq:tau_2constraint}
        c=\mathbb{E}[\max\{\varphi(v_{(1)}),0\}\mid v_{(2)}=\tau_2^*], 
    \end{equation}
    respectively.
    
    For threshold $\tau$ and seller behavior consistent with $\Pi_1$ to be part of an equilibrium, it must be sequentially rational for the seller to not run the auction when she receives the message profile $(L, L)$. This imposes the constraint:
\begin{equation}\label{seller_IC_restricted}
        \Rev_\tau(L, L) \le c.
\end{equation}

Thus if $\Rev_{r^*}(L, L) \le c$, the threshold $\tau_1 = r^*$ is sustainable in equilibrium, and is optimal among equilibria where the seller runs an auction iff at least one bidder reports $H$ by (\ref{eqn:pi_1}). Since $\tau_1 = r^*$ achieves the full-commitment payoff, it is necessarily the optimal equilibrium threshold $\tau_S^* = r^*$ among all symmetric equilibria.

On the other hand, if $\Rev_{r^*}(L, L) > c$, the threshold $\tau = r^*$ is not sustainable because the seller would prefer to deviate and run the auction even when both bidders report $L$. To restore equilibrium, $\tau$ must be lowered to satisfy the constraint, since $\Rev_\tau(L, L)$ is strictly increasing in $\tau$. Since $\Pi_1(\tau)$ is single-peaked in $\tau$, the seller's preferred threshold among equilibria where she runs an auction whenever one bidder reports $H$ is the highest possible $\tau$ that satisfies (\ref{seller_IC_restricted}), i.e. $\tau_S^*$ satisfying $\Rev_{\tau_S^*}(L, L) = c$. To see that such a $\tau_S^*$ is actually optimal, it remains to check that it does better than any equilibrium where the seller runs an auction iff both bidders report $H.$ By above, the optimal threshold under this set of equilibria is $\tau_2^*$ given by (\ref{eq:tau_2constraint}). Now observe that \[c=\mathbb{E}[\max\{\varphi(v_{(1)}),0\}\mid v_{(2)}=\tau_2^*]=\Rev_{\tau_2^*}(H,L),\] so the threshold $\tau_2^*$ also sustains an equilibrium where the seller runs the auction iff at least one bidder reports $H.$ Therefore $\tau_S^*$ is the optimal equilibrium threshold for the seller.
\end{proof}

The three cutoff values $\tau_U^*$, $\tau_2^*$, and $\tau_S^*$ give a simple comparison of the two regimes.

\begin{proposition}\label{prop:compare_thresholds}
Every threshold equilibrium under the unrestricted regime is also a threshold equilibrium under the restricted regime. Moreover,
\[
\tau_U^*<\tau_2^*\le \tau_S^*.
\]
Thus the restricted regime strictly expands the set of sustainable thresholds, and the seller benefits from the restriction. In particular, every threshold sustainable under the unrestricted regime is also sustainable under the restricted regime. Moreover, the restricted regime admits strictly higher thresholds: first within the class of equilibria where the auction runs iff both bidders report $H$, and then beyond that within the class where the auction runs iff at least one bidder reports $H$. Thus there is a strong sense in which the restriction to a common reserve improves the seller's payoff.
\end{proposition}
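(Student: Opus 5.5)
We prove Proposition \ref{prop:compare_thresholds} in three steps: the inclusion of equilibrium sets, the strict inequality $\tau_U^*<\tau_2^*$, and the weak inequality $\tau_2^*\le\tau_S^*$.

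\textbf{Inclusion.} By Theorem \ref{myerson_characterization}, any non-babbling unrestricted threshold equilibrium with cutoff $\tau$ runs the auction iff both bidders report $H$. After $(H,H)$ the posteriors are symmetric, so Myerson's optimal mechanism is a second-price auction with a common reserve. That mechanism is admissible in the restricted regime, so the seller's on-path behavior after $(H,H)$ is unchanged.

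It remains to check two things in the restricted regime.
\begin{enumerate}
\item The seller still weakly prefers $c$ after $(H,L)$ and $(L,L)$. Since $\Rev_\tau(\cdot)\le \Rev^u_\tau(\cdot)$, the restricted seller's revenue is weakly lower, so $c$ remains a best response. For $(L,L)$, use Remark \ref{rem:unrestricted_existence} (monotonicity in posteriors) together with the same inequality.
\item Bidder incentives are unchanged. On path, outcomes are identical. Off path, the truncated-prior beliefs $F|_{v\ge r}$ still deter deviations, since the restricted seller's best mechanism under such beliefs gives types below $r$ nothing, and types above $r$ already send $H$.
\end{enumerate}
Hence every unrestricted threshold equilibrium is a restricted one.

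\textbf{Strict inequality $\tau_U^*<\tau_2^*$.} By the proofs of Propositions \ref{prop:unrestricted_optimal} and \ref{prop:seller_optimal}, $\tau_2^*$ solves $\Rev_{\tau}(H,L)=c$, while $\tau_U^*=\sup\{\tau:\Rev^u_\tau(H,L)\le c\}$. The argument has two parts.
\begin{enumerate}
\item Show that both revenue functions are continuous and strictly increasing in $\tau$. Raising $\tau$ first-order stochastically shifts both posteriors upward, so revenue rises for both the unrestricted and the restricted seller.
\item Show the strict gap $\Rev^u_\tau(H,L)>\Rev_\tau(H,L)$ for $\tau$ in the relevant range. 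After $(H,L)$ the posteriors are asymmetric, since $F|_{[\tau,\overline v]}$ and $F|_{[\underline v,\tau)}$ have different virtual value functions. On a positive-measure set of type profiles, the Myerson allocation (highest virtual value, with individualized reserves) differs from every common-reserve second-price allocation. The motivating example exhibits this: the $L$-bidder wins against $H$-types in $[\tau,(\tau+2)/2)$. By uniqueness of the optimal allocation a.e., the inequality is strict.
\end{enumerate}
Evaluating at $\tau_2^*$ then gives $\Rev^u_{\tau_2^*}(H,L)>c$. By continuity and monotonicity, $\tau_U^*$ is the root of $\Rev^u_\tau(H,L)=c$ and lies strictly below $\tau_2^*$.

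\textbf{Weak inequality $\tau_2^*\le\tau_S^*$.} The proof of Proposition \ref{prop:seller_optimal} already shows that $\tau_2^*$ sustains an equilibrium of the ``at least one $H$'' type. Also, $\tau_S^*$ is the highest threshold within that class. Therefore $\tau_2^*\le\tau_S^*$.

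Because $\Pi_2$ is single-peaked at $\tau_2^*$ and $\tau_U^*<\tau_2^*$, the seller's payoff is strictly higher under restriction. This gives the welfare claim.

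\textbf{Main obstacle.} I expect the hardest step to be establishing the strict gap $\Rev^u_\tau>\Rev_\tau$ uniformly enough, and the strict monotonicity of $\Rev^u_\tau(H,L)$, so that $\tau_U^*$ is well defined. My plan is to argue directly via virtual values. Near the top of the $L$-support, the virtual value $\varphi_L(v)$ tends to $v$, close to $\tau$. Meanwhile $H$-types just above $\tau$ have virtual value $\varphi(v)<v$ whenever $v<\overline v$. So any common-reserve auction misallocates on a positive-measure neighborhood, and that misallocation costs strictly positive virtual surplus.
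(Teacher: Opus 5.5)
Your proposal follows the route the paper takes implicitly. The paper gives no separate proof; the comparison is assembled from the proofs of Propositions \ref{prop:unrestricted_optimal} and \ref{prop:seller_optimal}. The shared ingredients are: inclusion because $\Rev_\tau\le\Rev^u_\tau$ while the $(H,H)$ optimum is already a common-reserve auction; $\tau_2^*$ as the root of $\Rev_\tau(H,L)=c$; and the observation that $\tau_2^*$ also sustains the one-$H$ class. Your strict-gap step is a genuine addition, because the paper only records $\Rev^u_\tau(H,L)\ge\Rev_\tau(H,L)$, which by itself yields only $\tau_U^*\le\tau_2^*$. Your argument is that the $L$-bidder's ironed virtual value tends to $\tau>\varphi(\tau)$ at the top of its support, so every common-reserve auction forgoes virtual surplus on a positive-measure set. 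Combined with continuity of $\Rev^u_\tau(H,L)$, or a locally uniform gap near $\tau_2^*$, this is what actually delivers strictness.

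One step is mis-justified: $\tau_S^*$ is not in general the highest threshold in the one-$H$ class. When $\Rev_{r^*}(L,L)\le c$, Proposition \ref{prop:seller_optimal} sets $\tau_S^*=r^*$ because $\Pi_1$ peaks there, not because $r^*$ is maximal. If the inequality is strict, thresholds slightly above $r^*$ still satisfy $\Rev_\tau(L,L)\le c\le\Rev_\tau(H,L)$ and are sustainable; in the motivating example, $\Rev_{3/2}(L,L)=27/32<1$. So your deduction $\tau_2^*\le\tau_S^*$ only covers the branch $\Rev_{r^*}(L,L)>c$.

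In the other branch, argue directly. For $\tau\ge p_F$ one has $\Rev_\tau(H,L)=\E[\varphi(v)\mid v\ge\tau]=\tau$. Since $c>0$, you have $r^*>p_F$, so $\Rev_{r^*}(H,L)=r^*>\varphi(r^*)=c$. Strict monotonicity of $\Rev_\tau(H,L)$ then gives $\tau_2^*<r^*=\tau_S^*$.
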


While Corollary \ref{cor:full_commit} established that an unrestricted seller \emph{always} earns strictly less than her full-commitment payoff, Proposition \ref{prop:seller_optimal} shows that committing to a common reserve format allows her to achieve her full-commitment payoff as long as the outside option $c$ is sufficiently valuable to dissuade the seller from running the auction when both bidders report $L$. However, the seller's limited commitment still binds in some cases: if $c$ is not valuable enough to sustain the full-commitment solution as an equilibrium, the threshold $\tau$ must be lowered below $r^*$ to deter deviations by the seller. It is easy to see that as the value of the outside option decreases, the seller's optimal threshold $\tau_S^*$ also decreases. Furthermore, the following holds.
\begin{proposition}
\label{prop:comparative_statics}
Consider the seller-optimal symmetric threshold equilibrium $\tau_S^*$ in the restricted regime. As the value of the outside option $c$ decreases:
\begin{enumerate}
    \item The equilibrium threshold $\tau_S^*$ strictly decreases, and the seller's payoff decreases.
    \item The ex-ante probability that the seller runs an auction strictly increases.
    \item The reserve price $r(\tau_S^*)=\max\{\tau_S^*,p_F\}$ strictly decreases as long as $\tau_S^*>p_F$, and remains constant at $p_F$ once $\tau_S^*\leq p_F$.
    \item Conditional on the seller running an auction, the probability that the auction results in no trade is constant while $\tau_S^*\geq p_F$, and strictly increases once $\tau_S^*<p_F$.
\end{enumerate}
\end{proposition}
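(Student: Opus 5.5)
The plan is to treat $\tau_S^*$ as an implicit function of $c$ and read off each claim from the defining conditions in Proposition \ref{prop:seller_optimal}. There are two regimes. When $\Rev_{r^*}(L,L)\le c$ we have $\tau_S^*=r^*=\varphi^{-1}(c)$. Otherwise $\tau_S^*$ solves $\Rev_{\tau}(L,L)=c$.

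\textbf{Step 1 (claim 1).} In the first regime, $r^*=\varphi^{-1}(c)$ is strictly increasing in $c$ because $\varphi$ is increasing by regularity. In the second regime, the proof of Proposition \ref{prop:seller_optimal} already uses that $\Rev_\tau(L,L)$ is continuous and strictly increasing in $\tau$. Here $\Rev_\tau(L,L)$ is the optimal common-reserve revenue from two draws of $F$ truncated to $[\underline v,\tau)$. As $\tau$ rises these truncations rise in first-order stochastic dominance, and the added mass is strict. So the solution of $\Rev_\tau(L,L)=c$ is strictly increasing in $c$.

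Across the two regimes, I need the switching point to be consistent. The map $c\mapsto \Rev_{r^*(c)}(L,L)-c$ should cross zero once, so that on $(c_0,\infty)$ we are in the commitment regime and below $c_0$ in the binding regime, with $\tau_S^*$ continuous at $c_0$. If a single crossing cannot be shown, I will instead write $\tau_S^*(c)=\min\{r^*(c),\tau_{LL}(c)\}$, where $\tau_{LL}$ solves $\Rev_\tau(L,L)=c$. This is justified because in both cases the optimum is the highest sustainable $\tau$ up to $r^*$. A minimum of two strictly increasing functions is strictly increasing, which avoids the crossing issue.

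For the payoff, I will use an envelope argument. The seller's optimal payoff is $\Pi_1(\tau_S^*(c);c)=\int_{\tau}^{\overline v}\max\{\varphi,0\}g_1+cG_1(\tau)$. By the envelope reasoning, or simply because a lower $c$ weakly shrinks the feasible set of $\tau$ and lowers the objective pointwise, the payoff decreases. Concretely, $\Pi_1$ is increasing in $c$ for fixed $\tau$, and the feasible set $\{\tau:\Rev_\tau(L,L)\le c\}$ shrinks as $c$ falls.

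\textbf{Step 2 (claims 2 and 3).} The auction runs iff at least one bidder reports $H$, so the probability of running is $1-F(\tau_S^*)^2$. This is strictly decreasing in $\tau_S^*$ because $F$ has full support. By Step 1 it therefore strictly increases as $c$ falls. The reserve $\max\{\tau_S^*,p_F\}$ follows from Proposition \ref{prop:restricted_char}(2), and its monotonicity follows directly from Step 1.

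\textbf{Step 3 (claim 4).} Conditional on running, no trade occurs iff both values are below the reserve $r=\max\{\tau,p_F\}$. Since we condition on $\max\{v_1,v_2\}\ge\tau$, the probability is
\[
\frac{\max\{F(r)^2-F(\tau)^2,0\}}{1-F(\tau)^2}.
\]
When $\tau\ge p_F$ we have $r=\tau$, so the probability is $0$, which is constant. When $\tau<p_F$ it equals $(F(p_F)^2-F(\tau)^2)/(1-F(\tau)^2)$. Its derivative in $\tau$ has the sign of $-2Ff\,(1-F(p_F)^2)<0$, so it is strictly decreasing in $\tau$ and strictly increasing as $c$ falls.

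The main obstacle is Step 1: establishing strict monotonicity and continuity of $\tau\mapsto\Rev_\tau(L,L)$, including across the kink where the optimal reserve for the truncated distribution hits the truncation point. It is also where the two regimes must be glued together. I would handle this by writing the optimal revenue as a maximum over reserves of a jointly continuous function, and arguing strict increase via the strict first-order stochastic dominance shift of the truncated prior.
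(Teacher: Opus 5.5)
Your proposal is correct and takes the paper's approach. The paper gives no separate proof and treats these comparative statics as immediate from the characterization in Proposition \ref{prop:seller_optimal} ($\tau_S^*=r^*$ or $\Rev_{\tau_S^*}(L,L)=c$); your steps on the run probability $1-F(\tau_S^*)^2$, the reserve $\max\{\tau_S^*,p_F\}$, the conditional no-trade probability, and the payoff are the details that reading requires. Writing $\tau_S^*(c)=\min\{\varphi^{-1}(c),\tau_{LL}(c)\}$, valid because $\Rev_\tau(L,L)$ is strictly increasing, is a clean way to join the two cases without any single-crossing argument.
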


In particular, as $c$ falls, $\tau_S^*$ falls below $p_F$, increasing the mass of participating bidder types below the reserve.

The optimal equilibrium thresholds can be used to formalize the value of the communication stage to the seller. Without a communication stage, the seller cannot condition her decision on bidder messages, so she either runs the optimal auction under the prior or takes the outside option. Let
\[
R_0:=\int_{\underline v}^{\overline v}\max\{\varphi(x),0\}\,g_1(x)\,dx
\]
denote the seller's expected revenue from running the optimal auction under the prior. Her payoff without a communication stage is therefore
\[
\Pi^N(c)=\max\{R_0,c\}.
\]
Let $R_L(\tau):=\Rev_\tau(L,L)$, and write the seller-optimal threshold in the restricted regime as a function of $c$:
\[
\tau_S(c)=
\begin{cases}
\varphi^{-1}(c) & \text{if } R_L(\varphi^{-1}(c))\leq c,\\
R_L^{-1}(c) & \text{if } R_L(\varphi^{-1}(c))>c.
\end{cases}
\]
The seller's payoff with communication is
\[
\Pi^C(c)=\Pi_1(\tau_S(c);c),
\]
where
\[
\Pi_1(\tau;c)=\int_{\tau}^{\overline v}\max\{\varphi(x),0\}\,g_1(x)\,dx+cG_1(\tau).
\]
Thus the value of the communication stage to the seller is
\[
\Delta(c):=\Pi^C(c)-\Pi^N(c).
\]

The analysis earlier in the section immediately yields the following proposition.

\begin{proposition}\label{prop:value_communication}
The value of the communication stage is given by
\[
\Delta(c)=
\begin{cases}
\Pi_1(R_L^{-1}(c);c)-R_0 
& \text{if } R_L(\varphi^{-1}(c))>c,\\
\Pi_1(\varphi^{-1}(c);c)-R_0 
& \text{if } R_L(\varphi^{-1}(c))\leq c\leq R_0,\\
\Pi_1(\varphi^{-1}(c);c)-c 
& \text{if } c\geq R_0.
\end{cases}
\]
Moreover, $\Delta(c)$ is strictly increasing for $c<R_0$ and strictly decreasing for $c>R_0$. Hence the communication stage is most valuable when
\[
c=R_0,
\]
that is, when the seller is exactly indifferent under the prior between running the auction and taking the outside option.
\end{proposition}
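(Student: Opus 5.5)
The plan is to obtain the closed form for $\Delta(c)$ by combining Proposition \ref{prop:seller_optimal}, which gives $\Pi^C(c)=\Pi_1(\tau_S(c);c)$, with $\Pi^N(c)=\max\{R_0,c\}$. The only point to check for the three-case formula is that the two case splits nest consistently. In the first case, $R_L(\varphi^{-1}(c))>c$, and I would show that $c<R_0$ there. Indeed, $R_L(\tau)$ is the best revenue from two bidders drawn from $F$ truncated to $[\underline v,\tau)$. Under the maintained condition $\tau_S<\overline v$, this is strictly below the best revenue under the untruncated prior, which is $R_0$; this uses the fact that revenue is monotone in the bidders' distributions in the first-order stochastic dominance sense, as invoked in Remark \ref{rem:unrestricted_existence}. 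Hence $c<R_L(\varphi^{-1}(c))<R_0$, so $\Pi^N=R_0$. The second and third cases then split the remaining region at $c=R_0$, which gives the displayed formula.

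For monotonicity, I would differentiate $\Pi^C$ in $c$ and apply the envelope theorem. On the region where $\tau_S=\varphi^{-1}(c)$, the threshold maximizes $\Pi_1(\cdot;c)$ by (\ref{eqn:pi_1}). Therefore
\[
\frac{d}{dc}\Pi^C(c)=\partial_c\Pi_1(\tau;c)\big|_{\tau=\tau_S}=G_1(\tau_S)=F(\tau_S)^2\in(0,1).
\]
On the region $R_L(\varphi^{-1}(c))>c$, the threshold $\tau_S=R_L^{-1}(c)$ lies strictly below $r^*=\varphi^{-1}(c)$, and it is strictly increasing in $c$ because $R_L$ is strictly increasing (as used in the proof of Proposition \ref{prop:seller_optimal}). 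Since $\Pi_1$ is single-peaked with peak at $r^*$, we have $\partial_\tau\Pi_1(\tau_S;c)=g_1(\tau_S)(c-\max\{\varphi(\tau_S),0\})>0$ there. This gives
\[
\frac{d}{dc}\Pi^C=G_1(\tau_S)+\partial_\tau\Pi_1\cdot\frac{d\tau_S}{dc}>0.
\]
For $c<R_0$ we have $\Delta=\Pi^C-R_0$, so $\Delta$ is strictly increasing there. For $c>R_0$ we have $\Delta'=F(\tau_S)^2-1<0$ as long as $\tau_S<\overline v$, so $\Delta$ is strictly decreasing. I also need $\Pi^C$ to be continuous across the boundary between the two regimes. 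This holds because at the switching point $R_L(\varphi^{-1}(c))=c$ both formulas give $\tau_S=\varphi^{-1}(c)$; at that point only one-sided derivatives may differ, which is harmless for strict monotonicity. Continuity of $\Delta$ at $R_0$ is immediate because $\max\{R_0,c\}$ is continuous.

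The main obstacle is the monotone-comparative-statics input on the constrained branch: showing that $R_L$ is continuous and strictly increasing, so that $R_L^{-1}$ is well defined and differentiable almost everywhere. It also requires that the constrained branch, if it meets the region at all, meets it only for $c<R_0$. For the first property I would use the envelope theorem on the optimal-reserve problem for the truncated distribution $F|_{[\underline v,\tau)}$, together with strict stochastic dominance as $\tau$ rises. If differentiability fails, the same conclusion follows by comparing $\Pi_1(\tau_S(c');c')\ge\Pi_1(\tau_S(c);c')>\Pi_1(\tau_S(c);c)$ for $c'>c$ directly. This uses the fact that $\tau_S(c')\ge\tau_S(c)$ both lie on the increasing side of $\Pi_1(\cdot;c')$, and it avoids derivatives entirely. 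The maximizer $c=R_0$ then follows from the two monotonicity statements.
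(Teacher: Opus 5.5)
Your proposal is correct and is essentially the argument the paper leaves implicit when it says the result follows immediately from the preceding analysis: you combine $\Pi^C(c)=\Pi_1(\tau_S(c);c)$ from Proposition \ref{prop:seller_optimal} with $\Pi^N(c)=\max\{R_0,c\}$, check that the constrained branch lies in $\{c<R_0\}$, and use the envelope slope $G_1(\tau_S)=F(\tau_S)^2\in(0,1)$ together with the nonnegative indirect effect through the weakly increasing threshold $\tau_S(c)$. One small tidy-up: the case nesting needs only the weak bound $R_L(\varphi^{-1}(c))\le R_0$, since $c<R_0$ already follows strictly from $R_L(\varphi^{-1}(c))>c$, and the point that must lie below $\overline v$ is $\varphi^{-1}(c)$ rather than $\tau_S$.
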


The previous two propositions help connect the model to documented empirical patterns in mergers and acquisitions. In particular, \cite{MS_18} examine the differences between \emph{target-initiated} and \emph{bidder-initiated} deals. Target-initiated deals begin with the firm approaching potential bidders, while bidder-initiated deals begin with an acquirer approaching the firm with an indication of interest or initial offer. \cite{MS_18} find that target-initiated deals are more common among firms with lower profitability and poor recent stock returns, which in the model corresponds to a lower outside option $c$. Interpreted through the lens of the model, the target approaches potential bidders because her outside option is not very attractive. In the notation above, this corresponds to the region $c<R_0$, where the seller would run a sale process even under the prior. In this region, the communication stage does not primarily matter because it changes whether the seller sells at all. Rather, it helps the seller because it lets her condition the threshold and reserve on pre-sale information.

On the other hand, bidder-initiated deals correspond to cases where the seller's outside option value is high, which can be interpreted as maintaining the ``status quo" option. In this parameter region, $c>R_0$, the seller would not run an auction based on the prior. A bidder's initial approach can then be interpreted as a favorable communication event that induces the seller to run a sale process she otherwise would not have started. Since, as $c$ falls, Proposition \ref{prop:comparative_statics} says that the equilibrium threshold and reserve fall with $c$, the model gives a clear mechanism through which target-initiated deals can have lower final deal prices compared to bidder-initiated deals, as documented by \cite{MS_18}. Once $\tau_S(c)<p_F$, the reserve remains fixed at $p_F$ while the distribution of participating bidders shifts downward, so conditional on initiating a sale process, the probability that the auction fails increases, consistent with the lower completion rates in target-initiated processes documented by \cite{liu_officer_2019}.

\section{Extensions and robustness}\label{extensions}
Thus far, this paper has focused on the case where bidder messaging is completely costless, and bidders can decide whether or not to participate in a seller's sequentially-optimal mechanism after it is chosen. However, the setup and equilibria analyzed raise several robustness concerns. First, why would a bidder choose to participate in this communication stage at all if sending the messages were costly? Indeed, if there was any positive cost $\epsilon$ to sending the communication messages at $t=1$, the equilibria described in the previous sections would unravel if bidders are given the decision to participate or not in the communication stage and there are no other changes to the model. This is because a seller maximizing revenue after receiving messages must leave a positive measure of bidder types sending those messages interim utility below $\epsilon$, and therefore these bidder types would not participate. Additionally, the threshold equilibria described are sustained by complete indifference between the types below the threshold regarding whether they should report truthfully or not. Motivated by these concerns, this section shows that some of the equilibria described in the previous sections arise as the limit of equilibria of some natural perturbations of the model, but that some thresholds are more fragile than others, and that the robustness of a threshold equilibrium depends on which seller regime is being analyzed. As in the previous section, I focus on the $n=2$ case.

\subsection{Entry costs}
The first perturbation of the model treats first-period communication as costly entry into the sale process.  Sending any message in $M_i$ requires the bidder to post a refundable deposit of $\epsilon>0$ to the seller, while sending $\varnothing$ is costless. If a bidder sends $\varnothing$, he cannot participate in any mechanism the seller chooses. If the seller takes the outside option, all deposits are refunded. This extension captures a small entry-decision friction, with the idea that in some applications a bidder who indicates serious interest may need to enter a formal process at some small cost.

More concretely, after observing $(m_1,m_2)$, the seller can either take the outside option and refund the deposits or choose a mechanism within the unrestricted or restricted class described previously, subject to the following modified IIR constraint. If bidder $i$ made a deposit by sending a message $m_i\in M_i$, and the seller runs a mechanism after $(m_i,m_{-i})$, then
\[
    u_i(v_i\mid m_i,m_{-i})\geq \epsilon
    \qquad
    \text{for all }v_i\in\supp(F_{m_i}).
\]
The modified IIR constraint captures the idea that the seller must leave enough surplus to even the lowest participating bidder types to induce them to post the deposit. In some profiles, this required surplus can be delivered by lowering the reserve price slightly. In other profiles, it requires a literal cash subsidy to bidders at the mechanism stage.\footnote{As with the common-reserve restriction, the required limited commitment on the part of the seller can be given a repeated-game interpretation. If the seller is long-lived and deviations from promised entry terms are publicly observed, future bidder pools can punish such deviations by coordinating on seller-worst equilibria. The one-period gain from violating the modified IIR floor is of order $\epsilon$, so the commitment required vanishes as $\epsilon\to0$. Note also that in the restricted regime, the common-reserve restriction applies to the allocation rule, while transfers may include fixed rebates or purchase-price credits used to satisfy the modified IIR constraint.} Let $\Gamma_\epsilon$ denote the perturbed game.

Note that the perturbation does not change the seller's continuation decisions, but rather bidders' incentives to send costly messages.

\begin{proposition}\label{prop:entry_costs}
Let $\sigma$ be a nondegenerate symmetric threshold equilibrium of the baseline game with cutoff $\tau$.

If $\sigma$ is a restricted-regime equilibrium in which an auction runs iff at least one bidder reports being weakly above $\tau$, then for all sufficiently small $\epsilon>0$, there is an equilibrium $\sigma_\epsilon$ of $\Gamma_\epsilon$ such that $\lim_{\epsilon\to0}\sigma_\epsilon=\sigma$, after identifying $\varnothing$ with the low message. If $\tau\geq p_F$, the mechanisms can be chosen so that every type below $\tau$ strictly prefers $\varnothing$. If $\tau<p_F$, these low types are only weakly willing to use $\varnothing$.

If $\sigma$ is an unrestricted-regime equilibrium, or a restricted-regime equilibrium in which an auction runs iff both bidders report being weakly above $\tau$, then for all sufficiently small $\epsilon>0$, there is an equilibrium $\sigma_\epsilon$ of $\Gamma_\epsilon$ such that $\lim_{\epsilon\to0}\sigma_\epsilon=\sigma$, after identifying $\varnothing$ with the low message. In these equilibria, low types are only weakly willing to use $\varnothing$.
\end{proposition}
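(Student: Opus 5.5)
The plan is an explicit construction. Fix the baseline threshold equilibrium $\sigma$ with cutoff $\tau$. In $\Gamma_\epsilon$, let types $v\geq\tau$ post the deposit and send the baseline high message $H$, and let types $v<\tau$ send $\varnothing$. The seller's on-path posteriors are then exactly the baseline ones: $F|_{[\tau,\overline v]}$ after $H$ and $F|_{[\underline v,\tau)}$ after $\varnothing$. Since a bidder sending $\varnothing$ cannot participate, the profile $(\varnothing,\cdot)$ plays the role of the baseline $L$ message. Off path, the seller's belief after any other deposit-backed message is the truncation $F|_{[\tau,\overline v]}$ (or $F|_{[r,\overline v]}$), as in Lemma \ref{lem:offpath_threshold_beliefs}, so deviations within $M_i$ cannot help. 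Checking equilibrium then has two parts: seller sequential rationality, and bidder incentives to enter or stay out.

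\textbf{Seller side.} At every profile where the seller ran in the baseline, I would take the baseline optimal mechanism and perturb it to meet the modified IIR constraint $u_i\geq\epsilon$. There are two ways to do this:
\begin{itemize}
\item lower the common reserve slightly, when the reserve is $\tau>p_F$ and the participating bidders' support starts at $\tau$;
\item otherwise, pay each depositing bidder a lump-sum rebate of $\epsilon$.
\end{itemize}
Either way the revenue loss is $O(\epsilon)$. At profiles where the seller did not run, I need the strict inequality $\Rev<c$ to survive this $O(\epsilon)$ perturbation. This is where the regimes differ:
\begin{itemize}
\item In the restricted ``at least one $H$'' equilibrium, the only non-run profile is $(L,L)$. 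This becomes $(\varnothing,\varnothing)$, where no bidder participates, so running is worthless and there is no seller constraint at all.
\item In the ``both $H$'' and unrestricted cases, the seller must still refuse at $(H,\varnothing)$. Now $\varnothing$ excludes the low bidder, so the seller's best mechanism serves the single $H$ bidder, net of the $\epsilon$ subsidy. Its revenue is at most $\Rev_\tau(H,L)\leq c$: excluding a bidder can only lower optimal revenue, and Lemma \ref{lem_active_nonactive} together with Proposition \ref{prop:unrestricted_optimal} gives the baseline inequality. After subtracting the $\epsilon$ subsidy, refusing remains optimal.
\end{itemize}
At profiles where the seller runs, the baseline revenue strictly exceeds $c$ for nondegenerate $\tau$. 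If it equals $c$ at the boundary, I would perturb $\tau_\epsilon\to\tau$ to restore strictness. Hence running survives for small $\epsilon$.

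\textbf{Bidder side.} Types $v\geq\tau$ get at least $\epsilon$ of utility whenever a mechanism runs, which covers the deposit net of the refund logic, so entering is weakly optimal for them. Types $v<\tau$ must weakly prefer $\varnothing$.
\begin{itemize}
\item \emph{Restricted, at least one $H$, with $\tau\geq p_F$.} The reserve is $\tau$, and I would implement the IIR floor by a purchase-price credit that only winners receive, rather than a lump sum. A type $v<\tau$ that enters never wins, so he pays $\epsilon$ and receives nothing. He therefore strictly prefers $\varnothing$.
\item \emph{Restricted with $\tau<p_F$, or the remaining cases.} The modified IIR gives every depositing type at least $\epsilon$, so a deviating low type who posts the deposit receives exactly the rebate offsetting his cost. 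He is indifferent, which yields only weak preference.
\end{itemize}
In the second bullet I would set the rebate to exactly $\epsilon$, so no low type strictly gains.

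\textbf{Limit and main obstacle.} Finally, $\sigma_\epsilon$ equals $\sigma$ up to relabeling $\varnothing\leftrightarrow L$ (or converges if $\tau_\epsilon$ was adjusted), which gives the limit claim. The main obstacle is designing transfers at the mechanism stage that meet the modified IIR floor without creating a strict incentive for low types to enter. In the $\tau<p_F$ case I must verify that even a low type in $(p_F,\tau)$, who can win with positive probability when entering, gains nothing beyond indifference. Such a type wins only if his value is at least the reserve $p_F$, but then he is mimicking $H$. I would handle this by choosing the rebate so that the entry utility of types just below $\tau$ matches their $\varnothing$ payoff of $0$, relying on the fact that the envelope utility is continuous at $\tau$.
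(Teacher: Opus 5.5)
Your construction has the same skeleton as the paper's: high types deposit and send $H$, low types send $\varnothing$, off-path deposits are treated like $H$, and the floor is met by reserve concessions or rebates. However, the step that delivers the strict case fails as written.

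In the restricted one-high equilibrium with $\tau\ge p_F$, you meet the modified IIR floor by lowering the common reserve, or equivalently by a purchase-price credit paid only to winners. That works at the sole-entrant profile $(H,\varnothing)$. It does not work at $(H,H)$. There the opponent's value is drawn from $F|_{[\tau,\overline v]}$, so for any common reserve the cutoff type wins with probability zero, and types near $\tau$ win with probability close to zero. Any winner-contingent payment therefore leaves them with mechanism-stage utility below $\epsilon$, which violates $u_i\ge\epsilon$ on $\supp(F_H)$. The only way to meet the floor at $(H,H)$ is an unconditional rebate of $\epsilon$, which is what the paper uses.

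Once you make that fix, your claim that an entering low type ``never wins, pays $\epsilon$ and receives nothing'' no longer holds. At $(H,H)$ a deviator collects the rebate, reports a losing value, and breaks even. Strictness must therefore come entirely from the sole-entrant state, which has probability $F(\tau)>0$. In that state the deviator faces a total price of $\tau$ (the reserve $\tau-\epsilon$ plus the retained deposit), so his payoff is $\max\{v-\tau,-\epsilon\}<0$. Note that types in $(\tau-\epsilon,\tau)$ actually do buy in that state, so ``never wins'' is inaccurate there too, although the payoff is still negative.

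Two further corrections are needed.

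First, the rebate or concession is not an $O(\epsilon)$ revenue loss, because the seller keeps the deposits whenever she runs. Every modified-IIR mechanism then corresponds one-for-one to an ordinary IIR mechanism with the same allocation and the same net seller payoff. Her continuation problem is therefore exactly the baseline one. This matters because your claim that baseline revenue strictly exceeds $c$ at run profiles is false at the lowest sustainable one-high threshold, where $\Rev_\tau(H,L)=c$ (e.g.\ $\tau_2^*$). With the correct accounting no strictness is needed, the $\tau_\epsilon$ adjustment is unnecessary, and $\sigma_\epsilon$ is exactly $\sigma$ after relabeling.

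Second, the ``main obstacle'' in your last paragraph does not exist. When $\tau<p_F$ the interval $(p_F,\tau)$ is empty, and every low type lies below the reserve $p_F$ used at every run profile. A deviator therefore obtains no allocation surplus, and the lump-sum rebate exactly offsets his deposit. This gives the weak indifference directly, with no continuity argument.
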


The proposition distinguishes which threshold equilibria are robust to small entry costs. When one high message triggers the auction and $\tau\geq p_F$, a low type who deviates upward sometimes becomes the sole entrant. In that state, the seller can satisfy modified IIR for the cutoff type by lowering the mechanism-stage reserve from $\tau$ to $\tau-\epsilon$ and retaining the deposit. The total price faced by the deviator is still $\tau$, so every type below $\tau$ is strictly worse off.

When $\tau<p_F$, the reserve remains above the cutoff, so lowering the reserve slightly does not give the cutoff type any surplus. The seller must instead use a rebate or fixed credit, which a low type can also collect after deviating upward. This is also the case when two high messages are needed to trigger the auction, since the cutoff type always wins the good with zero probability in such equilibria. Thus, a deviating low type either has his deposit refunded because no auction is run, or receives the same rebate as the entrants when the auction is run. These
equilibria can still be approximated as $\epsilon\to0$, but the indifference of low types remains.

\subsection{Lying costs}
In some applications, endowing the seller with commitment power to select a mechanism that leaves all types with some surplus may be unrealistic. A natural alternative to break the indifference of low types, without giving the seller additional commitment power, is to introduce a small cost of lying \citep{kartik_lying_09}. Lying costs seem natural in many applications: For example, in high-stakes settings, a bidder who overclaims his type too much may worry about reputational damage from submitting a much lower bid in the resulting auction.

Suppose the message space is simply the type space, $M_i = [\underline{v}, \overline{v}]$. A bidder of true type $v$ who sends message $m$ incurs a direct disutility $\epsilon \ell(m, v)$, where $\epsilon > 0$ scales the cost of lying. Assume the lying cost function $\ell(m,v)$ is twice continuously differentiable, strictly submodular and satisfies $\ell(v,v) = 0$ and $\ell(m,v) > 0$ for $m \neq v$.\footnote{These are the same assumptions on $\ell$ used in \cite{kartik_lying_09}.}

The form of the equilibria of the perturbed game matches the equilibria \cite{kartik_lying_09} focuses on: types below $\tau$ disclose their type, and types $v\geq \tau$ pool at $\tau.$ If a bidder sends an off-path message $m>\tau$, the seller forms a point-mass belief on $\overline v$ about his type. Since types below $\tau$ now truthfully reveal their valuations, the seller's optimal mechanism after observing one bidder's report $v<\tau$ and another bidder's pooled message $\tau$ depends on the regime. Under the restricted regime, the seller must use a common reserve, so her mechanism takes the same form regardless of the specific low report. In the unrestricted regime, however, the seller can design a Myerson-optimal mechanism exploiting her knowledge of one bidder's exact type. Accordingly, the two regimes require separate treatment.

\begin{proposition}\label{prop:lying_costs_restricted}
Let $\sigma$ be a symmetric threshold equilibrium under the restricted regime in the baseline model with cutoff $c\geq \tau \geq p_F$. For any lying cost function $\ell$ satisfying the properties above, there exists $\bar{\epsilon} > 0$ such that for all $\epsilon < \bar{\epsilon}$, there is an equilibrium $\sigma_\epsilon$ of the perturbed game $\Gamma_{\ell, \epsilon}$ such that $\lim_{\epsilon \to 0} \sigma_\epsilon$ is outcome-equivalent to $\sigma$. 
\end{proposition}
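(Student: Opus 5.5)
I will construct $\sigma_\epsilon$ explicitly in the Kartik-style ``LSHP'' form. Types $v<\tau_\epsilon$ send $m=v$, types $v\geq \tau_\epsilon$ pool on $m=\tau_\epsilon$, and off-path messages $m>\tau_\epsilon$ induce the point-mass belief on $\overline v$. The cutoff $\tau_\epsilon$ is a perturbed cutoff with $\tau_\epsilon\to\tau$. The seller's continuation is the one inherited from $\sigma$.

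In the restricted regime the seller runs a second-price auction with common reserve $r=\max\{\tau,p_F\}=\tau$, given $\tau\ge p_F$. I take $\sigma$ to be the equilibrium in which the auction runs iff at least one bidder pools on $\tau_\epsilon$; the both-$H$ case is handled the same way.

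\textbf{Step 1: sequential rationality of the seller.}
\begin{itemize}
\item After $(\tau_\epsilon,\tau_\epsilon)$: the posterior on each bidder is $F|_{v\ge\tau_\epsilon}$. Since $\tau_\epsilon\ge p_F$ approximately, the optimal common reserve is $\tau_\epsilon$, and $\Rev>c$ by continuity from the baseline.
\item After $(\tau_\epsilon,v_j)$ with $v_j<\tau_\epsilon$ revealed: the low bidder is a point mass below the pooled bidder's support. The best common reserve therefore stays at $\tau_\epsilon$, because any lower reserve only lowers revenue from the pooled bidder (regularity and $\tau_\epsilon\ge p_F$). Revenue equals $\Rev_{\tau}(H,L)$, which is $\ge c$ in the one-$H$ equilibrium.
\item After two revealed low reports $(v_1,v_2)$: the seller knows both values, so a common reserve yields $\max$ revenue $\max\{v_1,v_2\}<\tau\le c$. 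The hypothesis $\tau\le c$ is used exactly here, and it makes not running strictly optimal.
\item Off-path messages: the point mass on $\overline v$ leads the seller to set reserve $\overline v$, which gives every deviator zero surplus.
\end{itemize}

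\textbf{Step 2: bidder incentives.}
\begin{itemize}
\item Low types $v<\tau_\epsilon$: reporting $v$ truthfully costs nothing and gives zero auction surplus, since they face reserve $\tau_\epsilon>v$ and would only win at prices $\ge\tau_\epsilon$. Mimicking another low report $v'<\tau_\epsilon$ changes nothing in the allocation, because the seller's mechanism does not depend on the exact low report, and it costs $\epsilon\ell(v',v)>0$. Deviating to $\tau_\epsilon$ yields zero surplus and a positive lying cost. Deviating off path yields zero surplus. Truth-telling is therefore strictly optimal, which breaks the baseline indifference.
\item High types $v\ge\tau_\epsilon$: sending $\tau_\epsilon$ yields surplus $U(v)$, which is strictly positive for $v>\tau$, minus $\epsilon\ell(\tau_\epsilon,v)$.
 \begin{itemize}
 \item Any low report $m<\tau_\epsilon$ yields zero surplus, because the seller then excludes them (the mechanism runs only via the other bidder, at reserve $\tau_\epsilon$, with the deviator treated as a known low value). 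It also costs $\epsilon\ell(m,v)$.
 \item Higher off-path reports give zero surplus.
 \end{itemize}
\end{itemize}
Optimality of pooling thus reduces to $U(v)\ge \epsilon[\ell(\tau_\epsilon,v)-\ell(v',v)]$ for the relevant alternatives. For $v$ bounded away from $\tau$ this holds for small $\epsilon$, since $U(v)\ge (v-\tau)\Pr(\text{win})>0$ and $\ell$ is bounded.

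\textbf{Step 3: the main obstacle.} The hard part is types just above the cutoff, where $U(v)\to0$ while the lying cost of pooling on $\tau_\epsilon$ may exceed the cost of a nearby low report. I will choose $\tau_\epsilon<\tau$ so that the pooling message lies below most pooled types. The marginal type $\tau_\epsilon$ then reports truthfully at zero cost, and for $v$ slightly above $\tau_\epsilon$ I must compare the surplus against the cost difference. The comparison is
\[(v-\tau_\epsilon)\,x(v)\ \ge\ \epsilon\,[\ell(\tau_\epsilon,v)-\ell(m,v)].\]
For $m<\tau_\epsilon\le v$, strict submodularity with $\ell(v,v)=0$ makes $\ell(\cdot,v)$ decreasing below $v$. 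So $\ell(\tau_\epsilon,v)\le\ell(m,v)$ and the right side is nonpositive, and pooling wins.

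Conversely, types below $\tau_\epsilon$ must not want to jump to $\tau_\epsilon$. They get zero surplus and pay a positive cost, so they do not.

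It remains to set the reserve equal to the pooled message $\tau_\epsilon$ and take $\tau_\epsilon=\tau$. Then all constraints hold with $\bar\epsilon$ chosen so that the seller's strict inequalities from Step 1 survive; these are unaffected by $\epsilon$ since lying costs do not enter the seller's payoff. Outcome-equivalence as $\epsilon\to0$ then follows because the induced allocation and run decisions coincide with those of $\sigma$ for every profile.
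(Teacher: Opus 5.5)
Your construction with $\tau_\epsilon=\tau$ (truthful reports below $\tau$, pooling on $\tau$, point mass on $\overline v$ off path) matches the paper's, and so do your seller-side checks. Your treatment of low types, and of high types deviating to reports $m<\tau$, is also correct. The gap is in the high-type constraint near the cutoff.

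For $v>\tau$ the binding deviation is not a low report but the truthful message $m=v$. That message is off path, yields zero mechanism surplus, and costs nothing because $\ell(v,v)=0$. So pooling beats it iff $U(v\mid\tau)\ge\epsilon\,\ell(\tau,v)$, and this must hold for all $v\in(\tau,\overline v]$ with a single $\epsilon$. Your Step~2 establishes it only for $v\ge\tau+\eta$, which gives a bound $\bar\epsilon(\eta)$ that collapses as $\eta\to0$. Your Step~3 then compares pooling only with reports $m<\tau$, which submodularity dominates trivially, and never returns to the truthful off-path deviation. Both $U(v\mid\tau)$ and $\ell(\tau,v)$ vanish as $v\downarrow\tau$, so the missing step is a comparison of their rates.

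The paper supplies exactly this comparison.
\begin{itemize}
\item By the envelope formula, $U(v\mid\tau)=\int_\tau^v\bar x(s\mid\tau)\,ds$. Since type $s$ beats an $H$-opponent with lower value, $\bar x(s\mid\tau)\ge F(s)-F(\tau)\ge\delta(s-\tau)$ with $\delta=\min f>0$. Hence $U(v\mid\tau)\ge\frac{\delta}{2}(v-\tau)^2$.
\item A second-order Taylor expansion of $\ell(\tau,\cdot)$ at $\tau$, using $\ell(\tau,\tau)=\ell_2(\tau,\tau)=0$, gives $\ell(\tau,v)\le\frac{M}{2}(v-\tau)^2$ with $M=\max|\ell_{22}|$.
\end{itemize}
Together these show $\bar\epsilon=\delta/M$ works uniformly in $v$. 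This is where the flatness of $\ell$ at the truth is essential.

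Your choice of main case also hides the difficulty. With $\tau\ge p_F$ one has $\Rev_\tau(H,L)=\tau$, so a one-$H$ equilibrium with $c\ge\tau$ forces $\tau=c$. The generic case under the proposition's hypothesis is therefore the both-$H$ one. There the cutoff type never wins and $U$ vanishes quadratically, so it is not handled ``the same way'' as the one-$H$ case. In the one-$H$ case $\Pr(\text{win})\ge F(\tau)>0$, and a Lipschitz bound on $\ell$ would already suffice.

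Finally, you should also verify that the off-path point mass on $\overline v$ satisfies Definition~\ref{def:refinement}, as the paper does.
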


Under the restricted regime, the seller must use a common reserve, so her optimal mechanism after observing one bidder's truthful report $v<\tau$ and another bidder's pooled message $\tau$ is a second-price auction with reserve $\tau$ — the same mechanism as in the baseline threshold equilibrium. Since $c\geq \tau$, the seller does not find it profitable to run the auction when both reports are below $\tau$, so her behavior matches the baseline. Verifying bidder incentives is straightforward: types $v<\tau$ strictly prefer truthful reporting because deviating to $\tau$ yields zero mechanism utility but incurs a positive lying cost, and for types $v>\tau,$ because the lying cost function is flat near the truth, a simple Taylor expansion can be used to bound lying costs below the mechanism utility when $\epsilon$ is sufficiently small.

\begin{proposition}\label{prop:lying_costs_unrestricted}
Let $\sigma$ be a symmetric threshold equilibrium under the unrestricted regime in the baseline model with cutoff $c\geq \tau \geq p_F$. Suppose additionally that \begin{equation}\label{eq:lying_unrestricted_condition}
c\geq \E[\max(\tau,\varphi(v))\mid v\geq \tau],
\end{equation} where $v\sim F.$ For any lying cost function $\ell$ satisfying the properties above, there exists $\bar{\epsilon} > 0$ such that for all $\epsilon < \bar{\epsilon}$, there is an equilibrium $\sigma_\epsilon$ of the perturbed game $\Gamma_{\ell, \epsilon}$ such that $\lim_{\epsilon \to 0} \sigma_\epsilon$ is outcome-equivalent to $\sigma$.
\end{proposition}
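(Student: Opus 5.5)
I will construct $\sigma_\epsilon$ explicitly as the Kartik-style pooling profile described before Proposition \ref{prop:lying_costs_restricted}. Types $v<\tau$ send $m=v$, and types $v\ge\tau$ pool on $m=\tau$. Any off-path message $m>\tau$ induces a point-mass belief on $\overline v$. As $\epsilon\to0$ this profile is outcome-equivalent to $\sigma$, because in the unrestricted baseline equilibrium (Theorem \ref{myerson_characterization}) the auction runs only after $(H,H)$, with reserve $r=\max\{\tau,p_F\}=\tau$; the pooled message plays the role of $H$. Two things must be verified: sequential rationality of the seller after every on-path profile, and bidder incentives.

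\textbf{Seller side.} There are three types of on-path profiles.

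\emph{Both bidders send $\tau$.} The posteriors are the symmetric truncations $F|_{v\ge\tau}$. The optimal mechanism is a second-price auction with reserve $\max\{\tau,p_F\}=\tau$, and running it is weakly profitable exactly as in the baseline.

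\emph{Both bidders report exact types below $\tau$.} The seller's best option is to sell to the higher bidder at that bidder's value, earning less than $\tau\le c$. She therefore takes the outside option.

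\emph{One bidder reports $v_j<\tau$ exactly, the other sends $\tau$.} This is the key case. With a point-mass posterior, bidder $j$'s virtual value equals $v_j$. Myerson's mechanism then gives bidder $i$ the good iff $\varphi(v_i)\ge v_j$ (his truncated virtual value is $\varphi(v_i)$), and otherwise sells to $j$ at price $v_j$. Expected revenue is therefore $\E[\max(v_j,\varphi(v))\mid v\ge\tau]$. Since this is increasing in $v_j$, it is at most $\E[\max(\tau,\varphi(v))\mid v\ge\tau]\le c$ by (\ref{eq:lying_unrestricted_condition}). This is exactly where the extra hypothesis enters, and the seller does not run the mechanism. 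Ties at $v_j\to\tau$ are harmless, since indifference permits the outside option.

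\emph{Off-path profiles.} After a message $m>\tau$, the point mass on $\overline v$ leads the seller to charge that bidder $\overline v$, so the deviator gets zero mechanism utility.

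\textbf{Bidder side.} A type $v<\tau$ who reports truthfully gets $0$. If he deviates to $\tau$, the auction runs only when the opponent also sends $\tau$; the reserve is $\tau>v$, so his mechanism utility is $0$ while the lying cost is positive. Deviating to another message below $\tau$, or to an off-path message, likewise yields $0$ at a positive cost. Truth-telling is therefore strict.

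A type $v\ge\tau$ who pools on $\tau$ gets $U(v)-\epsilon\ell(\tau,v)$, where $U(v)=\int_\tau^v \overline x(s)\,ds$ is his baseline utility. Deviating below $\tau$ guarantees no sale and utility $0$, and deviating above $\tau$ gives $0$ as well. The condition to check is therefore $U(v)\ge\epsilon\ell(\tau,v)$ uniformly in $v\in[\tau,\overline v]$.

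\textbf{Main obstacle: the uniform bound.} The difficulty is near $v=\tau$, where both sides vanish. Because $\ell(\tau,\tau)=0$ and $\partial_m\ell(v,v)=0$ (the cost is minimized at the truth), a Taylor expansion gives $\ell(\tau,v)\le K(v-\tau)^2$ with $K$ a bound on second derivatives of $\ell$ on the compact square. On the other side, $\overline x(s)\ge F(\tau)^{0}\cdot\Pr(\text{opponent}\ge\tau)\cdot\Pr(v_{opp}<s\mid\cdot)$, and in fact $\overline x(s)\ge (1-F(\tau))F_{\tau}(s)$ for the second-price auction, which is at least of linear order in $s-\tau$ since $f$ is continuous and positive. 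This gives $U(v)\ge \kappa (v-\tau)^2$ near $\tau$, with $\kappa>0$ depending on $f(\tau)$. Away from $\tau$, $U$ is bounded below by a positive constant.

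If the density bound at $\tau$ degenerates, I would instead compare the two quadratic coefficients directly, using $U''(\tau^+)=(1-F(\tau))\,f(\tau)/(1-F(\tau))\cdot$const. Choosing $\bar\epsilon<\min\{\kappa/K,\ \inf_{v\ge\tau+\delta}U(v)/\sup\ell\}$ then completes the construction.
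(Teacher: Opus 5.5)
Your proposal is correct and follows the paper's proof essentially step for step: the same Kartik-style profile with an off-path point mass at $\overline v$, the same key seller-side step bounding revenue at a (truthful-low, pooled) profile by $\E[\max(v_j,\varphi(v))\mid v\geq\tau]\le\E[\max(\tau,\varphi(v))\mid v\geq\tau]\le c$, and the same quadratic comparison of $U(v)\geq\int_\tau^v (F(s)-F(\tau))\,ds$ against a Taylor bound $\ell(\tau,v)\le K(v-\tau)^2$. Two small points: first, the paper, through the restricted-regime proof it references, also checks that the point-mass belief on $\overline v$ satisfies the reasonable-beliefs refinement using strict submodularity of $\ell$, and you omit this check; second, because the model's density has a positive minimum $\delta$, you get the global bound $U(v)\geq\frac{\delta}{2}(v-\tau)^2$ directly, so your near/far split and your degenerate-density fallback (whose formula is garbled anyway) are unnecessary.
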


Condition (\ref{eq:lying_unrestricted_condition}) ensures that the seller prefers her outside option whenever one bidder's type is known to be some $v<\tau$ while the other bidder's type is weakly above $\tau$.

Note that neither statement can hold for thresholds $\tau<p_F$, because a positive measure of types above the threshold would receive zero interim utility from the mechanism and therefore cannot agree on a single pooled message in the presence of lying costs.

\subsection{Discussion}
The previous two extensions provide methodologies to further refine the equilibrium set of the baseline model, depending on which factors one considers realistic in applications. A conclusion of both extensions is that when the seller's outside option is weaker (lower $c$), the equilibria of the baseline model are more fragile. In the case of the entry-cost extension, whenever $c<\Rev_{p_F}(L,L)$, all symmetric equilibria of the baseline model when the seller is restricted involve thresholds $\tau<p_F$. By Proposition \ref{prop:entry_costs}, this means all types below the threshold are still indifferent between entering the perturbed game with entry costs or not. The fragility of the threshold equilibria when $c$ is low is also apparent when considering the model with lying costs, where the threshold equilibria examined in the baseline model fail to hold up at all whenever $c<p_F$.

The value of $c$ can be interpreted as a measure of the incentive \emph{alignment} between the seller and bidders necessary to make cheap talk informative, since a higher $c$ gives bidders a greater incentive to reveal their values to induce an auction process. The extensions analyzed therefore suggest that equilibria of the baseline model when the seller and bidders are less aligned are inherently more fragile.

However, while a weaker outside option necessitates lower equilibrium thresholds, it does not conversely imply that higher thresholds are always more robust. Indeed, equilibrium existence in the lying cost extension requires the threshold to be bounded above by the value of the outside option itself ($\tau \le c$). If an equilibrium threshold is pushed too high, the seller-bidder alignment breaks down in the opposite direction: When there are lying costs, low types truthfully report their valuations, so the seller will inevitably observe intermediate types $v \in (c, \tau)$. Because these valuations exceed her outside option, and the seller cannot commit to walking away, she will deviate to run an extractive mechanism. This tradeoff implies that when there are lying costs, the seller will \emph{never} be able to obtain her commitment payoff in the restricted regime, as illustrated in section \ref{restricted}, since doing so necessarily entails a threshold $\tau=r^*>c$. In fact, in the motivating example found in section \ref{example}, only one threshold equilibrium in the restricted regime can be sustained if there are lying costs — namely, $\tau=c=1$ — and no thresholds can be sustained if the seller is unrestricted.

\section{Conclusion}

This paper provides a theoretical framework for analyzing pre-auction communication as an extensive-form game between bidders sending cheap talk messages and a seller with limited commitment. The analysis demonstrates that despite the potential for rich information transmission, the equilibrium communication is coarse. When the seller is unrestricted in her choice of mechanism, the bidders' strategic tradeoff between inducing the seller to hold an auction and avoiding the extraction of information rents at the mechanism stage forces the equilibrium to take a simple binary structure. A comparison of these equilibria to the case where the seller can commit to using a single reserve price provides a novel justification for the prevalence of simple auction formats — specifically, second-price auctions with reserves — even when more complex mechanisms are theoretically optimal.

The model has some obvious limitations in the context of aforementioned applications. For example, in many practical settings, bidder valuations likely contain a common value component as well, unlike the independent private values environment adopted by this paper. The presence of common values would introduce additional strategic considerations: For instance, a bidder's natural incentive to underreport his signal in common values auctions might be exacerbated if revealing high information also invites more aggressive competition from the seller or other bidders who infer the high common value. The seller's ability to gain information about the common value component from bidder communications, however, can potentially strengthen the threshold structure of the communication equilibria. In this paper, the private values framework serves as a useful benchmark — isolating the friction of limited commitment from other informational externalities — while also capturing the high potential for idiosyncratic preferences for bidders in many markets where these pre-sale communication practices are widespread.
\appendix

\section{Appendix: Omitted proofs}
\label{appendix:proofs}
\begin{proof}[Proof of lemma \ref{lem:common_reserve}.]

Suppose for contradiction that there exist $a,a'\in A_i$ with $r_i(a)=r<r'=r_i(a')$. Clearly $U_i(r'\mid a')=0$. Consider $U_i(r'\mid a)$. I first claim that $U_i(r'\mid a)>0.$ Indeed, if $U_i(r'\mid a)=0$ (by interim individual rationality it cannot be negative), then $u_i(r'\mid a,m_{-i})=0$ for every opponent message profile $m_{-i}$ that occurs on-path. Fix any such $m_{-i}$ for which the seller runs the auction after $(a,m_{-i})$. Such an $m_{-i}$ must exist, since $a\in A_i$ implies there is some type who obtains positive utility after sending $a$ and this requires that the auction is run with positive probability after $a$. Then $
u_i(r'\mid a,m_{-i})=0$ 
implies $x_i(t\mid a,m_{-i})=0$ for all $t\in[r,r']$. Since this is true for all $m_{-i}$ for which the seller runs the auction after $(a,m_{-i})$, this contradicts the definition of $r_i(a).$

Thus $U_i(r'\mid a)>0,$ so type $v=r'$ strictly prefers sending $a$ over $a'$. Since $U_i(\cdot\mid a)$ is continuous in $v$, there exists $\varepsilon>0$ such that every $v\in[r',r'+\varepsilon]$ strictly prefers $a$ over $a'$, implying $\sigma_i(v)(a')=0$ for all $v\in[r',r'+\varepsilon]$. 
I claim that this contradicts $r_i(a')=r'$. Fix any opponent message profile $m_{-i}$ that occurs on-path such that the seller runs the auction after $(a',m_{-i})$. Since the seller chooses a Myerson-optimal auction, bidder $i$ reporting $a'$ faces an interim allocation rule $x_i(\cdot \mid a',m_{-i})$ such that there is a threshold $$\kappa=\inf\{v_i\in \mathrm{supp}(F_{a'})\mid x_i(v_i \mid a',m_{-i})>0\},$$ and bidder $i$ cannot be charged below $\kappa$ if he gets allocated the good. Since $U_i(r'\mid a')=0$ it cannot be that $\kappa<r',$ as in that case the envelope theorem of \cite{milgrom_envelope_2002} and incentive-compatibility implies \[u_i(r'\mid a',m_{-i})=\int_\kappa^{r'}x_i(t\mid a',m_{-i})\,dt>0.\] Since $[r',r'+\epsilon]$ is not in the support of $F_{a'}$, it follows that $\kappa>r'+\epsilon.$ But then no type in $[r',r'+\epsilon]$ can get positive utility by reporting $a',$ which is a contradiction. 
\end{proof}

\begin{proof}[Proof of proposition \ref{prop:beliefs_support}.]
I first show that any type $v_i < r_i$ is eliminated by type $v_i' = r_i$, in the sense of definition \ref{def:refinement}. First note that, for $v_i$ to strictly prefer deviating, he must think the seller will form a posterior over his value that causes her to run the mechanism for some opponent message.

Consider any seller belief $G$ and let $\Gamma_G$ be the seller's optimal mechanism given $G$. Let $x_i(\cdot\mid G,m_{-i})$ and $t_i(\cdot\mid G,m_{-i})$ denote the seller's interim allocation and payment rules for bidder $i$ given the opponent $-i$ reported $m_{-i}$ and the seller's belief $G$ over bidder $i$'s value. Define
\[
\overline{x}_i(\hat v \mid G):= \E_{v_{-i}}\left[ x_i(\hat{v} \mid  G, \sigma_{-i}(v_{-i})) \right], \quad \overline{t}_i(\hat{v} \mid G) := \E_{v_{-i}}\left[ t_i(\hat{v} \mid G, \sigma_{-i}(v_{-i})) \right],
\]
accordingly. 

Now, consider the utility of type $v_i'=r_i$ under the same belief $G$. Then:
   \[ U_i(r_i \mid G) \ge r_i \cdot \overline{x}_i(v_i\mid G) - \overline{t}_i(v_i \mid G) =U_i(v_i \mid G) + (r_i - v_i) \cdot \overline{x}_i(v_i \mid G)\]

Now if $G$ is such that $U_i(v_i \mid G) \geq U_i(v_i \mid \sigma) = 0$ and $\overline{x}_i(v_i \mid G) > 0$, then
\[
U_i(r_i \mid G) - U_i(r_i \mid \sigma) = U_i(r_i \mid G) \ge U_i(v_i \mid G) + (r_i - v_i) \cdot \overline{x}_i(v_i \mid G)> U_i(v_i \mid G).\]
Thus, type $r_i$ has a strictly stronger incentive to deviate than any $v_i < r_i$, and the seller must therefore place zero probability on any $v_i \in [\underline{v}, r_i)$.
\end{proof}

\begin{proof}[Proof of lemma \ref{lem:exclusion}.]
Fix a nonactive message $n\in N_i$ and a positive-probability opponent message profile $m_{-i}$ for which the seller runs a mechanism after $(n,m_{-i})$. Let
\[
S_n:=\{v\in[\underline v,\overline v]:\sigma_i(v)=n\}.
\]
Since $n$ is on path and sent by a positive measure of types, $S_n$ has positive measure. By monotonicity of $\sigma_i$, $S_n$ is an interval up to endpoints.

By definition, $U_i(v\mid n)=0$ for every $v\in S_n$, and by IIR, $u_i(v\mid n,m'_{-i})\geq 0$ for every on-path opponent message profile $m'_{-i}$. Since $m_{-i}$ occurs with positive probability, it follows that
\[
u_i(v\mid n,m_{-i})=0
\qquad\text{for every }v\in S_n.
\]

By IC and the envelope representation, for any $v<v'$ in $S_n$,
\[
u_i(v'\mid n,m_{-i})-u_i(v\mid n,m_{-i})
=
\int_v^{v'}x_i(s\mid n,m_{-i})\,ds.
\]
The left side is zero for all $v<v'$. Since $x_i(\cdot\mid n,m_{-i})\geq 0$, this implies
\[
x_i(s\mid n,m_{-i})=0
\quad\text{a.e. }s\in S_n.
\]
Because $F_n$ is the prior conditional on $S_n$ and the prior is atomless with positive density, this is equivalent to
\[
x_i(s\mid n,m_{-i})=0
\quad\text{for }F_n\text{-a.e. }s.
\]
Thus bidder $i$ is excluded after $(n,m_{-i})$.
\end{proof}

\begin{proof}[Proof of lemma \ref{lem:pool_nonactive}.]
Fix such a $\sigma$ and a bidder $i$. Construct a new message space $\tilde M_i:=\{a_i,\bar n_i\}$, and define $\tilde\sigma_i$ as follows. For every type $v$, let $\tilde\sigma_i(v)$ agree with $\sigma_i(v)$ on messages in $a_i$, and assign to $\bar n_i$ the total probability that $\sigma_i(v)$ assigns to messages in $N_i$:
\[
\tilde\sigma_i(v)(a_i)=\sigma_i(v)(a_i),
\qquad
\tilde\sigma_i(v)(\bar n_i)=\sum_{n\in N_i}\sigma_i(v)(n).
\]
Define $\tilde\sigma_{-i}=\sigma_{-i}$ for the other bidders.

I claim that such a $\tilde\sigma$ is outcome-equivalent to $\sigma$. I will first establish the following.

\begin{enumerate}
\item\label{it:pp} If the seller runs a mechanism after any message of the form $(n_i,m_{-i})$, for some $n_i\in N_i$ and $m_{-i}\in M_{-i}$, then by Lemma \ref{lem:exclusion}, bidder $i$ is excluded almost surely. The seller's optimal mechanism therefore depends only on her posteriors about the remaining bidders. Furthermore, for every $n_i'\in N_i$, the seller runs this same mechanism after $(n_i',m_{-i})$.

\item\label{it:runA} If it was sequentially optimal for the seller to run the mechanism described in point \ref{it:pp} after any message of the form $(n_i,m_{-i})$, for some $n_i\in N_i$ and $m_{-i}\in M_{-i}$, then it is sequentially optimal for the seller to run the same mechanism under $(\bar n_i,m_{-i})$.

\item\label{it:run0} If $m_{-i}\in M_{-i}$ is such that the seller does not optimally run a mechanism for any message of the form $(n_i,m_{-i})$ where $n_i\in N_i$, then it is not optimal for the seller to run a mechanism under $(\bar n_i,m_{-i})$.
\end{enumerate}

I will now justify \ref{it:pp}--\ref{it:run0} and conclude outcome-equivalence. For \ref{it:pp}, by Lemma \ref{lem:exclusion}, bidder $i$ is excluded from any mechanism run after $(n_i,m_{-i})$ for $n_i\in N_i$. Thus, fixing $m_{-i}$, the seller's optimal mechanism depends only on her posteriors about the remaining bidders, and the same mechanism is optimal after any $(n_i',m_{-i})$ for $n_i'\in N_i$.

To establish \ref{it:runA} and \ref{it:run0}, fix $m_{-i}$ and let
\[
G_{-i}:=(F_{m_j})_{j\neq i}
\]
denote the vector of posteriors induced by the other bidders' messages. Under the pooled strategy $\tilde\sigma_i$, observing $\bar n_i$ is equivalent to the event $\{m_i\in N_i\}$ in the original equilibrium. By Bayes' rule, there exist weights $\alpha_{n_i}\geq 0$ with $\sum_{n_i\in N_i}\alpha_{n_i}=1$ such that
\[
F_{\bar n_i}=\sum_{n_i\in N_i}\alpha_{n_i}F_{n_i}.
\]

For any incentive-compatible mechanism $\Gamma=(x,t)$ the seller may run, its expected revenue under independent posteriors $F_i$ and $G_{-i}$ is
\[
\Rev(\Gamma;F_i,G_{-i})
=
\E_{v_i\sim F_i,\,v_{-i}\sim G_{-i}}
\left[\sum_{j=1}^n t_j(v)\right].
\]
For every fixed $\Gamma$, this expression is linear in $F_i$. Therefore, if
\[
\Rev(F_i,G_{-i})
:=
\sup_{\Gamma\in\mathcal G}\Rev(\Gamma;F_i,G_{-i})
\]
denotes the seller's optimal revenue, then $\Rev(F_i,G_{-i})$ is convex in $F_i$. In particular,
\[
\Rev(F_{\bar n_i},G_{-i})
\leq
\sum_{n_i\in N_i}\alpha_{n_i}\Rev(F_{n_i},G_{-i}).
\]

For \ref{it:run0}, if the seller does not optimally run after $(n_i,m_{-i})$ for any $n_i\in N_i$, then
\[
\Rev(F_{n_i},G_{-i})\leq c
\qquad\text{for all }n_i\in N_i.
\]
Hence
\[
\Rev(F_{\bar n_i},G_{-i})
\leq
\sum_{n_i\in N_i}\alpha_{n_i}\Rev(F_{n_i},G_{-i})
\leq c,
\]
so it is sequentially optimal for the seller not to run after $(\bar n_i,m_{-i})$.

For \ref{it:runA}, suppose instead that the seller runs after some $(n_i,m_{-i})$. By \ref{it:pp}, the same mechanism, up to behavior on zero-probability type profiles, is optimal after every $(n_i',m_{-i})$ for $n_i'\in N_i$, and it excludes bidder $i$ almost surely. Let $R(m_{-i})$ denote its revenue. Since bidder $i$ is excluded a.s., $R(m_{-i})$ does not depend on which posterior $F_{n_i}$ the seller has about bidder $i$, and
\[
\Rev(F_{n_i},G_{-i})=R(m_{-i})
\qquad\text{for all }n_i\in N_i.
\]
Convexity implies
\[
\Rev(F_{\bar n_i},G_{-i})
\leq
\sum_{n_i\in N_i}\alpha_{n_i}\Rev(F_{n_i},G_{-i})
=
R(m_{-i}).
\]
But the same mechanism that excludes bidder $i$ is feasible after $(\bar n_i,m_{-i})$ and yields revenue $R(m_{-i})$. Therefore it is optimal after $(\bar n_i,m_{-i})$ as well. This establishes \ref{it:runA}.

We now define the seller's strategy $\tilde\pi$ in the pooled-message game. After any profile $(a_i,m_{-i})$, let $\tilde\pi$ coincide with the original equilibrium continuation after $(a_i,m_{-i})$ under $\sigma$. After any profile $(\bar n_i,m_{-i})$, let $\tilde\pi$ run the mechanism described in \ref{it:pp} if the seller ran after $(n_i,m_{-i})$ for some, and hence every, $n_i\in N_i$ under $\sigma$, and otherwise let $\tilde\pi$ take the outside option. By \ref{it:pp}--\ref{it:run0}, this strategy is sequentially optimal and yields the same continuation outcome as the original equilibrium conditional on the event $\{m_i\in N_i\}$.

Since $\tilde\pi$ matches the original continuation after all profiles with $m_i=a_i$ and matches the original continuation conditional on $\{m_i\in N_i\}$ after profiles with $\bar n_i$, the induced distribution over allocations and payments is identical under $(\tilde\sigma,\tilde\pi)$ and $(\sigma,\pi)$. Thus the former is an equilibrium and the two are outcome-equivalent. Pooling the messages in $N_j$ for each other bidder $j\neq i$ in an analogous way proves the lemma.
\end{proof}

\begin{proof}[Proof of lemma \ref{lem_active_nonactive}.]
This proof uses the following standard facts about the general Myerson optimal auction when bidders have independent distributions on a connected set. Since bidder strategies are monotone, every on-path posterior is the prior conditional on an interval, possibly open at either endpoint. For a posterior $G$, let $\bar\phi_G$ denote the (potentially) ironed virtual type function with respect to $G$. The Myerson allocation rule maximizes ironed virtual surplus pointwise.

If $b=\sup(\supp(G))$, then the following holds
\[
    \bar\phi_G(v)\to b
    \qquad\text{as }v\uparrow b\text{ along }\supp(G).
\]
Moreover, if $y<b$, then ironed virtual values just above $y$ are bounded strictly below $y$: there exist $\eta,\delta>0$ such that
\[
    \bar\phi_G(v)<y-\eta
    \qquad
    \text{for all }v\in(y,y+\delta)\cap\supp(G).
\]

Suppose the seller runs a mechanism after some message profile in which at least one bidder sends his pooled nonactive message. Fix one such bidder $i$. Since the active posterior FOSDs the nonactive posterior, the seller must also run after $(\bar n_i,a_{-i})$, where bidder $i$ sends his nonactive message and every other bidder sends his active message \citep{hart_reny_15}.

By Lemma \ref{lem:exclusion}, bidder $i$ must be excluded from any mechanism run after $(\bar n_i,a_{-i})$. Since types $v>r$ obtain strictly positive utility from the active message and zero utility from the nonactive message, no type above $r$ sends $\bar n_i$. Hence
\[
F_{\bar n_i}(r)=1.
\]
I claim there exists $\epsilon>0$ such that
\[
F_{\bar n_i}(r-\epsilon)=1.
\]
If for sake of contradiction this were not the case, then for every $\epsilon>0$, it must be that $F_{\bar n_i}(r-\epsilon)<1$, so
\[
\sup\{\supp(F_{\bar n_i})\}=r.
\]
I will show that gives a contradiction. Let $\bar\phi_i(\cdot\mid \bar n_i)$ and $\bar\phi_j(\cdot\mid a_j)$ denote the (potentially) ironed\footnote{Monotonicity of strategies implies the posteriors are supported on connected sets, so the standard Myerson ironing procedure goes through.} virtual value functions corresponding to $F_{\bar n_i}$ and $F_{a_j}$ for $j\neq i$, respectively. It is well-known from standard mechanism design that the ironed virtual value of bidder $i$ along the upper tail of $F_{\bar n_i}$ converges to the top value, i.e.
\begin{equation}\label{eqn:lower_virtual}
    \bar\phi_i(v\mid \bar n_i)\to r
    \quad \text{as } v\uparrow r \text{ along } \supp(F_{\bar n_i}).
\end{equation}

Furthermore, since the set $(r,r+\epsilon')\cap\supp(F_{a_j})$ cannot be empty for any $\epsilon'>0$ by the same argument as in the proof of Lemma \ref{lem:common_reserve}, there exist $\epsilon,\epsilon'>0$ such that
\begin{equation*}
\bar\phi_j(v\mid a_j)<r-\epsilon
\qquad\text{for all }j\neq i
\quad\text{and all }v\in(r,r+\epsilon')\cap\supp(F_{a_j}).
\end{equation*}
By \eqref{eqn:lower_virtual}, there exists $\delta>0$ such that
\[
\bar\phi_i(v\mid \bar n_i)>r-\epsilon
\qquad\text{for all }v\in(r-\delta,r)\cap\supp(F_{\bar n_i}).
\]
Now there is strictly positive probability on the event
\[
v_i\in(r-\delta,r)
\quad\text{and}\quad
v_j\in(r,r+\epsilon')\text{ for all }j\neq i.
\]
On this event,
\[
\bar\phi_i(v_i\mid \bar n_i)>r-\epsilon>
\max_{j\neq i}\bar\phi_j(v_j\mid a_j),
\]
and $\bar\phi_i(v_i\mid \bar n_i)\geq 0$. The Myerson allocation rule therefore allocates the good to bidder $i$ on this event, implying bidder $i$ wins with strictly positive probability after $(\bar n_i,a_{-i})$. This contradicts Lemma \ref{lem:exclusion}. Hence there exists $\epsilon>0$ such that
\[
F_{\bar n_i}(r-\epsilon)=1.
\]

Since the prior $F$ is full support on $[\underline v,\overline v]$ and all nonactive messages have been pooled, this implies that $F_{a_i}(r)>0$. By symmetry, the same is true for every active bidder $j\neq i$. But then, after $(\bar n_i,a_{-i})$, there is positive probability that every active bidder has value below $r$. On this event, none of the bidders using their active message can get the good by the definition of $r.$ If bidder $i$ gets the good with positive probability, this contradicts Lemma \ref{lem:exclusion}. If not, the seller must be withholding the good on a positive probability event where bidder $i$ has a positive virtual type but all other bidders have a value below $r,$ contradicting seller optimality. Therefore the seller cannot run a mechanism after any message profile in which some bidder sends his nonactive message.
\end{proof}

\begin{proof}[Proof of lemma \ref{lem:r_pF}.]
Fix such an equilibrium. Let $G$ denote the seller's posterior distribution of a bidder's value conditional on receiving the active message $a_i$, and let $q:=\Pr(m_i=a_i)$. By symmetry, this posterior is the same for every bidder after the all-active profile.

Because every type $v\geq r$ sends the active message, for any $p\geq r$,
\[
1-G(p)=\Pr(v_i\geq p\mid m_i=a_i)=\frac{1-F(p)}{q}.
\]
After the all-active profile $(a_1,\dots,a_n)$, the seller faces $n$ symmetric bidders with posterior $G$, so the reserve $r$ must maximize $p(1-G(p))$. But on $[r,\overline v]$, this objective is equal to $\frac{1}{q}p(1-F(p))$. If $r<p_F$, then $p_F$ is feasible and, by uniqueness of the monopoly price under $F$, yields strictly larger revenue than $r$. This contradicts optimality of $r$ under $G$. Therefore $r\geq p_F$.
\end{proof}

\begin{proof}[Proof of lemma \ref{lem:restricted_threshold_wlog}.]
Let $A_i$ and $N_i$ be the active and nonactive messages defined in Section \ref{unrestricted}. If either set is empty, the equilibrium is already binary up to a degenerate threshold, so assume both are nonempty. Lemmas \ref{lem:common_reserve} and \ref{lem:one_active} use only bidder incentives and monotonicity, so they apply here and imply that there is at most one active message, denoted $a$. It remains to check that Lemma \ref{lem:exclusion} is also regime-independent.

Since $a$ gives some type positive utility, every higher type strictly prefers $a$ to any nonactive message, which gives zero utility. Monotonicity of bidder strategies then implies that the types sending $a$ must include an upper interval $(\tau,\overline v]$ for some $\tau$, and no type strictly below $\tau$ sends $a$. Thus all types $v<\tau$ must be sending a nonactive message. If an auction with a common reserve runs on-path and allocates the good to $v$ with positive probability, the common reserve $r$ lies strictly below $\tau.$ This is a contradiction, since any optimal common reserve in an auction where at least one bidder uses $a$ must be at least $\tau.$
\end{proof}

\begin{proof}[Proof of proposition \ref{prop:restricted_char}.]
    In any symmetric threshold equilibrium with threshold $\tau$, the posterior distribution of a bidder sending $H$ is the truncation of $F$ to $[\tau, \overline{v}]$, and for $L$ it is the truncation to $[\underline{v}, \tau)$.
    \begin{enumerate}
        \item Suppose the auction runs and the profile is $(H, H)$, so both values are in $[\tau, \overline{v}]$. Since the optimal reserve is the optimal monopoly price, $r = p_F$ if $\tau\leq p_F$ and $r=\tau$ if $\tau>p_F.$
        \item If the profile is $(H, L)$ and the auction is run, an optimal common reserve cannot be below $\tau$ since the bidder sending $H$ has value at least $\tau$, so the problem again reduces to the optimal posted price for the bidder sending $H$, which is $r = p_F$ if $\tau\leq p_F$ and $r=\tau$ if $\tau>p_F.$
        \item The auction cannot run under $(L,L)$ if it is run under $(H,H)$, since this would induce a strictly lower reserve price than the $r$ set according to the above two points, which would induce types weakly above $r$ to deviate to $L.$ This unravels the equilibrium in the same manner as lemma \ref{lem:common_reserve}.
    \end{enumerate} 
This proves the proposition.
\end{proof}
\begin{proof}[Proof of Proposition \ref{prop:entry_costs}.]
Fix a baseline threshold equilibrium with cutoff $\tau$, and let $H$ denote the message sent by types weakly above $\tau$. In the perturbed game, consider the strategy
\[\sigma_\epsilon(v)=
\begin{cases}
H & \text{if }v\geq\tau,\\
\varnothing & \text{if }v<\tau.
\end{cases}
\]
A bidder who sends $H$ posts a deposit $\epsilon$. A bidder who sends $\varnothing$ posts no deposit and is not eligible to participate in any mechanism the seller runs. If the seller does not run a mechanism, deposits are refunded. Any off-path costly message is assigned the same belief and continuation treatment as $H$.

First observe that the seller's continuation problem is unchanged in the perturbed game compared to the baseline model, conditional on the set of participating bidders. Indeed, fix a set of bidders sending $H$ (call them participating bidders) and an ordinary IIR mechanism $(x,t)$ for those bidders. In $\Gamma_\epsilon$, the seller can use the same allocation rule and set the mechanism-stage transfer of each participating bidder equal to $t_i-\epsilon$. The seller can keep the deposits $\epsilon$ whenever she runs, so her net payoff is unchanged. Each participating bidder's mechanism-stage utility is increased by $\epsilon$, so the modified IIR constraint is satisfied. Conversely, given any allowable mechanism in $\Gamma_\epsilon$, adding the retained deposit to each depositing bidder's mechanism-stage transfer gives an ordinary IIR mechanism with the same allocation and the same seller payoff. Thus, conditional on the set of participating bidders, the seller's optimal net continuation value is the ordinary optimal revenue from those bidders. From here it is straightforward to check that the seller makes the same continuation decisions as in the baseline threshold equilibrium.

It remains to compare bidder incentives. First suppose the baseline equilibrium is a restricted-regime equilibrium in which the seller runs whenever at least one bidder reports $H$, and suppose $\tau\geq p_F$. If exactly one bidder sends $H$, the seller runs a second-price auction for the sole entrant with reserve
\[
    \rho_\epsilon=\tau-\epsilon.
\]
Since the high posterior is supported on
$[\tau,\overline v]$, the allocation rule is the same as in the baseline mechanism with reserve $\tau$. The cutoff type obtains mechanism-stage utility $\epsilon$, and after the retained deposit is counted obtains total utility zero. The seller's total payment from any entrant who buys is $\rho_\epsilon+\epsilon=\tau$, exactly as in the baseline equilibrium.

If both bidders send $H$, lowering the reserve below $\tau$ does not give the cutoff type positive allocation surplus, since the opponent is drawn from a truncation of the prior to $[\tau,\overline v]$. The seller therefore uses the baseline common-reserve allocation and gives each entrant a fixed rebate $\epsilon$ at the mechanism stage. This rebate is exactly offset by the retained deposit, so the seller's net revenue is unchanged and every depositing type satisfies modified IIR.

Under this construction, every type $v\geq\tau$ obtains the same total continuation utility from sending $H$ as in the baseline equilibrium, and is therefore willing to send $H$. A type $v<\tau$ obtains zero from $\varnothing$. If he deviates to $H$ and the other bidder sends $H$, he can
collect the rebate and choose a report that loses, giving total utility zero. If the other bidder sends $\varnothing$, the deviator is the sole entrant and faces total price $\tau$ after the reserve concession and retained deposit are counted. His best payoff in that state is
\[
    \max\{v-(\tau-\epsilon),0\}-\epsilon
    =
    \max\{v-\tau,-\epsilon\}<0.
\]
Since the event that the other bidder sends $\varnothing$ has positive probability, every type $v<\tau$ strictly prefers $\varnothing$.

Next suppose the baseline equilibrium is a restricted one-high equilibrium with $\tau<p_F$. At every run profile, the baseline common reserve is $p_F$. Since $p_F>\tau$, a small reserve concession does not give the cutoff type positive
surplus. The seller therefore uses the baseline allocation and gives each depositing bidder a fixed rebate $\epsilon$ at the mechanism stage. The retained deposit exactly offsets this rebate, so every type $v\geq\tau$ obtains the same total utility from sending $H$ as in the baseline. A type $v<\tau$ who deviates to $H$ can collect the rebate whenever the seller runs and otherwise has his deposit refunded. Since $v<\tau<p_F$, he obtains no positive allocation surplus. His deviation payoff is therefore zero, so he is only weakly willing to use
$\varnothing$.

Finally suppose the baseline equilibrium is either a restricted-regime equilibrium in which the seller runs only after both bidders report $H$, or an unrestricted threshold equilibrium. In both cases the seller does not run after $(H,\varnothing)$, so a deviating bidder's deposit is refunded in that state. At $(H,H)$, by similar reasoning as above, the seller must use the baseline allocation and give each depositing bidder a fixed rebate $\epsilon$ at the mechanism stage. Therefore, while the equilibrium outcomes and utilities match those of the baseline model, the indifference of low types remains.
\end{proof}

\begin{proof}[Proof of proposition \ref{prop:lying_costs_restricted}.]
Fix $\epsilon$. Construct the strategy profile $\sigma_\epsilon$ as follows: types $v < \tau$ report truthfully ($m = v$), and types $v \ge \tau$ pool at message $m = \tau$. On path, for messages $m < \tau$, the seller forms a point mass belief at $m$, leaving the bidder with type $m$ zero interim utility. For $m = \tau$, the seller forms the truncated belief $F|_{[\tau, \overline{v}]}$. For any off-path message $m >\tau$ sent by bidder $i$, the seller has a point-mass belief on $\overline{v}$ about bidder $i$'s value. I claim that this off-path belief satisfies the reasonable seller beliefs condition in Definition \ref{def:refinement}. Indeed, if bidder $i$ believes that by sending an off-path message $m>\tau$, the seller will have the belief $F\mid_{v\geq \tau}$ about his valuation, then all types $v>\tau$ have a strict incentive to deviate to avoid the lying cost. Since $\ell$ is strictly submodular, $\overline v$ has the largest incentive to deviate under this candidate off-path belief. Thus $\overline{v}$ can never be eliminated by the reasonable seller beliefs requirement.

It is easy to see that when $c\geq \tau$, the seller's behavior if bidders report according to $\sigma_\epsilon$ in the perturbed model matches the seller's behavior under $\sigma$ in the baseline model for all pairs of bidder valuations. For a type $v < \tau$, under truthful reporting, his equilibrium payoff is $0$. If he deviates to the threshold message $m=\tau,$ his interim utility from the mechanism the seller runs is still 0, but he incurs a strictly positive lying cost $\epsilon \ell(\tau, v) > 0$. Thus his net payoff from this deviation is strictly negative, so truthful reporting is strictly optimal.

It remains to check bidder incentives for types $v \ge \tau$. Sending the pooled message $m = \tau$ yields an expected mechanism utility $U_i(v \mid \tau) \ge 0$, and incurs lying cost $\epsilon \ell(\tau, v)$. Deviating to a truthful message $m = v > \tau$ yields an interim utility of $0$ from the mechanism and $0$ lying cost. The net change in total utility from playing the equilibrium strategy rather than deviating is $U_i(v \mid \tau) - \epsilon \ell(\tau, v)$. The remainder of the proof shows this difference is weakly positive for all $v \geq \tau$.

By the envelope theorem, \[U_i(v \mid \sigma) = \int_\tau^v \bar{x}_i(s \mid \tau) ds.\]
Because the auction runs at least when both bidders report $H$, a bidder with type $s \ge \tau$ wins whenever the opponent also reports $H$ but has a lower true type. Thus, his expected allocation probability is bounded below by the probability the opponent's type falls in $[\tau, s]$:
\[
\bar{x}_i(s \mid \tau) \geq F(s) - F(\tau).
\]
Since the prior density $f$ is strictly positive and continuous on the compact support $[\underline{v}, \overline{v}]$, it has a minimum value $\delta = \min_{v \in [\underline v,\overline v]} f(v) > 0$. Thus the allocation rule can be bounded by $\bar{x}_i(s \mid \tau) \ge \delta(s - \tau)$, implying
\[
U_i(v \mid \sigma) \ge \int_\tau^v \delta(s - \tau) ds = \frac{\delta}{2}(v - \tau)^2.
\]

To bound the lying cost, since $\ell(\tau, v)$ is twice continuously differentiable on a compact domain, a Taylor expansion is valid around $v = \tau$:
\[
\ell(\tau, v) = \ell(\tau, \tau) + \ell_2(\tau, \tau)(v - \tau) + \frac{1}{2}\ell_{22}(\tau, \tilde{v})(v - \tau)^2.
\]
for some $\tilde{v} \in (\tau, v)$. The assumptions on $\ell$ imply that $\ell(\tau, \tau) = 0$ and $\ell_2(\tau, \tau) = 0$. Let $M = \max_{v, m \in [\underline v,\overline v]} |\ell_{22}(m, v) |$, which is finite because the type space is compact. Then
\[
\ell(\tau, v) \le \frac{M}{2}(v - \tau)^2.
\]

Comparing the two bounds, the equilibrium strategy is strictly optimal for all types $v > \tau$ as long as
\[
\frac{\delta}{2}(v - \tau)^2 \ge \epsilon \frac{M}{2}(v - \tau)^2\Leftrightarrow \epsilon \le \frac{\delta}{M}.
\]

Thus, picking $\bar{\epsilon}< \frac{\delta}{M}$, for any $\epsilon < \bar{\epsilon}$, the strategy $\sigma_\epsilon$ is an equilibrium. The result follows.
\end{proof}

\begin{proof}[Proof of proposition \ref{prop:lying_costs_unrestricted}.]
The proof is similar to that of the previous proposition, except it needs to be shown that the additional restriction imposed by the additional condition (\ref{eq:lying_unrestricted_condition}) ensures that the seller does not run a mechanism when one bidder discloses a type $v<\tau$ and the other sends message $\tau.$ Indeed, if the seller knows that bidder 1 has value $v_1<\tau$ for sure while bidder 2 reports a value $v_2\geq \tau$, then the seller allocates to bidder 1 as long as $\varphi(v_2)<v_1$ and allocates to bidder 2 otherwise. Thus her expected optimal auction revenue is \[\E[\max(v_1,\varphi(v_2))\mid v_2\geq \tau].\] If the seller does not run the mechanism at such a message profile, it must be that \[c\geq \E[\max(v_1,\varphi(v_2))\mid v_2\geq \tau].\] Since the above must hold for all $v_1<\tau$, the condition is equivalent to \[c\geq \E[\max(\tau,\varphi(v))\mid v\geq \tau].\]
\end{proof}

\begin{lemma}[Off-path belief verification]
\label{lem:offpath_threshold_beliefs}
Consider the outcome-equivalent threshold equilibrium described in Theorem \ref{myerson_characterization} where the seller runs iff all bidders are above the threshold, and the on-path auction uses reserve $r$. Suppose that after any off-path message from bidder $i$, the seller's belief about bidder $i$ is the truncation of $F$ to $[r,\overline v]$. Then these off-path beliefs satisfy Definition \ref{def:refinement}. Moreover, no bidder type has a profitable deviation to an off-path message.
\end{lemma}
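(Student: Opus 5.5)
The claim has two parts: the truncated off-path beliefs satisfy Definition \ref{def:refinement}, and no type gains from an off-path message. I would prove the second part first and then derive the first from it.

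\textbf{No profitable deviation.} Consider an off-path message from bidder $i$. The seller believes $F|_{[r,\overline v]}$, and all other bidders play the threshold strategy. I would split into two cases according to the opponents' messages.

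\emph{Some opponent sends the low message.} By the equilibrium conditions (Remark \ref{rem:unrestricted_existence} together with the monotonicity of revenue in posteriors from \cite{hart_reny_15}), the seller does not run. The reason is that the truncation of $F$ to $[r,\overline v]$ is first-order dominated by the on-path active posterior $F|_{[\tau,\overline v]}$, since $\tau\le r$. The seller already declines to run when bidder $i$ sends the active message against these opponents, so she also declines here.

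\emph{All opponents send the active message.} Each opponent's posterior is $F|_{[\tau,\overline v]}$ and bidder $i$'s is $F|_{[r,\overline v]}$. Bidder $i$'s virtual value under the truncation equals $\varphi(v)$ on $[r,\overline v]$, which is $\ge\varphi(r)\ge 0$ because $r\ge p_F$. The opponents' virtual values equal $\varphi$ on $[\max(\tau,p_F),\overline v]$ and are negative on any lower portion. I would then argue that the Myerson mechanism allocates to the highest $\varphi$ above zero. Since $\tau\le r$, the on-path reserve $r$ equals $\max(\tau,p_F)$, so the resulting allocation and payments coincide with the on-path second-price auction with reserve $r$.

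In both cases, every type therefore gets exactly his equilibrium utility from an off-path message, which rules out a profitable deviation. One subtlety is $\tau<p_F$. Then some active types in $[\tau,p_F)$ have virtual value below zero under their posterior, and those types are already excluded on path, so nothing changes. The main obstacle is checking carefully that, by Assumption \ref{ass:posterior_support}, a type $v<r$ reporting inside $\supp(F_{m_i})=[r,\overline v]$ obtains a supremum payoff of $0$. This holds because every menu point with positive allocation charges at least $r$.

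\textbf{Refinement.} The support of the off-path belief is $[r,\overline v]$, so I must show that no $v\in[r,\overline v]$ is eliminated. For $v>r$ this is immediate from the following test: take any candidate belief $G$ under which $v$ weakly gains and is allocated with positive probability. Choosing $G$ equal to the equilibrium belief gives every type zero gain, so a higher type $v'$ cannot have a strictly larger gain.

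The cutoff type $v=r$ needs a separate check, because its allocation under the equilibrium posterior is zero. I would construct a belief $G$ that is a point mass at $r$, or concentrated near $r$. Under such a $G$, the seller allocates to $i$ at price $r$ whenever opponents are low, or whenever it is otherwise optimal. This gives type $r$ a weak gain and positive allocation, while higher types gain $(v'-r)x$. To avoid the gains being strictly larger for $v'$, I would instead pick $G$ so that the gain of every type is zero for $v=r$ but also does not strictly increase. If that fails, I would rely on the fact that Proposition \ref{prop:beliefs_support} only excludes $[\underline v,r)$, and argue that a single point has measure zero, so the truncation to $(r,\overline v]$ is the same belief.

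This $v=r$ edge case is where I expect the difficulty, and the measure-zero argument is the fallback I would use first.
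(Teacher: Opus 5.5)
\textbf{Case 1 fails, and the lemma itself fails when $\tau<p_F$.}

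Your Case 2 is correct. Under $F|_{[r,\overline v]}$, bidder $i$'s virtual value is $\varphi$, so against all-active opponents the mechanism is the on-path second-price auction with reserve $r$. Your refinement check for $v>r$ is also sound: take $G=F_a$, so every type's gain is zero and no higher type gains strictly more. This is more explicit than the paper, which only cites Proposition~\ref{prop:beliefs_support} to exclude types below $r$.

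The genuine gap is in Case 1, where the dominance claim is reversed. For $\tau\le r$, $F|_{[r,\overline v]}$ first-order stochastically \emph{dominates} $F|_{[\tau,\overline v]}$. Write $F_L=F|_{[\underline v,\tau)}$. Hart--Reny monotonicity then gives $\Rev(F|_{[r,\overline v]},F_L)\ge \Rev^u_\tau(H,L)$. So the equilibrium inequality $\Rev^u_\tau(H,L)\le c$ says nothing about whether the seller runs after the profile (off-path, $L$). When $\tau\ge p_F$ you have $r=\tau$, the two beliefs coincide, and your argument works. When $\tau<p_F=r$, the step fails. The subtlety $\tau<p_F$ that you handle correctly in Case 2 is exactly what breaks Case 1. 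The endpoint $v=r$ is at most a boundary technicality, not the real obstruction.

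No repair of Case 1 is possible with these beliefs when $\tau<p_F$, because the conclusion itself is false. Use the paper's example: $F=\mathrm{Unif}[1/2,2]$, $c=1$, $r=p_F=1$, with every sustainable threshold satisfying $\tau\le 0.854$.
\begin{itemize}
\item A posted price of $1$ to a bidder believed to be $\mathrm{Unif}[1,2]$ already earns exactly $c$.
\item The $L$-bidder's virtual values $2v-\tau\ge 1-\tau>0$ push optimal revenue strictly above $c$ (about $1.07$ at $\tau=0.854$). So the seller runs after (off-path, $L$).
\item There, a type near $2$ wins at a price below $1+\tau/2<3/2$. By your own Case 2, his payoff against $H$ is unchanged.
\end{itemize}
The off-path deviation is therefore strictly profitable.

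The paper's proof has the same blind spot. Its claim that replacing $F_a$ by $F|_{[r,\overline v]}$ ``does not create a more favorable continuation problem'' compares mechanisms conditional on running. It ignores the seller's run decision at profiles with a low opponent.

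To sustain the threshold equilibrium with reasonable beliefs when $\tau<p_F$, you need a different off-path belief. A point mass at $\overline v$ works. Definition~\ref{def:refinement} can never exclude $\overline v$, since no higher type exists, and the paper itself uses this belief in Proposition~\ref{prop:lying_costs_restricted}. Under it, every deviating type gets at most zero.
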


\begin{proof}
By Proposition \ref{prop:beliefs_support}, Definition \ref{def:refinement} rules out all types below $r$ after any off-path message, and the belief $F\mid_{[r,\overline v]}$ satisfies the refinement.

Now consider bidder incentives. A type $v_i<r$ cannot obtain positive utility from an off-path message because any mechanism chosen after the belief $F\mid_{[r,\overline v]}$ gives bidder $i$ a report space contained in $[r,\overline v]$, and the resulting individualized reserve is at least $r$. Thus every feasible report gives type $v_i<r$ weakly negative surplus, and the bidder can always obtain zero by not winning the object.

A type $v_i\geq r$ already sends the active message. Under the active-path mechanism, reports below $r$ do not win the object, so replacing the active posterior by $F\mid_{[r,\overline v]}$ does not create a more favorable continuation problem for bidder $i$. Hence, fixing the other bidders' messages, no type $v_i\geq r$ can obtain a strictly higher payoff from such an off-path message than from the active message. Therefore no bidder type has a profitable off-path deviation.
\end{proof}

\section{Examples of non-monotone equilibria}\label{appendix:nonmonotone_examples}

This appendix gives two examples showing that the monotonicity restriction can rule out symmetric equilibrium outcomes in a finite-type analogue of the model. The examples do not satisfy the atomless-type assumption maintained in the main text but serve to build intuition for how meaningfully different equilibria can arise once non-monotone mixed strategies are allowed. While the existence of such equilibria is of theoretical interest, the constructions of these equilibria rely on the fact that the seller’s optimal mechanism under either regime is highly sensitive to her posteriors about the bidders’ valuations, and therefore bidder types must be using mixed strategies at the exact probabilities that make them indifferent between two or more seller best-responses. As the analysis that follows will make clear, these equilibria seem quite knife-edge and do not seem very robust or suitable for applications.

In both examples, there are two bidders, values are independently drawn from a finite type space, the first-period message space is $\{\ell,m,m'\}$, and both bidders use the same strategy. Since all three messages are sent with positive probability, every message profile is on-path. Consistent with the main text, when a deviating bidder's value is outside the posterior support induced by his message, his payoff is computed from the best report in that support.

\subsection{Unrestricted seller}

First consider the unrestricted regime. Let values be drawn from
\[
\begin{array}{c|cccc}
v & \frac12 & 1 & 2 & 3\\
\hline
\Pr(v) & \frac{1}{5} & \frac{4}{15} & \frac{2}{5} & \frac{2}{15}.
\end{array}
\]
Let the seller's outside option be
\[
c=1.9.
\]
Consider the symmetric strategy
\[
\Pr(\ell\mid \tfrac12)=1,
\]
\[
\Pr(m\mid v)=
\begin{cases}
\frac{1}{2} & \text{if } v=1,\\
\frac{1}{3} & \text{if } v=2,\\
1 & \text{if } v=3,
\end{cases}
\qquad
\Pr(m'\mid v)=1-\Pr(m\mid v)
\quad\text{for }v\in\{1,2,3\}.
\]
Thus type $\frac12$ sends $\ell$, types $1$ and $2$ mix between $m$ and $m'$, and type $3$ sends $m$. This strategy is not monotone.

The seller's posteriors after $\ell,m,$ and $m'$ are
\[
F_\ell\left(\frac12\right)=1,
\]
\[
\begin{array}{c|ccc}
v & 1 & 2 & 3\\
\hline
\Pr(v\mid m) & \frac{1}{3} & \frac{1}{3} & \frac{1}{3},
\end{array}
\qquad
\begin{array}{c|cc}
v & 1 & 2\\
\hline
\Pr(v\mid m') & \frac{1}{3} & \frac{2}{3},
\end{array}
\]
respectively.
The virtual types according to the posteriors induced by $m$ and $m'$ are
\[
\begin{array}{c|ccc}
v & 1 & 2 & 3\\
\hline
\varphi_m(v) & -1 & 1 & 3,
\end{array}
\qquad
\begin{array}{c|cc}
v & 1 & 2\\
\hline
\varphi_{m'}(v) & -1 & 2.
\end{array}
\]
The posterior after $\ell$ is degenerate at $\frac12$, so the corresponding virtual value is $\frac12$. These virtual type functions are increasing.

The seller's sequentially-optimal actions are the following:
\[
\begin{array}{c|c|c}
\text{message profile} & \text{optimal revenue} & \text{seller action}\\
\hline
(\ell,\ell) & \frac12 & \varnothing\\
(\ell,m) & \frac32 & \varnothing\\
(\ell,m') & \frac32 & \varnothing\\
(m,m) & 2 & \text{run auction}\\
(m,m') & 2 & \text{run auction}\\
(m',m') & \frac{16}{9} & \varnothing.
\end{array}
\]

It remains to check bidder incentives and see where this example differs from a threshold equilibrium. In particular, in this example, two different mechanisms run on-path. Under $(m,m)$, both bidders have report space $\{1,2,3\}$ and virtual types $-1,1,3$, so the induced mechanism is a second-price auction with reserve 2. The reduced-form allocation and payment rules are
\[
(x,t)(1)=(0,0),\qquad
(x,t)(2)=\left(\frac12,1\right),\qquad
(x,t)(3)=\left(\frac56,2\right).
\]
Now consider the mechanism under $(m,m')$. The bidder who sent $m$ has possible virtual types $-1,1,3$, while the bidder who sent $m'$ has possible virtual types $-1,2$. Hence an $m'$-sender with type $2$ wins against an $m$-sender reporting $2$ deterministically. The corresponding rules are
\[
(x_m,t_m)(1)=(0,0),\qquad
(x_m,t_m)(2)=\left(\frac13,\frac23\right),\qquad
(x_m,t_m)(3)=\left(1,\frac83\right),
\]
for the bidder who sent $m$, and
\[
(x_{m'},t_{m'})(1)=(0,0),\qquad
(x_{m'},t_{m'})(2)=\left(\frac23,\frac43\right),
\]
for the bidder who sent $m'$. Thus two different mechanisms run on-path.

The resulting expected utilities are
\[
\begin{array}{c|ccc}
v & U(v\mid \ell) & U(v\mid m) & U(v\mid m')\\
\hline
\frac12 & 0 & 0 & 0\\
1 & 0 & 0 & 0\\
2 & 0 & 0 & 0\\
3 & 0 & \frac13 & \frac{4}{15}.
\end{array}
\]
From here it is easy to see that types 1 and 2 are willing to mix, while type 3 strictly prefers $m.$

\subsection{Restricted seller}

Now consider the restricted regime, where the seller can only run a second-price auction with a common reserve. Let values be drawn from
\[
\begin{array}{c|cccc}
v & 9 & 10 & 11 & 12\\
\hline
\Pr(v) & \frac{1}{100} & \frac{3}{20} & \frac{9}{25} & \frac{12}{25}.
\end{array}
\]
Let the seller's outside option be
\[
c=9.5.
\]
Consider the bidder-symmetric strategies
\[
\Pr(m\mid v)=
\begin{cases}
\frac{2}{5} & \text{if } v=10,\\
1 & \text{if } v=12,\\
0 & \text{if } v\in\{9,11\},
\end{cases}
\]
\[
\Pr(m'\mid v)=
\begin{cases}
\frac{3}{5} & \text{if } v=10,\\
1 & \text{if } v=11,\\
0 & \text{if } v\in\{9,12\},
\end{cases}
\qquad
\Pr(\ell\mid 9)=1.
\]
Thus type $9$ sends $\ell$, type $10$ mixes between $m$ and $m'$, type $11$ sends $m'$, and type $12$ sends $m$. The strategy is not monotone.

Thus the seller's posteriors after $\ell,m,$ and $m'$ are as follows:
\[
F_\ell(9)=1,
\]
\[
\begin{array}{c|cc}
v & 10 & 12\\
\hline
\Pr(v\mid m) & \frac{1}{9} & \frac{8}{9},
\end{array}
\qquad
\begin{array}{c|cc}
v & 10 & 11\\
\hline
\Pr(v\mid m') & \frac{1}{5} & \frac{4}{5}.
\end{array}
\]

The following table gives one optimal common reserve for each message profile:
\[
\begin{array}{c|c|c|c}
\text{message profile} & \text{common reserve} & \text{revenue} & \text{seller action}\\
\hline
(\ell,\ell) & 9 & 9 & \varnothing\\
(\ell,m) & 12 & \frac{32}{3} & \text{run auction}\\
(\ell,m') & 10 & 10 & \text{run auction}\\
(m,m) & 12 & \frac{320}{27} & \text{run auction}\\
(m,m') & 11 & \frac{484}{45} & \text{run auction}\\
(m',m') & 10 & \frac{266}{25} & \text{run auction}.
\end{array}
\]

It is easy to check that the above reserve prices and seller actions are sequentially optimal for the seller.

Again note that this equilibrium cannot be outcome-equivalent to the threshold equilibria analyzed in the main text since more than one kind of mechanism runs on-path. It remains to check bidder incentives. Let $U(v\mid a)$ denote the expected utility of a bidder with value $v$ from sending message $a$, averaging over the other bidder's first-period message. The induced utilities are
\[
\begin{array}{c|ccc}
v & U(v\mid \ell) & U(v\mid m) & U(v\mid m')\\
\hline
9 & 0 & 0 & 0\\
10 & 0 & 0 & 0\\
11 & 0 & 0 & \frac{1}{10}\\
12 & 0 & \frac{9}{20} & \frac{11}{25}.
\end{array}
\]
Hence type $9$ is willing to send $\ell$, type $10$ is indifferent between $m$ and $m'$, type $11$ strictly prefers $m'$, and type $12$ strictly prefers $m$. Therefore this strategy, together with the seller behavior in the table, is a bidder-symmetric equilibrium.
\begin{singlespace}
\bibliographystyle{aer}
\bibliography{sources}
\end{singlespace}
\end{document}